\numberwithin{equation}{section}
\newtheorem{theorem}{Theorem}[section]
\newtheorem{lemma}[theorem]{Lemma}
\newtheorem{proposition}[theorem]{Proposition}
\theoremstyle{remark}
\definecolor{darkgreen}{RGB}{0,180,0}
\colorlet[named]{green}{darkgreen}
\theoremstyle{definition}
\title{Stochastic symplectic ice}
\author{Chenyang Zhong \thanks{Department of Statistics, Stanford University}}
\date{\today}
\begin{document}
\maketitle

\newcommand{\caps}[2]{\begin{tikzpicture}
  \draw (0,-0.5) arc(-90:90:0.5);
  \filldraw[black] (0.5,0) circle (2pt);
  \node at (0,-0.5) [anchor=east] {$#2$};
  \node at (0,0.5) [anchor=east] {$#1$};
\end{tikzpicture}}

\newcommand{\capsC}[3]{\begin{tikzpicture}
   \draw (0,-0.5) to [out = 0, in = 180] (0.5,0);
  \draw (0,0.5) to [out = 0, in = 180] (0.5,0);
  \filldraw[black] (0.5,0) circle (2pt);
  \node at (0,-0.5) [anchor=east] {$#2$};
  \node at (0,0.5) [anchor=east] {$#1$};
  \node at (0.5,0)   [anchor=west] {$#3$};
\end{tikzpicture}}

\newcommand{\newcaps}[2]{\begin{tikzpicture}
  \draw (0,-0.5) to [out = 0, in = 180] (0.5,0);
  \draw (0,0.5) to [out = 0, in = 180] (0.5,0);
  \filldraw[black] (0.5,0) circle (2pt);
  \node at (0,-0.5) [anchor=east] {$#2$};
  \node at (0,0.5) [anchor=east] {$#1$};
\end{tikzpicture}}

\newcommand{\gammaicen}[4]{\begin{tikzpicture}
\coordinate (a) at (-.75, 0);
\coordinate (b) at (0, .75);
\coordinate (c) at (.75, 0);
\coordinate (d) at (0, -.75);
\coordinate (aa) at (-.75,.5);
\coordinate (cc) at (.75,.5);
\draw (a)--(c);
\draw (b)--(d);
\draw[fill=white] (a) circle (.25);
\draw[fill=white] (b) circle (.25);
\draw[fill=white] (c) circle (.25);
\draw[fill=white] (d) circle (.25);
\node at (0,1) { };
\node at (a) {$#1$};
\node at (b) {$#2$};
\node at (c) {$#3$};
\node at (d) {$#4$};
\path[fill=white] (0,0) circle (.2);
\node at (0,0) {$z_n$};
\end{tikzpicture}}

\newcommand{\gammaicei}[4]{\begin{tikzpicture}
\coordinate (a) at (-.75, 0);
\coordinate (b) at (0, .75);
\coordinate (c) at (.75, 0);
\coordinate (d) at (0, -.75);
\coordinate (aa) at (-.75,.5);
\coordinate (cc) at (.75,.5);
\draw (a)--(c);
\draw (b)--(d);
\draw[fill=white] (a) circle (.25);
\draw[fill=white] (b) circle (.25);
\draw[fill=white] (c) circle (.25);
\draw[fill=white] (d) circle (.25);
\node at (0,1) { };
\node at (a) {$#1$};
\node at (b) {$#2$};
\node at (c) {$#3$};
\node at (d) {$#4$};
\end{tikzpicture}}

\newcommand{\gammaice}[4]{\begin{tikzpicture}
\coordinate (a) at (-.75, 0);
\coordinate (b) at (0, .75);
\coordinate (c) at (.75, 0);
\coordinate (d) at (0, -.75);
\coordinate (aa) at (-.75,.5);
\coordinate (cc) at (.75,.5);
\draw (a)--(c);
\draw (b)--(d);
\draw[fill=white] (a) circle (.25);
\draw[fill=white] (b) circle (.25);
\draw[fill=white] (c) circle (.25);
\draw[fill=white] (d) circle (.25);
\node at (0,1) { };
\node at (a) {$#1$};
\node at (b) {$#2$};
\node at (c) {$#3$};
\node at (d) {$#4$};
\path[fill=white] (0,0) circle (.2);
\node at (0,0) {$z_i$};
\end{tikzpicture}}

\begin{abstract}
In this paper, we construct solvable ice models (six-vertex models) with stochastic weights and U-turn right boundary, which we term ``stochastic symplectic ice''. The models consist of alternating rows of two types of vertices. The probabilistic interpretation of the models leads to novel interacting particle systems where particles alternately jump to the right and then to the left. Two colored versions of the models and related stochastic dynamics are also introduced. Using the Yang-Baxter equations, we establish functional equations and recursive relations for the partition functions of these models. In particular, the recursive relations satisfied by the partition function of one of the colored models are closely related to Demazure-Lusztig operators of type C.
\end{abstract}

\section{Introduction}\label{Sect.1}

Since the pioneering investigation by Baxter (\cite{Bax2,Bax}), exactly solvable lattice models have found applications to various fields of mathematics and mathematical physics. Here, ``exactly solvable'' means that the Yang-Baxter equation, or ``star-triangle relation'', is satisfied by the system. We refer the reader to e.g. \cite{WZ,BBBG4,BSW,Kup,Kup2} for applications to algebraic combinatorics and to e.g. \cite{Bax3,BPZ,Di} for applications to quantum field theory.

Recently, there has been a series of works (see e.g. \cite{BBF,BBCFG,BBB,BBBG,BBBG2,BBBG3}) relating representation theory (for example, Tokuyama-type formulas and non-archimedean Whittaker functions) to exactly solvable lattice models. In the seminal work \cite{BBF}, a six-vertex model with free fermionic Boltzmann weights is introduced, whose partition function is shown to be equal to the product of a Schur polynomial and a deformation of Weyl's denominator. The culmination of this series of works is \cite{BBBG3}, where metaplectic Iwahori Whittaker functions are related to a supersymmetric solvable lattice model. These models are related to Cartan type A, and are based on the Yang-Baxter equation for free fermionic Boltzmann weights as introduced in \cite{BBF}. 

For Cartan type C, a parallel line of works has been initiated by Hamel and King (\cite{HK1,HK}) and Ivanov (\cite{Iva}). Their works show that the partition functions of certain six-vertex models equal the product of a deformation of Weyl's denominator and an irreducible character of the symplectic group $\mathrm{Sp}(2n,\mathbb{C})$. Ivanov's approach uses the Yang-Baxter equation as developed in \cite{BBF}. His lattice model consists of alternating rows of two types of ice (called $\Gamma$ ice and $\Delta$ ice) and has U-turn boundary on the right end. A different U-turn lattice model is related to Whittaker functions on the metaplectic double cover of $\mathrm{Sp}(2n,F)$, where $F$ is a non-archimedean local field (\cite{BBCG5}). Later work (\cite{Gra}) extends the results to metaplectic ice for Cartan type C. In \cite{BS3}, new deformations of Weyl's character formula for Cartan types B,C and D, and a character formula of Proctor for type BC, are obtained using the Yang-Baxter equation. A different approach based on a discrete time evolution operator on one-dimensional Fermionic Fock space is in \cite{BS2}. Further developments, including a dual version of the model in \cite{Iva}, generalizations of the models in \cite{Iva} and \cite{BBCG5}, and the $U_q(\mathfrak{sl}_2)$ six-vertex model with reflecting end, are in \cite{M1,M2,M4}. 

Another sequence of recent works, which comes from the subject of ``integrable probability'', relates stochastic systems, such as the asymmetric simple exclusion process (ASEP) and the Kardar-Parisi-Zhang equation, to certain solvable lattice models called the stochastic higher spin six-vertex models (see e.g. \cite{BCG,Bor,CP,Agg,OP,B4,BBCW} for some recent developments). Exactly solvable lattice models provide a powerful tool for analyzing probabilistic properties of these stochastic systems (for example, for proving Tracy-Widom type fluctuation results). The reader is also referred to \cite{BP} for a useful tutorial. The works on stochastic vertex models so far have been mainly restricted to models that are related to Cartan type A. 

The focus of this paper is to provide the first \underline{stochastic} model for Cartan type C, which we term ``stochastic symplectic ice''. This serves as an attempt to connect the above two sequences of works. Specifically, we construct two types of ice models (six-vertex models) with stochastic weights and U-turn boundary on the right end. The difference between the two types of models lies in the boundary condition at the U-turns (reflecting, or absorbing-and-emitting, see Section \ref{Sect.2.1} for details).   
The rows of the model alternate with two types of vertices, which we term ``stochastic $\Gamma$ vertex'' and ``stochastic $\Delta$ vertex'', in analogy to the terms used for Ivanov's symplectic ice model. The model shares some features with Ivanov's model, but the Boltzmann weights are quite different from previous works.

The stochastic symplectic ice model also leads to novel stochastic dynamics for interacting particle systems. Under the probabilistic interpretation, the particles alternately jump to the right and then jump to the left. The particles are reflected from the U-turn boundary, or absorbed into\slash emitted from the U-turn (depending on the type of the model). The partition function represents the probability of obtaining a particular particle configuration at the end of the evolution.

The models are \underline{solvable}, in that they satisfy the Yang-Baxter equation. In fact, four sets of Yang-Baxter equations are found for these models. Combining the Yang-Baxter equations with two further relations, the caduceus relation and the fish relation, we derive functional equations for both models (see Theorem \ref{MainT}).

A recent work by Borodin and Wheeler introduces a new attribute called ``color'' to stochastic vertex models (\cite{BW}). The colored stochastic vertex models are related to the quantum group $U_q(\widehat{\mathfrak{sl}}_{n+1})$, and degenerate to multi-species versions of interacting particle systems (such as multi-species ASEP). In \cite{BW}, recursive relations for the partition functions of these colored models are derived using the Yang-Baxter equation, and are related to Demazure-Lusztig operators of type A. For earlier developments related to colored vertex models, we refer the reader to \cite{CdW,BBW,FW}.

In this paper, we also construct two colored versions of the stochastic symplectic ice model. The colored models are also stochastic, and can be interpreted as stochastic dynamics of interacting particles with colors. In one of the colored models (which we introduce and study in Section \ref{Sect.3}), each particle carries a ``signed color'', whose sign changes to the opposite when reflected from the U-turn boundary. This seems to be a novel feature compared to previous colored stochastic vertex models. When specifying the boundary conditions, two signed permutations from the hyperoctahedral group (which is the Weyl group for type C) are involved. Previous works have mainly been focusing on boundary conditions specified by the symmetric group (which is the Weyl group for type A).

The colored models are also solvable, but only three sets of Yang-Baxter equations are found to be satisfied by the models. In order to derive recursive relations for the partition functions based on the three Yang-Baxter equations, we resort to a different relation, the reflection equation (see Theorems \ref{Refl} and \ref{Refl2N} below). The recursive relations for one of the colored models are further related to Demazure-Lusztig operators of type C.

We remark that following the ``fusion'' procedure as outlined in \cite[Appendix B]{BW}, higher spin versions of the models can be considered. We hope to investigate these higher spin models in the future. We also note that the ``height function'' as defined in \cite{BCG} can be similarly defined for the models here, and we leave the study of probabilistic properties of the models (similar to those in \cite[Theorems 1.1-1.2]{BCG}) to future works.

A few days before the first arXiv version of this paper was posted, a preprint by Buciumas and Scrimshaw (\cite{BS}) appeared on the arXiv. Their work constructed colored lattice models with partition functions representing symplectic and odd orthogonal Demazure characters and atoms. The work in this paper was done independently of and concurrently with their work. We also note that the colored models in this paper are quite different from theirs: the colored models in this paper are six-vertex models, while theirs are five-vertex models (in that the $b_1$ patterns for both $\Gamma$ ice and $\Delta$ ice have Boltzmann weight $0$ in their paper); the Boltzmann weights in this paper are also quite different from theirs (for example, in this paper, caps with negative color input only allow positive color output, while in their paper, such caps only allow negative color output; the weights in this paper are stochastic, while theirs are not); the recursive relations in this paper are related to Demazure-Lusztig operators of type C, while theirs are related to Demazure atoms and characters.

Section \ref{Sect.2} of the paper introduces the two types of the stochastic symplectic ice model. The Yang-Baxter equations, the caduceus relation and the fish relation are also given there. These relations are then used to derive the functional equations for the partition functions. Section \ref{Sect.3} introduces the first colored model for the stochastic symplectic ice. The Yang-Baxter equations and the reflection equation are also introduced in this section. Using these relations, the recursive relations for the partition functions are obtained. The recursive relations are further related to Demazure-Lusztig operators of type C. Finally we introduce and study the second colored model in Section \ref{Sect.4}.

\subsection{Acknowledgement}
The author wishes to thank Daniel Bump for his encouragement and many helpful conversations. The author thanks Valentin Buciumas, Nathan Gray, Slava Naprienko, and Travis Scrimshaw for their comments, and thanks Ben Brubaker for his help. The author also thanks Kohei Motegi for bringing the reference \cite{M4} to his attention.

\section{Stochastic symplectic ice}\label{Sect.2}

In this section, we introduce and study two classes of stochastic symplectic ice. They are termed ``reflecting stochastic symplectic ice'' and ``absorbing-and-emitting stochastic symplectic ice''. Section \ref{Sect.2.1} introduces the models and related Boltzmann weights. Section \ref{Sect.2.2} gives the Boltzmann weights for the R-matrices and shows the Yang-Baxter equation. An additional relation, the ``caduceus relation'', is shown in Section \ref{Sect.2.2.5}. By combining the Yang-Baxter equation with the caduceus relation and a further relation, the ``fish relation'', we establish functional equations satisfied by the partition functions in Section \ref{Sect.2.3}. We remark here that an explicit form of the partition function of the reflecting stochastic symplectic ice can also be computed based on the above-stated relations. We will present the details of the derivation in a subsequent work.

\subsection{The models}\label{Sect.2.1}

First we introduce some notations. By ``ice model'', we mean a planar lattice where every edge is assigned a $+$ or $-$ spin. To each vertex in the lattice, we assign a Boltzmann weight, which is a number that depends on the type of the vertex (there are two types of vertices for stochastic symplectic ice, see the next paragraph) and the $+$ or $-$ spins assigned to the four adjacent edges. A configuration\slash state means a labeling of the edges of the graph by $+$ or $-$ spins, and the Boltzmann weight of a configuration is the product of the Boltzmann weights of all the vertices for the configuration. An admissible state is a state where the assignment of spins to the edges adjacent to each vertex is one of the allowed assignments for that vertex (the allowed assignments are listed in tables later in the paper). The partition function of the ice model is the sum of Boltzmann weights for all admissible configurations.

In the stochastic symplectic ice model, two types of vertices are involved. They are termed ``stochastic $\Gamma$ vertex'' and ``stochastic $\Delta$ vertex'' in this paper, in analogy to the $\Gamma$ ice and the $\Delta$ ice used in Ivanov's symplectic ice model (see \cite{Iva}). The model depends on $n+1$ parameters $z_1,\cdots,z_n,q$, where $z_1,\cdots,z_n$ are called ``spectral parameters'' and $q$ is called the ``deformation parameter''. Throughout the paper, we also define
\begin{equation*}
    z_i'=q+1-\frac{1}{z_i},
\end{equation*}
for every $1\leq i\leq n$.

The Boltzmann weights for the stochastic $\Gamma$ vertex and the stochastic $\Delta$ vertex (with spectral parameter $z_i$) are listed in Figures \ref{Figure1}-\ref{Figure2}. Throughout the paper, an assignment of spins to the adjacent edges of a vertex that is not listed in the corresponding table has Boltzmann weight $0$. 

\begin{figure}
\[
\begin{array}{|c|c|c|c|c|c|}
\hline
  \tt{a}_1&\tt{a}_2&\tt{b}_1&\tt{b}_2&\tt{c}_1&\tt{c}_2\\
\hline
\gammaice{+}{+}{+}{+} &
  \gammaice{-}{-}{-}{-} &
  \gammaice{+}{-}{+}{-} &
  \gammaice{-}{+}{-}{+} &
  \gammaice{-}{+}{+}{-} &
  \gammaice{+}{-}{-}{+}\\
\hline
   1 & 1 & z_i & q z_i &1-q z_i & 1-z_i\\
\hline\end{array}\]
\caption{Boltzmann weights for stochastic $\Gamma$ vertex with spectral parameter $z_i$}
\label{Figure1}
\end{figure}

\begin{figure}
\[
\begin{array}{|c|c|c|c|c|c|}
\hline
  \tt{a}_1&\tt{a}_2&\tt{b}_1&\tt{b}_2&\tt{d}_1&\tt{d}_2\\
\hline
\gammaice{+}{+}{+}{+} &
  \gammaice{-}{-}{-}{-} &
  \gammaice{+}{-}{+}{-} &
  \gammaice{-}{+}{-}{+} &
  \gammaice{-}{-}{+}{+} &
  \gammaice{+}{+}{-}{-}\\
\hline
   1 & 1 & z_i' & \frac{1}{q} z_i' &  1-z_i' & 1-\frac{1}{q}z_i'\\
\hline\end{array}\]
\caption{Boltzmann weights for stochastic $\Delta$ vertex with spectral parameter $z_i$, where $z_i'=q+1-\frac{1}{z_i}$}
\label{Figure2}
\end{figure}

Now we introduce the stochastic symplectic ice model. We consider a rectangular lattice with $2n$ rows and $L$ columns. The rows are numbered $1,2,\cdots, 2n$ from bottom to top, and the columns are numbered $1,2,\cdots,L$ from right to left. Every odd-numbered row is a row of stochastic $\Delta$ vertices, and every even-numbered row is a row of stochastic $\Gamma$ vertices. The spectral parameter for the $i$th row of stochastic $\Gamma$ vertices and the $i$th row of stochastic $\Delta$ vertices is given by $z_i$.

The model also depends on a partition $\lambda=(\lambda_1,\cdots,\lambda_{n'})\in \mathbb{Z}^{n'}$, where $\lambda_1\geq \cdots \geq \lambda_{n'}$ and $n'\in\mathbb{N}_{+}$. We assume that $L\geq \lambda_1+n'$. The assignment of spins to boundary edges of the rectangular lattice is given as follows: on the left column, we assign $-$ to each row of stochastic $\Gamma$ vertex, and $+$ to each row of stochastic $\Delta$ vertex; on the top, we assign $+$ to each boundary edge; on the bottom, we assign $-$ to each column labeled $\lambda_i+n'+1-i$, for $1\leq i \leq n'$; on the right, the $i$th row of stochastic $\Gamma$ vertices and the $i$th row of stochastic $\Delta$ vertices are connected by a ``cap''.

For example, when $n=2$, $\lambda=(2,1)$, $L=4$, the model configuration is shown in Figure \ref{Configuration1}.

\begin{figure}[h]
\centering
\begin{tikzpicture}[scale=1]
\draw (0,1)--(5,1);
\draw (0,2)--(5,2);
\draw (0,3)--(5,3);
\draw (0,4)--(5,4);
\draw (1,0.5)--(1,4.5);
\draw (2,0.5)--(2,4.5);
\draw (3,0.5)--(3,4.5);
\draw (4,0.5)--(4,4.5);
\filldraw[black] (1,1) circle (1pt);
\filldraw[black] (2,1) circle (1pt);
\filldraw[black] (3,1) circle (1pt);
\filldraw[black] (4,1) circle (1pt);
\filldraw[black] (3,2) circle (1pt);
\filldraw[black] (2,2) circle (1pt);
\filldraw[black] (1,2) circle (1pt);
\filldraw[black] (4,2) circle (1pt);
\filldraw[black] (1,3) circle (1pt);
\filldraw[black] (2,3) circle (1pt);
\filldraw[black] (3,3) circle (1pt);
\filldraw[black] (4,3) circle (1pt);
\filldraw[black] (1,4) circle (1pt);
\filldraw[black] (2,4) circle (1pt);
\filldraw[black] (3,4) circle (1pt);
\filldraw[black] (4,4) circle (1pt);
\filldraw[black] (5.5,1.5) circle (1pt);
\filldraw[black] (5.5,3.5) circle (1pt);
\draw (5,1) arc(-90:90:0.5);
\draw (5,3) arc(-90:90:0.5);
\node at (0,1) [anchor=east] {$+$};
\node at (0,2) [anchor=east] {$-$};
\node at (0,3) [anchor=east] {$+$};
\node at (0,4) [anchor=east] {$-$};
\node at (-0.5,1) [anchor=east] {$1$};
\node at (-0.5,2) [anchor=east] {$2$};
\node at (-0.5,3) [anchor=east] {$3$};
\node at (-0.5,4) [anchor=east] {$4$};
\node at (-1,1) [anchor=east] {row};
\node at (0.5,1) [anchor=south] {$\Delta$};
\node at (0.5,2) [anchor=south] {$\Gamma$};
\node at (0.5,3) [anchor=south] {$\Delta$};
\node at (0.5,4) [anchor=south] {$\Gamma$};
\node at (4.5,1) [anchor=south] {$z_1$};
\node at (4.5,2) [anchor=south] {$z_1$};
\node at (4.5,3) [anchor=south] {$z_2$};
\node at (4.5,4) [anchor=south] {$z_2$};
\node at (1,4.5) [anchor=south] {$+$};
\node at (2,4.5) [anchor=south] {$+$};
\node at (3,4.5) [anchor=south] {$+$};
\node at (4,4.5) [anchor=south] {$+$};
\node at (1,5) [anchor=south] {$4$};
\node at (2,5) [anchor=south] {$3$};
\node at (3,5) [anchor=south] {$2$};
\node at (4,5) [anchor=south] {$1$};
\node at (0,5) [anchor=south] {column};
\node at (1,0.5) [anchor=north] {$-$};
\node at (2,0.5) [anchor=north] {$+$};
\node at (3,0.5) [anchor=north] {$-$};
\node at (4,0.5) [anchor=north] {$+$};
\end{tikzpicture}
\caption{Model configuration when $n=2,\lambda=(2,1),L=4$}
\label{Configuration1}
\end{figure}

We now discuss the Boltzmann weights for the caps. There are two choices of Boltzmann weights for the caps, which lead to two types of stochastic symplectic ice: ``reflecting stochastic symplectic ice'' and ``absorbing-and-emitting stochastic symplectic ice''. The Boltzmann weights of the caps for the two models are listed in Figures \ref{cap1}-\ref{cap2}, respectively. For reflecting symplectic ice, we always assume that $n'=n$ when taking the partition $\lambda$ for boundary conditions (by particle conservation).

\begin{figure}[h]
\[
\begin{array}{|c|c|c|c|c|c|}
\hline
\text{Cap} &\caps{+}{+} & \caps{-}{-} \\
\hline
\text{Boltzmann weight}  &  1 & 1 \\
\hline\end{array}\]
\caption{Boltzmann weights for caps: reflecting stochastic symplectic ice}
\label{cap1}
\end{figure}

\begin{figure}[h]
\[
\begin{array}{|c|c|c|c|c|c|}
\hline
\text{Cap} &\caps{+}{-} & \caps{-}{+} \\
\hline
\text{Boltzmann weight}  &  1 & 1 \\
\hline\end{array}\]
\caption{Boltzmann weights for caps: absorbing-and-emitting stochastic symplectic ice}
\label{cap2}
\end{figure}

Throughout the paper, we denote by $z=(z_1,z_2,\cdots,z_n)$ the vector formed by the $n$ spectral parameters. We also denote by $\mathcal{S}_{n,L,\lambda,z}$ the collection of admissible configurations of the reflecting stochastic symplectic ice with $2n$ rows, $L$ columns, spectral parameters $z_1,\cdots,z_n$ and bottom boundary condition given by $\lambda$. We also let $Z(\mathcal{S}_{n,L,\lambda,z})$ be the corresponding partition function. The collection of admissible configurations and the partition function of the absorbing-and-emitting stochastic symplectic ice are denoted by $\mathcal{T}_{n,L,\lambda,z}$ and $Z(\mathcal{T}_{n,L,\lambda,z})$, respectively.

We note that the Boltzmann weights of the models are \underline{stochastic}. For stochastic $\Gamma$ vertex, we view the left and top edges adjacent to the vertex as input, and the other two as output; for stochastic $\Delta$ vertex, we view the right and top edges adjacent to the vertex as input, and the other two as output; for caps, we view the top spin as input and the bottom spin as output. It can be seen that the possible Boltzmann weights for a given vertex (either the stochastic $\Gamma$ vertex, the stochastic $\Delta$ vertex or the cap) with given input sum up to $1$. Moreover, if $z$ satisfies the condition
\begin{equation}\label{Co1}
    \max\{0,\frac{1}{q+1}\} \leq  z_i\leq \min\{\frac{1}{q},1\}, \text{ for every }1 \leq i\leq n,
\end{equation}
then all possible Boltzmann weights are non-negative. Therefore, when the condition (\ref{Co1}) is satisfied, the Boltzmann weight of a given vertex can be interpreted as the probability of obtaining the output given the input at that vertex.

We also note that if the condition (\ref{Co1}) is satisfied by $z$, then both types of stochastic symplectic ice can be interpreted as an interacting particle system. We put $x-y$ coordinates on the rectangular lattice (see, for example, Figure \ref{Configuration1}) such that the $x$ coordinate of the $i$th row is $i$ and the $y$ coordinate of the $j$th column is $j$. For each $t=0,1,\cdots,2n$, we consider the set of vertical edges of the lattice that have a non-empty intersection with the line $x=2n-t+\frac{1}{2}$ and carry a $-$ spin. The positions of the particles at time $t$ are just the $y$ coordinates of these vertical edges. Therefore, an admissible state of the stochastic symplectic ice gives a possible evolution of the particles, and the Boltzmann weight for that state represents the probability of the particular evolution.

Now we describe the stochastic dynamics of the particles. For $t=0,1,\cdots,2n-1$, if $t$ is even, the particles attempt to jump to the right; if $t$ is odd, the particles attempt to jump to the left. 

The detailed rule is as follows. When $t$ is even, the particles are ordered from left to right. There is a new particle entering from the left boundary (we call it $0$th particle), which jumps to the right with geometric jump size (with parameter $q z_{n-\frac{t}{2}}$) unless it hits the $1$st particle; if the particle hits the $1$st particle, then it stops to move further. Starting from $l=1$, if the $l$th particle wasn't hit by any particle on its left, we flip a coin with head probability $z_{n-\frac{t}{2}}$ to determine whether it will stay at its current position or not; if the coin comes up tail, then the particle jumps to the right with geometric jump size (with parameter $q z_{n-\frac{t}{2}}$) unless it hits the $(l+1)$th particle; if the particle hits the $(l+1)$th particle, then it stops to move further. If the $l$th particle was hit by the $(l-1)$th particle, it jumps to the right by $1$ and the following move is the same as the previous case (except for the first step determining whether it will stay at the current position). Then the $(l+1)$th particle begins to move. If the rightmost particle moves beyond the first column (meaning that it hits the cap), for reflecting stochastic symplectic ice it is reflected by the cap (so it will start to move leftward from the first column at time $t+1$), while for absorbing-and-emitting stochastic symplectic ice it is absorbed. For absorbing-and-emitting stochastic symplectic ice, if no particle hits the cap, a new particle will be emitted from the first column at time $t+1$.

When $t$ is odd, the particles are ordered from right to left. If there is a particle reflected/emitted from the cap (we call it $0$th particle), it jumps to the left with geometric jump size (with parameter $\frac{1}{q}z_{n-\frac{t-1}{2}}'$) unless it hits the $1$st particle; if the particle hits the $1$st particle, then it stops to move further. Starting from $l=1$, if the $l$th particle wasn't hit by any particle on its right, we flip a coin with head probability $z_{n-\frac{t-1}{2}}'$ to determine whether it will stay at its current position or not; if the coin comes up tail, then the particle jumps to the left with geometric jump size (with parameter $\frac{1}{q}  z_{n-\frac{t-1}{2}}'$) unless it hits the $(l+1)$th particle; if the particle hits the $(l+1)$th particle, then it stops to move further. If the $l$th particle was hit by the $(l-1)$th particle, it jumps to the left by $1$ and the following move is the same as the previous case (except for the first step determining whether it will stay at the current position). Then the $(l+1)$th particle begins to move. 

Under this probabilistic interpretation, the partition functions $Z(\mathcal{S}_{n,L,\lambda,z})$ and $Z(\mathcal{T}_{n,L,\lambda,z})$ represent the probability that the particle configuration at time $t=2n$ corresponds to the partition $\lambda$ (meaning that the $i$th particle, ordered from left to right, has coordinate $\lambda_i+n'+1-i$ for $1\leq i\leq n'$) and no particle has ever moved left of the $L$th column.

\subsection{The R-matrix and the Yang-Baxter equation}\label{Sect.2.2}

The Yang-Baxter equation is a powerful tool for studying solvable lattice models. It involves two ordinary vertices (for our model, the stochastic $\Gamma$ vertex or the stochastic $\Delta$ vertex) and one additional rotated vertex called the R-vertex (also called the ``R-matrix'', due to connections to quantum group theory). 

The stochastic symplectic ice, as introduced in Section \ref{Sect.2.1}, is a solvable lattice model, in that we can find four types of R-matrices such that four sets of Yang-Baxter equations are satisfied by the model. In this section, we introduce the R-matrices for the stochastic symplectic ice, and show that the R-matrices together with the ordinary vertices (stochastic $\Gamma$ and $\Delta$ vertex) satisfy the Yang-Baxter equations, in the form of Theorem \ref{YBE1} below.

The four types of R-matrices are termed ``stochastic $\Gamma-\Gamma$ vertex'', ``stochastic $\Gamma-\Delta$ vertex'', ``stochastic $\Delta-\Delta$ vertex'' and ``stochastic $\Delta-\Gamma$ vertex'', in analogy to the terms used in \cite{Iva} for the symplectic ice. The Boltzmann weights for these R-matrices are given in Figures \ref{Rmatrix1}-\ref{Rmatrix4}.

\begin{figure}[h]
\[\begin{array}{|c|c|c|c|c|c|}
\hline
\begin{tikzpicture}[scale=0.7]
\draw (0,0) to [out = 0, in = 180] (2,2);
\draw (0,2) to [out = 0, in = 180] (2,0);
\draw[fill=white] (0,0) circle (.35);
\draw[fill=white] (0,2) circle (.35);
\draw[fill=white] (2,0) circle (.35);
\draw[fill=white] (2,2) circle (.35);
\node at (0,0) {$+$};
\node at (0,2) {$+$};
\node at (2,2) {$+$};
\node at (2,0) {$+$};
\node at (2,0) {$+$};
\path[fill=white] (1,1) circle (.3);
\node at (1,1) {$R_{z_i,z_j}$};
\end{tikzpicture}&
\begin{tikzpicture}[scale=0.7]
\draw (0,0) to [out = 0, in = 180] (2,2);
\draw (0,2) to [out = 0, in = 180] (2,0);
\draw[fill=white] (0,0) circle (.35);
\draw[fill=white] (0,2) circle (.35);
\draw[fill=white] (2,0) circle (.35);
\draw[fill=white] (2,2) circle (.35);
\node at (0,0) {$-$};
\node at (0,2) {$-$};
\node at (2,2) {$-$};
\node at (2,0) {$-$};
\path[fill=white] (1,1) circle (.3);
\node at (1,1) {$R_{z_i,z_j}$};
\end{tikzpicture}&
\begin{tikzpicture}[scale=0.7]
\draw (0,0) to [out = 0, in = 180] (2,2);
\draw (0,2) to [out = 0, in = 180] (2,0);
\draw[fill=white] (0,0) circle (.35);
\draw[fill=white] (0,2) circle (.35);
\draw[fill=white] (2,0) circle (.35);
\draw[fill=white] (2,2) circle (.35);
\node at (0,0) {$+$};
\node at (0,2) {$-$};
\node at (2,2) {$+$};
\node at (2,0) {$-$};
\path[fill=white] (1,1) circle (.3);
\node at (1,1) {$R_{z_i,z_j}$};
\end{tikzpicture}&
\begin{tikzpicture}[scale=0.7]
\draw (0,0) to [out = 0, in = 180] (2,2);
\draw (0,2) to [out = 0, in = 180] (2,0);
\draw[fill=white] (0,0) circle (.35);
\draw[fill=white] (0,2) circle (.35);
\draw[fill=white] (2,0) circle (.35);
\draw[fill=white] (2,2) circle (.35);
\node at (0,0) {$-$};
\node at (0,2) {$+$};
\node at (2,2) {$-$};
\node at (2,0) {$+$};
\path[fill=white] (1,1) circle (.3);
\node at (1,1) {$R_{z_i,z_j}$};
\end{tikzpicture}&
\begin{tikzpicture}[scale=0.7]
\draw (0,0) to [out = 0, in = 180] (2,2);
\draw (0,2) to [out = 0, in = 180] (2,0);
\draw[fill=white] (0,0) circle (.35);
\draw[fill=white] (0,2) circle (.35);
\draw[fill=white] (2,0) circle (.35);
\draw[fill=white] (2,2) circle (.35);
\node at (0,0) {$-$};
\node at (0,2) {$+$};
\node at (2,2) {$+$};
\node at (2,0) {$-$};
\path[fill=white] (1,1) circle (.3);
\node at (1,1) {$R_{z_i,z_j}$};
\end{tikzpicture}&
\begin{tikzpicture}[scale=0.7]
\draw (0,0) to [out = 0, in = 180] (2,2);
\draw (0,2) to [out = 0, in = 180] (2,0);
\draw[fill=white] (0,0) circle (.35);
\draw[fill=white] (0,2) circle (.35);
\draw[fill=white] (2,0) circle (.35);
\draw[fill=white] (2,2) circle (.35);
\node at (0,0) {$+$};
\node at (0,2) {$-$};
\node at (2,2) {$-$};
\node at (2,0) {$+$};
\path[fill=white] (1,1) circle (.3);
\node at (1,1) {$R_{z_i,z_j}$};
\end{tikzpicture}\\
\hline
1&1
&\frac{z_i-z_j}{1-(q+1)z_j+q z_i z_j}
&\frac{q(z_i-z_j)}{1-(q+1)z_j+q z_i z_j}
&\frac{(1-q z_i)(1-z_j)}{1-(q+1)z_j+q z_i z_j}
&\frac{(1-z_i)(1-q z_j)}{1-(q+1)z_j+q z_i z_j}\\
\hline
\end{array}\]
\caption{Boltzmann weights for stochastic $\Gamma-\Gamma$ vertex with spectral parameters $z_i$ and $z_j$}
\label{Rmatrix1}
\end{figure}

\begin{figure}[h]
\[\begin{array}{|c|c|c|c|c|c|}
\hline
\begin{tikzpicture}[scale=0.7]
\draw (0,0) to [out = 0, in = 180] (2,2);
\draw (0,2) to [out = 0, in = 180] (2,0);
\draw[fill=white] (0,0) circle (.35);
\draw[fill=white] (0,2) circle (.35);
\draw[fill=white] (2,0) circle (.35);
\draw[fill=white] (2,2) circle (.35);
\node at (0,0) {$+$};
\node at (0,2) {$+$};
\node at (2,2) {$+$};
\node at (2,0) {$+$};
\node at (2,0) {$+$};
\path[fill=white] (1,1) circle (.3);
\node at (1,1) {$R_{z_i,z_j}$};
\end{tikzpicture}&
\begin{tikzpicture}[scale=0.7]
\draw (0,0) to [out = 0, in = 180] (2,2);
\draw (0,2) to [out = 0, in = 180] (2,0);
\draw[fill=white] (0,0) circle (.35);
\draw[fill=white] (0,2) circle (.35);
\draw[fill=white] (2,0) circle (.35);
\draw[fill=white] (2,2) circle (.35);
\node at (0,0) {$-$};
\node at (0,2) {$-$};
\node at (2,2) {$-$};
\node at (2,0) {$-$};
\path[fill=white] (1,1) circle (.3);
\node at (1,1) {$R_{z_i,z_j}$};
\end{tikzpicture}&
\begin{tikzpicture}[scale=0.7]
\draw (0,0) to [out = 0, in = 180] (2,2);
\draw (0,2) to [out = 0, in = 180] (2,0);
\draw[fill=white] (0,0) circle (.35);
\draw[fill=white] (0,2) circle (.35);
\draw[fill=white] (2,0) circle (.35);
\draw[fill=white] (2,2) circle (.35);
\node at (0,0) {$+$};
\node at (0,2) {$-$};
\node at (2,2) {$+$};
\node at (2,0) {$-$};
\path[fill=white] (1,1) circle (.3);
\node at (1,1) {$R_{z_i,z_j}$};
\end{tikzpicture}&
\begin{tikzpicture}[scale=0.7]
\draw (0,0) to [out = 0, in = 180] (2,2);
\draw (0,2) to [out = 0, in = 180] (2,0);
\draw[fill=white] (0,0) circle (.35);
\draw[fill=white] (0,2) circle (.35);
\draw[fill=white] (2,0) circle (.35);
\draw[fill=white] (2,2) circle (.35);
\node at (0,0) {$-$};
\node at (0,2) {$+$};
\node at (2,2) {$-$};
\node at (2,0) {$+$};
\path[fill=white] (1,1) circle (.3);
\node at (1,1) {$R_{z_i,z_j}$};
\end{tikzpicture}&
\begin{tikzpicture}[scale=0.7]
\draw (0,0) to [out = 0, in = 180] (2,2);
\draw (0,2) to [out = 0, in = 180] (2,0);
\draw[fill=white] (0,0) circle (.35);
\draw[fill=white] (0,2) circle (.35);
\draw[fill=white] (2,0) circle (.35);
\draw[fill=white] (2,2) circle (.35);
\node at (0,0) {$-$};
\node at (0,2) {$-$};
\node at (2,2) {$+$};
\node at (2,0) {$+$};
\path[fill=white] (1,1) circle (.3);
\node at (1,1) {$R_{z_i,z_j}$};
\end{tikzpicture}&
\begin{tikzpicture}[scale=0.7]
\draw (0,0) to [out = 0, in = 180] (2,2);
\draw (0,2) to [out = 0, in = 180] (2,0);
\draw[fill=white] (0,0) circle (.35);
\draw[fill=white] (0,2) circle (.35);
\draw[fill=white] (2,0) circle (.35);
\draw[fill=white] (2,2) circle (.35);
\node at (0,0) {$+$};
\node at (0,2) {$+$};
\node at (2,2) {$-$};
\node at (2,0) {$-$};
\path[fill=white] (1,1) circle (.3);
\node at (1,1) {$R_{z_i,z_j}$};
\end{tikzpicture}\\
\hline
1&1
&\frac{z_i'+q z_j-(q+1)z_i'z_j}{1-z_i'z_j}
&\frac{q^{-1}z_i'+z_j-(1+q^{-1})z_i'z_j}{1-z_i'z_j}
&\frac{(1-z_i')(1-q z_j)}{1-z_i'z_j}
&\frac{(1-q^{-1}z_i')(1-z_j)}{1-z_i'z_j}\\
\hline
\end{array}\]
\caption{Boltzmann weights for stochastic $\Delta-\Gamma$ vertex with spectral parameters $z_i$ and $z_j$}
\label{Rmatrix2}
\end{figure}

\begin{figure}[h]
\[\begin{array}{|c|c|c|c|c|c|}
\hline
\begin{tikzpicture}[scale=0.7]
\draw (0,0) to [out = 0, in = 180] (2,2);
\draw (0,2) to [out = 0, in = 180] (2,0);
\draw[fill=white] (0,0) circle (.35);
\draw[fill=white] (0,2) circle (.35);
\draw[fill=white] (2,0) circle (.35);
\draw[fill=white] (2,2) circle (.35);
\node at (0,0) {$+$};
\node at (0,2) {$+$};
\node at (2,2) {$+$};
\node at (2,0) {$+$};
\node at (2,0) {$+$};
\path[fill=white] (1,1) circle (.3);
\node at (1,1) {$R_{z_i,z_j}$};
\end{tikzpicture}&
\begin{tikzpicture}[scale=0.7]
\draw (0,0) to [out = 0, in = 180] (2,2);
\draw (0,2) to [out = 0, in = 180] (2,0);
\draw[fill=white] (0,0) circle (.35);
\draw[fill=white] (0,2) circle (.35);
\draw[fill=white] (2,0) circle (.35);
\draw[fill=white] (2,2) circle (.35);
\node at (0,0) {$-$};
\node at (0,2) {$-$};
\node at (2,2) {$-$};
\node at (2,0) {$-$};
\path[fill=white] (1,1) circle (.3);
\node at (1,1) {$R_{z_i,z_j}$};
\end{tikzpicture}&
\begin{tikzpicture}[scale=0.7]
\draw (0,0) to [out = 0, in = 180] (2,2);
\draw (0,2) to [out = 0, in = 180] (2,0);
\draw[fill=white] (0,0) circle (.35);
\draw[fill=white] (0,2) circle (.35);
\draw[fill=white] (2,0) circle (.35);
\draw[fill=white] (2,2) circle (.35);
\node at (0,0) {$+$};
\node at (0,2) {$-$};
\node at (2,2) {$+$};
\node at (2,0) {$-$};
\path[fill=white] (1,1) circle (.3);
\node at (1,1) {$R_{z_i,z_j}$};
\end{tikzpicture}&
\begin{tikzpicture}[scale=0.7]
\draw (0,0) to [out = 0, in = 180] (2,2);
\draw (0,2) to [out = 0, in = 180] (2,0);
\draw[fill=white] (0,0) circle (.35);
\draw[fill=white] (0,2) circle (.35);
\draw[fill=white] (2,0) circle (.35);
\draw[fill=white] (2,2) circle (.35);
\node at (0,0) {$-$};
\node at (0,2) {$+$};
\node at (2,2) {$-$};
\node at (2,0) {$+$};
\path[fill=white] (1,1) circle (.3);
\node at (1,1) {$R_{z_i,z_j}$};
\end{tikzpicture}&
\begin{tikzpicture}[scale=0.7]
\draw (0,0) to [out = 0, in = 180] (2,2);
\draw (0,2) to [out = 0, in = 180] (2,0);
\draw[fill=white] (0,0) circle (.35);
\draw[fill=white] (0,2) circle (.35);
\draw[fill=white] (2,0) circle (.35);
\draw[fill=white] (2,2) circle (.35);
\node at (0,0) {$-$};
\node at (0,2) {$+$};
\node at (2,2) {$+$};
\node at (2,0) {$-$};
\path[fill=white] (1,1) circle (.3);
\node at (1,1) {$R_{z_i,z_j}$};
\end{tikzpicture}&
\begin{tikzpicture}[scale=0.7]
\draw (0,0) to [out = 0, in = 180] (2,2);
\draw (0,2) to [out = 0, in = 180] (2,0);
\draw[fill=white] (0,0) circle (.35);
\draw[fill=white] (0,2) circle (.35);
\draw[fill=white] (2,0) circle (.35);
\draw[fill=white] (2,2) circle (.35);
\node at (0,0) {$+$};
\node at (0,2) {$-$};
\node at (2,2) {$-$};
\node at (2,0) {$+$};
\path[fill=white] (1,1) circle (.3);
\node at (1,1) {$R_{z_i,z_j}$};
\end{tikzpicture}\\
\hline
1&1
&\frac{z_j'-z_i'}{q-(q+1)z_i'+z_i'z_j'}
&\frac{q(z_j'-z_i')}{q-(q+1)z_i'+z_i'z_j'}
&\frac{(1-z_i')(q-z_j')}{q-(q+1)z_i'+z_i'z_j'}
&\frac{(1-z_j')(q-z_i')}{q-(q+1)z_i'+z_i'z_j'}\\
\hline
\end{array}\]
\caption{Boltzmann weights for stochastic $\Delta-\Delta$ vertex with spectral parameters $z_i$ and $z_j$}
\label{Rmatrix3}
\end{figure}

\begin{figure}[h]
\[\begin{array}{|c|c|c|c|c|c|}
\hline
\begin{tikzpicture}[scale=0.7]
\draw (0,0) to [out = 0, in = 180] (2,2);
\draw (0,2) to [out = 0, in = 180] (2,0);
\draw[fill=white] (0,0) circle (.35);
\draw[fill=white] (0,2) circle (.35);
\draw[fill=white] (2,0) circle (.35);
\draw[fill=white] (2,2) circle (.35);
\node at (0,0) {$+$};
\node at (0,2) {$+$};
\node at (2,2) {$+$};
\node at (2,0) {$+$};
\node at (2,0) {$+$};
\path[fill=white] (1,1) circle (.3);
\node at (1,1) {$R_{z_i,z_j}$};
\end{tikzpicture}&
\begin{tikzpicture}[scale=0.7]
\draw (0,0) to [out = 0, in = 180] (2,2);
\draw (0,2) to [out = 0, in = 180] (2,0);
\draw[fill=white] (0,0) circle (.35);
\draw[fill=white] (0,2) circle (.35);
\draw[fill=white] (2,0) circle (.35);
\draw[fill=white] (2,2) circle (.35);
\node at (0,0) {$-$};
\node at (0,2) {$-$};
\node at (2,2) {$-$};
\node at (2,0) {$-$};
\path[fill=white] (1,1) circle (.3);
\node at (1,1) {$R_{z_i,z_j}$};
\end{tikzpicture}&
\begin{tikzpicture}[scale=0.7]
\draw (0,0) to [out = 0, in = 180] (2,2);
\draw (0,2) to [out = 0, in = 180] (2,0);
\draw[fill=white] (0,0) circle (.35);
\draw[fill=white] (0,2) circle (.35);
\draw[fill=white] (2,0) circle (.35);
\draw[fill=white] (2,2) circle (.35);
\node at (0,0) {$+$};
\node at (0,2) {$-$};
\node at (2,2) {$+$};
\node at (2,0) {$-$};
\path[fill=white] (1,1) circle (.3);
\node at (1,1) {$R_{z_i,z_j}$};
\end{tikzpicture}&
\begin{tikzpicture}[scale=0.7]
\draw (0,0) to [out = 0, in = 180] (2,2);
\draw (0,2) to [out = 0, in = 180] (2,0);
\draw[fill=white] (0,0) circle (.35);
\draw[fill=white] (0,2) circle (.35);
\draw[fill=white] (2,0) circle (.35);
\draw[fill=white] (2,2) circle (.35);
\node at (0,0) {$-$};
\node at (0,2) {$+$};
\node at (2,2) {$-$};
\node at (2,0) {$+$};
\path[fill=white] (1,1) circle (.3);
\node at (1,1) {$R_{z_i,z_j}$};
\end{tikzpicture}&
\begin{tikzpicture}[scale=0.7]
\draw (0,0) to [out = 0, in = 180] (2,2);
\draw (0,2) to [out = 0, in = 180] (2,0);
\draw[fill=white] (0,0) circle (.35);
\draw[fill=white] (0,2) circle (.35);
\draw[fill=white] (2,0) circle (.35);
\draw[fill=white] (2,2) circle (.35);
\node at (0,0) {$-$};
\node at (0,2) {$-$};
\node at (2,2) {$+$};
\node at (2,0) {$+$};
\path[fill=white] (1,1) circle (.3);
\node at (1,1) {$R_{z_i,z_j}$};
\end{tikzpicture}&
\begin{tikzpicture}[scale=0.7]
\draw (0,0) to [out = 0, in = 180] (2,2);
\draw (0,2) to [out = 0, in = 180] (2,0);
\draw[fill=white] (0,0) circle (.35);
\draw[fill=white] (0,2) circle (.35);
\draw[fill=white] (2,0) circle (.35);
\draw[fill=white] (2,2) circle (.35);
\node at (0,0) {$+$};
\node at (0,2) {$+$};
\node at (2,2) {$-$};
\node at (2,0) {$-$};
\path[fill=white] (1,1) circle (.3);
\node at (1,1) {$R_{z_i,z_j}$};
\end{tikzpicture}\\
\hline
1&1
&\frac{qz_i+z_j'-(1+q)}{z_iz_j'-1}
&\frac{qz_i+z_j'-(1+q)}{q(z_iz_j'-1)}
&\frac{(1-qz_i)(1-z_j')}{z_iz_j'-1}
&\frac{(1-z_i)(q-z_j')}{q(z_iz_j'-1)}\\
\hline
\end{array}\]
\caption{Boltzmann weights for stochastic $\Gamma-\Delta$ vertex with spectral parameters $z_i$ and $z_j$}
\label{Rmatrix4}
\end{figure}

The following theorem gives the four sets of Yang-Baxter equations for the stochastic symplectic ice.

\begin{theorem}\label{YBE1}
For any $X,Y\in\{\Gamma,\Delta\}$ the following holds. Assume that $S$ is stochastic $X$ vertex with spectral parameter $z_i$, $T$ is stochastic $Y$ vertex with spectral parameter $z_j$, and $R$ is stochastic $X-Y$ vertex with spectral parameters $z_i,z_j$. Then the partition functions of the following two configurations are equal for any fixed combination of spins $a,b,c,d,e,f$.
\begin{equation}
\hfill
\begin{tikzpicture}[baseline=(current bounding box.center)]
  \draw (0,1) to [out = 0, in = 180] (2,3) to (4,3);
  \draw (0,3) to [out = 0, in = 180] (2,1) to (4,1);
  \draw (3,0) to (3,4);
  \draw[fill=white] (0,1) circle (.3);
  \draw[fill=white] (0,3) circle (.3);
  \draw[fill=white] (3,4) circle (.3);
  \draw[fill=white] (4,3) circle (.3);
  \draw[fill=white] (4,1) circle (.3);
  \draw[fill=white] (3,0) circle (.3);
  \draw[fill=white] (2,3) circle (.3);
  \draw[fill=white] (2,1) circle (.3);
  \draw[fill=white] (3,2) circle (.3);
  \node at (0,1) {$a$};
  \node at (0,3) {$b$};
  \node at (3,4) {$c$};
  \node at (4,3) {$d$};
  \node at (4,1) {$e$};
  \node at (3,0) {$f$};
  \node at (2,3) {$g$};
  \node at (3,2) {$h$};
  \node at (2,1) {$i$};
\filldraw[black] (3,3) circle (2pt);
\node at (3,3) [anchor=south west] {$S$};
\filldraw[black] (3,1) circle (2pt);
\node at (3,1) [anchor=north west] {$T$};
\filldraw[black] (1,2) circle (2pt);
\node at (1,2) [anchor=west] {$R$};
\end{tikzpicture}\qquad\qquad
\begin{tikzpicture}[baseline=(current bounding box.center)]
  \draw (0,1) to (2,1) to [out = 0, in = 180] (4,3);
  \draw (0,3) to (2,3) to [out = 0, in = 180] (4,1);
  \draw (1,0) to (1,4);
  \draw[fill=white] (0,1) circle (.3);
  \draw[fill=white] (0,3) circle (.3);
  \draw[fill=white] (1,4) circle (.3);
  \draw[fill=white] (4,3) circle (.3);
  \draw[fill=white] (4,1) circle (.3);
  \draw[fill=white] (1,0) circle (.3);
  \draw[fill=white] (2,3) circle (.3);
  \draw[fill=white] (1,2) circle (.3);
  \draw[fill=white] (2,1) circle (.3);
  \node at (0,1) {$a$};
  \node at (0,3) {$b$};
  \node at (1,4) {$c$};
  \node at (4,3) {$d$};
  \node at (4,1) {$e$};
  \node at (1,0) {$f$};
  \node at (2,3) {$j$};
  \node at (1,2) {$k$};
  \node at (2,1) {$l$};
\filldraw[black] (1,3) circle (2pt);
\node at (1,3) [anchor=south west] {$T$};
\filldraw[black] (1,1) circle (2pt);
\node at (1,1) [anchor=north west]{$S$};
\filldraw[black] (3,2) circle (2pt);
\node at (3,2) [anchor=west] {$R$};
\end{tikzpicture}
\end{equation}
\end{theorem}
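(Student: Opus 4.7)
The plan is to verify the Yang-Baxter equation by a direct case analysis on the external boundary spins $(a,b,c,d,e,f)\in\{+,-\}^6$, performed separately for each of the four choices of $(X,Y)\in\{\Gamma,\Delta\}^2$. A direct inspection of Figures \ref{Figure1}--\ref{Rmatrix4} shows that every vertex type appearing in the two hexagons obeys a local conservation law relating the spins on its four adjacent edges (for $\Gamma$ and $\Delta$ vertices and for each R-matrix, the multiset of spins on the incoming pair determines the parity of the multiset of spins on the outgoing pair, with the input/output convention depending on the vertex type). Propagating these conservation laws around the hexagons immediately rules out the vast majority of the $64$ candidate boundary sextuples, leaving a short list of nontrivial cases for each $(X,Y)$.

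For each surviving boundary sextuple, I would enumerate the admissible values of the interior edges $(g,h,i)$ on the left-hand side and $(j,k,l)$ on the right-hand side. In most cases, exactly one interior configuration is admissible on each side, so the identity to be checked reduces to equality of a single triple product of Boltzmann weights that can be read off directly from the tables. In the remaining ``mixed'' cases, two admissible interior configurations contribute on one or both sides, and one obtains a nontrivial rational identity in $z_i$, $z_j$, and $q$. For these, I would clear the common denominator appearing in the R-matrix weight from Figure \ref{Rmatrix1}, \ref{Rmatrix2}, \ref{Rmatrix3}, or \ref{Rmatrix4} (as appropriate), expand both sides, and verify the resulting polynomial identity.

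The main obstacle is not conceptual but computational: the total number of mixed cases, aggregated over the four choices of $(X,Y)$, is moderately large, and several of the identities are somewhat delicate because the R-matrix weights for the $\Gamma$--$\Delta$ and $\Delta$--$\Gamma$ vertices mix primed and unprimed spectral parameters through the substitution $z_i'=q+1-\tfrac{1}{z_i}$. Indeed, the specific form of the R-matrix weights in Figures \ref{Rmatrix1}--\ref{Rmatrix4} was chosen precisely so that these identities hold; once the mixed cases are organized according to the conservation laws, each polynomial equality follows from straightforward algebraic manipulation (most efficiently carried out symbolically), and no new idea beyond this bookkeeping is required.
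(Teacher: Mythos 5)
Your proposal is the same argument the paper gives: Theorem \ref{YBE1} is proved there by exhaustively checking all $2^6=64$ boundary spin combinations for each choice of $(X,Y)$ via a SAGE program, which is exactly the finite case verification (pruned by conservation laws and reduced to rational identities in $z_i,z_j,q$) that you describe. The approach is correct and essentially identical to the paper's.
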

\begin{proof}
There are in total $2^6=64$ possible combinations of the boundary spins $a,b,c,d,e,f$. These identities are checked using a SAGE program.
\end{proof}

\subsection{The caduceus relation}\label{Sect.2.2.5}

In addition to the Yang-Baxter equation, the stochastic symplectic ice also satisfies a further relation called the ``caduceus relation'', which plays an important role in deriving functional equations for the partition functions in Section \ref{Sect.2.3}. Namely, we have the following 

\begin{theorem}\label{caduceus}
Assume that $A$ is stochastic $\Gamma-\Gamma$ vertex, $B$ is stochastic $\Delta-\Delta$ vertex, $C$ is stochastic $\Delta-\Gamma$ vertex, and $D$ is stochastic $\Gamma-\Delta$ vertex. Also assume that the spectral parameters of the four vertices $A,B,C,D$ are all $z_i,z_j$. Denote by $Z(I_1(\epsilon_1,\epsilon_2,\epsilon_3,\epsilon_4))$ the partition function of the following configuration with fixed combination of spins $\epsilon_1,\epsilon_2,\epsilon_3,\epsilon_4$.
\begin{equation}
\label{eqn:caduceus1}
\hfill
I_1(\epsilon_1,\epsilon_2,\epsilon_3,\epsilon_4)=
\begin{tikzpicture}[baseline=(current bounding box.center)]
  \draw (0,0) to (0.6,0) to [out=0, in=180] (3,2);
  \draw (0,3) to (0.6,3) to [out=0, in=180] (3,1);
  \draw (0,1) to [out=0, in=180] (2.4,3) to (3,3) ;
  \draw (0,2) to [out=0, in=180] (2.4,0) to (3,0);
  \draw (3,2) arc(-90:90:0.5);
  \draw (3,0) arc(-90:90:0.5);
  \filldraw[black] (3.5,0.5) circle (2pt);
  \filldraw[black] (3.5,2.5) circle (2pt);
  \filldraw[black] (0.9,1.5) circle (2pt);
  \filldraw[black] (2.1,1.5) circle (2pt);
  \filldraw[black] (1.5,0.5) circle (2pt);
  \filldraw[black] (1.5,2.5) circle (2pt);
  \node at (0.9,1.5) [anchor=south] {$D$};
  \node at (2.1,1.5) [anchor=south] {$C$};
  \node at (1.5,0.5) [anchor=south] {$B$};
  \node at (1.5,2.5) [anchor=south] {$A$};
  \node at (0,0) [anchor=east] {$\epsilon_4$};
  \node at (0,1) [anchor=east] {$\epsilon_3$};
  \node at (0,3) [anchor=east] {$\epsilon_1$};
  \node at (0,2) [anchor=east] {$\epsilon_2$};
\end{tikzpicture}
\end{equation}
Also denote by $Z(I_2(\epsilon_1,\epsilon_2,\epsilon_3,\epsilon_4))$ the partition function of the following configuration with fixed combination of spins $\epsilon_1,\epsilon_2,\epsilon_3,\epsilon_4$.
\begin{equation}
\hfill
I_2(\epsilon_1,\epsilon_2,\epsilon_3,\epsilon_4)=
\begin{tikzpicture}[baseline=(current bounding box.center)]
  \draw (0,2) arc(-90:90:0.5);
  \draw (0,0) arc(-90:90:0.5);
  \filldraw[black] (0.5,0.5) circle (2pt);
  \filldraw[black] (0.5,2.5) circle (2pt);
  \node at (0,0) [anchor=east] {$\epsilon_4$};
  \node at (0,1) [anchor=east] {$\epsilon_3$};
  \node at (0,3) [anchor=east] {$\epsilon_1$};
  \node at (0,2) [anchor=east] {$\epsilon_2$};
\end{tikzpicture}
\end{equation}
Then for any fixed combination of spins $\epsilon_1,\epsilon_2,\epsilon_3,\epsilon_4$, and for both choices of cap weights given in Figures \ref{cap1}-\ref{cap2}, we have
\begin{equation}
    Z(I_1(\epsilon_1,\epsilon_2,\epsilon_3,\epsilon_4))=\frac{(qz_iz_j-1)(1-(q+1)(z_i+z_j)+(q^2+q+1)z_iz_j)}{q(z_i+z_j-(q+1)z_iz_j)^2}Z(I_2(\epsilon_1,\epsilon_2,\epsilon_3,\epsilon_4)).
\end{equation}
\end{theorem}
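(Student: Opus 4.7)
The plan is to verify the identity case by case for each of the $2^4 = 16$ choices of boundary spins $(\epsilon_1, \epsilon_2, \epsilon_3, \epsilon_4)$. I would first observe that $I_2$ consists of two disconnected caps, so $Z(I_2)$ factors as a product of two cap weights. For reflecting caps (Figure \ref{cap1}) this product equals $1$ when $\epsilon_1 = \epsilon_2$ and $\epsilon_3 = \epsilon_4$ and vanishes otherwise; for absorbing-and-emitting caps (Figure \ref{cap2}) it equals $1$ when $\epsilon_1 \neq \epsilon_2$ and $\epsilon_3 \neq \epsilon_4$ and vanishes otherwise. In either convention, $Z(I_2)$ is supported on just $4$ of the $16$ boundary patterns, so the identity reduces to showing that $Z(I_1)$ vanishes on the $12$ remaining ``forbidden'' patterns and equals the scalar factor on each of the $4$ allowed ones.

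For each fixed boundary pattern I would label the four internal edges of $I_1$ joining the R-vertices $A, B, C, D$ by sign variables and sum the product of the four R-vertex weights (Figures \ref{Rmatrix1}-\ref{Rmatrix4}) and the two cap weights over all admissible labelings. The ice rules for the R-vertices---strict sign conservation at the $\Gamma$-$\Gamma$ and $\Delta$-$\Delta$ vertices, and the sign-flipping $\mathrm{d}$-type patterns at the mixed $\Delta$-$\Gamma$ and $\Gamma$-$\Delta$ vertices---restrict the sum to only a handful of nonzero contributions per boundary pattern. Collecting these and simplifying with the substitution $z_i' = q+1-\frac{1}{z_i}$, $z_j' = q+1-\frac{1}{z_j}$ should either yield $0$ on the forbidden patterns or recover the common scalar
\begin{equation*}
K = \frac{(qz_iz_j-1)\bigl(1-(q+1)(z_i+z_j)+(q^2+q+1)z_iz_j\bigr)}{q(z_i+z_j-(q+1)z_iz_j)^2}
\end{equation*}
on the allowed ones, which is exactly what the theorem asserts (since $Z(I_2) \in \{0,1\}$ throughout).

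The main obstacle will be algebraic bookkeeping rather than any conceptual insight: the four R-vertices contribute denominators $1-(q+1)z_j + q z_i z_j$, $1 - z_i' z_j$, $q - (q+1) z_i' + z_i' z_j'$, and $z_i z_j' - 1$, and recognising that after summing over the $2^4$ internal configurations the combined rational expression collapses to the single scalar $K$ for each allowed pattern (and to zero for every forbidden one) requires careful rationalisation. As with Theorem \ref{YBE1}, the safest route is to automate this finite check in a computer algebra system such as SAGE; this simultaneously confirms the identity under both choices of cap weights and for all $16$ boundary configurations.
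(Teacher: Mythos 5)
Your proposal is correct and follows essentially the same route as the paper: the author likewise reduces the caduceus relation to a finite verification over the $2^4=16$ boundary spin patterns and checks the resulting rational identities with a SAGE program. Your additional observation that $Z(I_2)\in\{0,1\}$ and is supported on only four boundary patterns (under either cap convention) is a harmless refinement of the same computation.
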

\begin{proof}
The are in total $2^4=16$ possible combinations of the boundary spins $\epsilon_1,\epsilon_2,\epsilon_3,\epsilon_4$. The identities are checked using a SAGE program.
\end{proof}

\subsection{Functional equations satisfied by the partition functions}\label{Sect.2.3}

In this section, we derive functional equations satisfied by the partition functions. The main result is the following

\begin{theorem}\label{MainT}
Let 
\begin{equation}
D_1(n,L,z)=\prod_{i=1}^n z_i^L \prod_{i=1}^n (1-(q+1)z_i+q z_i z_i'^{-1})
\end{equation}
and 
\begin{equation}
    D_2(n,L,z)=\prod_{i=1}^n z_i^L.
\end{equation}
Then $\frac{Z(\mathcal{S}_{n,L,\lambda,z})}{D_1(n,L,z)}$ and $\frac{Z(\mathcal{T}_{n,L,\lambda,z})}{D_2(n,L,z)}$ are invariant under any permutation of $z_1,\cdots,z_n$ and any interchange $z_i\leftrightarrow z_i'^{-1}$.
\end{theorem}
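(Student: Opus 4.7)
The hyperoctahedral group $W_n$ (Weyl group of type $C_n$) is generated by the simple reflections $s_i$ ($1 \le i \le n-1$), acting by $z_i \leftrightarrow z_{i+1}$, together with the sign reflection $s_0$, acting by $z_1 \leftrightarrow z_1'^{-1}$. It suffices to prove invariance of $Z(\mathcal{S}_{n,L,\lambda,z})/D_1$ and $Z(\mathcal{T}_{n,L,\lambda,z})/D_2$ under each such generator. Since $D_1$ and $D_2$ are manifestly symmetric polynomials in $z_1,\ldots,z_n$ (every factor depending on a single $z_i$), invariance under $s_i$ reduces to showing that the bare partition functions $Z(\mathcal{S}_{n,L,\lambda,z})$ and $Z(\mathcal{T}_{n,L,\lambda,z})$ are themselves symmetric in $z_i,z_{i+1}$.

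\textbf{Invariance under $s_i$: train argument.} My approach combines the caduceus and Yang-Baxter relations in the classical way. Starting from the ice model, insert the four R-vertices $A,B,C,D$ of Theorem \ref{caduceus} (all with spectral parameters $(z_i,z_{i+1})$) between the rows $2i-1,\ldots,2i+2$ and the two corresponding caps, in the braid pattern $I_1$. By Theorem \ref{caduceus}, the partition function of this enlarged configuration equals $c(z_i,z_{i+1})\cdot Z(\mathcal{S}_{n,L,\lambda,z})$, where the scalar $c(z_i,z_{i+1})$ from Theorem \ref{caduceus} is visibly symmetric in its two arguments. Now use the Yang-Baxter equations of Theorem \ref{YBE1} to propagate the four R-vertices leftward through the ice, one column at a time. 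After $L$ applications per vertex, all four R-vertices emerge at the left boundary, while the partition function is unchanged by every step. On the left, the fixed boundary inputs ($-$ for $\Gamma$ rows, $+$ for $\Delta$ rows) are fed into the R-vertices; summing over the admissible internal spins and reinterpreting the result, this boundary evaluation gives $c(z_{i+1},z_i)\cdot Z(\mathcal{S}_{n,L,\lambda,s_iz})$, where the rows have now been relabelled by $s_i$. Cancelling the symmetric caduceus factor yields $Z(\mathcal{S}_{n,L,\lambda,z})=Z(\mathcal{S}_{n,L,\lambda,s_iz})$. The same reasoning applies to $Z(\mathcal{T}_{n,L,\lambda,z})$, because Theorem \ref{caduceus} holds for both choices of cap weights.

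\textbf{Invariance under $s_0$, and main obstacle.} The Yang-Baxter and caduceus identities alone are insensitive to the involution $z_i\leftrightarrow z_i'^{-1}$; here the ``fish relation'' alluded to at the start of Section \ref{Sect.2.3} enters. I expect the fish relation to be a local identity between a $\Delta$ vertex stacked below a $\Gamma$ vertex (both with common spectral parameter $z_i$) closed on the right by a cap, and the same configuration with $z_i$ replaced by $z_i'^{-1}$, up to a multiplicative scalar depending on $z_i,q$ and the cap type. Applied column by column along the row pair $(2i-1,2i)$, the accumulated scalar should equal exactly $D_1(n,L,z)/D_1(n,L,s_0z)$ in the reflecting case and $D_2(n,L,z)/D_2(n,L,s_0z)$ in the absorbing-and-emitting case. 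This motivates the particular form of the denominators: the per-column factor $z_i/z_i'^{-1}$ raised to the $L$th power produces the $z_i^L$ in both $D_1$ and $D_2$, while the extra reflecting-cap factor $1-(q+1)z_i+qz_iz_i'^{-1}$ in $D_1$ arises from a one-time boundary contribution at the U-turn. The main obstacle is to establish the fish relation in this precise form: whereas the permutation step is a clean Yang-Baxter/caduceus manipulation in which the symmetric caduceus factors cancel automatically, the sign-change step requires a finite but intricate local identity (verifiable in principle by SAGE, as with Theorems \ref{YBE1} and \ref{caduceus}) whose scalar must exactly reproduce the $z_i$-dependent part of $D_j$, including the non-obvious factor $1-(q+1)z_i+qz_iz_i'^{-1}$ characteristic of the reflecting geometry.
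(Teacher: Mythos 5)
Your reduction to the generators of the hyperoctahedral group is the paper's strategy, and your treatment of the transpositions $z_i\leftrightarrow z_{i+1}$ is essentially the paper's Proposition \ref{Transposition} run in the opposite direction: the paper attaches the frozen braid at the left boundary (computing its weight directly to be the caduceus scalar $L(z,q,i)$), pushes it rightward with the four Yang--Baxter equations, and collapses it at the U-turns with Theorem \ref{caduceus}; you insert it at the U-turns via Theorem \ref{caduceus} and push it leftward. The two are equivalent. The one step you assert rather than check is that the frozen braid at the left boundary contributes exactly the (symmetric) caduceus scalar again --- in the paper this is a separate explicit computation (the identity $Z(J_1)=L(z,q,i)Z(\mathcal{S}_{n,L,\lambda,z})$), not a consequence of Theorem \ref{caduceus}. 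That is a small but real verification you have skipped.

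The sign-change step, however, contains a genuine gap, and the mechanism you conjecture for it does not work. You propose a local ``fish relation'' comparing a single column (a $\Delta$ vertex stacked with a $\Gamma$ vertex at parameter $z_i$) with the same column at $z_i'^{-1}$, up to a scalar, to be telescoped along the row pair so that the per-column ratio accumulates to $z_i^L z_i'^L$. No such per-column proportionality can hold: the column whose external edges all carry $+$ has Boltzmann weight $1$ at both $z_i$ and $z_i'^{-1}$, so the only possible proportionality constant is $1$, which cannot produce the factor $(z_iz_i')^L$. In the paper the $L$-dependent factor does not arise column by column at all. The actual argument (Proposition \ref{Interchange}) first flips all spins in the $2n$th row and renormalizes its Boltzmann weights (contributing a global $(qz_n)^{-L}$), which converts the top $\Gamma$ row into a $\Delta$-type row; it then needs a \emph{new} Yang--Baxter equation (Lemma \ref{Lem1}, for two $\Delta$-type rows with an auxiliary R-matrix that is not among the four of Figures \ref{Rmatrix1}--\ref{Rmatrix4}) to carry a single frozen R-vertex from the left boundary to the U-turn; the fish relation proper (Proposition \ref{fish}) is then a one-time identity absorbing that R-vertex into the cap, and it is this step --- not a column count --- that produces the factor $\frac{1-(q+1)z_n+qz_nz_n'^{-1}}{1-(q+1)z_n'^{-1}+qz_nz_n'^{-1}}$ distinguishing $D_1$ from $D_2$; finally a second spin flip (contributing $(q/z_n')^L$) and a parity argument for an overall sign are needed to return to the original model. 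None of these ingredients --- the spin-flip renormalization, the auxiliary Yang--Baxter equation, the single-use cap identity, the sign bookkeeping --- appears in your proposal, and you correctly identify this as the main obstacle; as written, the invariance under $z_i\leftrightarrow z_i'^{-1}$ is not proved.
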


Theorem \ref{MainT} follows from Propositions \ref{Transposition}-\ref{Interchange} below. Proposition \ref{Transposition} gives a functional equation when $z_1,\cdots,z_n$ are permuted, and Proposition \ref{Interchange} gives another functional equation under the interchange $z_n \leftrightarrow  \frac{1}{z_n'}$. Note that $\frac{1}{z_i}+z_i'=q+1$ for every $1\leq i\leq n$.

\begin{proposition}\label{Transposition}
The partition functions of the two types of stochastic symplectic ice, namely, $Z(\mathcal{S}_{n,L,\lambda,z})$ and $Z(\mathcal{T}_{n,L,\lambda,z})$, are both invariant under any permutation of $z_1,\cdots,z_n$.
\end{proposition}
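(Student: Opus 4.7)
The plan is a Sklyanin-type argument showing that the ``double-row blocks'' of two rows plus a cap commute as operators, by combining the bulk Yang-Baxter equations of Theorem \ref{YBE1} with the boundary caduceus relation of Theorem \ref{caduceus}. Since the symmetric group $S_n$ is generated by the adjacent transpositions $s_i:z_i\leftrightarrow z_{i+1}$ for $1\leq i\leq n-1$, it suffices to establish invariance under each $s_i$. Fix such an $i$ and focus on rows $2i-1,2i,2i+1,2i+2$ (of types $\Delta,\Gamma,\Delta,\Gamma$ from bottom to top, carrying spectral parameters $z_i,z_i,z_{i+1},z_{i+1}$), together with the two caps attached to their right ends.

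The core of the argument is a ``train-track'' construction in three steps. First, insert the four-R-matrix braid of the $I_1$ configuration of Theorem \ref{caduceus} between these four rows and the two caps. By caduceus, this multiplies the partition function by
\[
C=\frac{(qz_iz_{i+1}-1)(1-(q+1)(z_i+z_{i+1})+(q^2+q+1)z_iz_{i+1})}{q(z_i+z_{i+1}-(q+1)z_iz_{i+1})^2},
\]
which is manifestly symmetric in $z_i\leftrightarrow z_{i+1}$. Denoting the modified lattice by $\mathcal{S}'$, we obtain $Z(\mathcal{S}')=C\cdot Z(\mathcal{S}_{n,L,\lambda,z})$. Second, use the four YBE identities of Theorem \ref{YBE1} iteratively to transport each of the four R-matrices of $I_1$ from right to left through all $L$ columns of the core lattice. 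After $4L$ YBE moves, the R-matrices sit in the same braided pattern at the far left of the core lattice, the right boundary again consists of two ordinary caps directly adjacent to the core lattice, and the interior is unchanged.

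Third, apply caduceus a second time, now on the left, where the dragged R-matrix braid interacts with the fixed left boundary spins $(+,-,+,-)$ in the same way that the original braid interacted with the right caps. By this mirror application, the dragged configuration equals $C\cdot Z(\mathcal{S}_{n,L,\lambda,s_iz})$, where the braid effectively implements the interchange $z_i\leftrightarrow z_{i+1}$ among the four rows via YBE. Combining the two equalities $C\cdot Z(\mathcal{S}_{n,L,\lambda,z})=Z(\mathcal{S}')=C\cdot Z(\mathcal{S}_{n,L,\lambda,s_iz})$ and cancelling $C$ yields the desired invariance. The same argument applies verbatim to $Z(\mathcal{T}_{n,L,\lambda,z})$, since Theorem \ref{caduceus} is stated for both types of caps.

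The main obstacle is the third step, the ``second application'' of caduceus on the left: rigorously showing that the dragged braid together with the specific fixed boundary $(+,-,+,-)$ contributes the scalar $C$ and identifies the output with the swapped lattice $\mathcal{S}_{n,L,\lambda,s_iz}$. Heuristically this is a mirror-symmetric form of Theorem \ref{caduceus}, analogous to the reflection equation in Sklyanin's construction of the open spin chain; but unlike the right-side application it is not stated directly in the paper up to this point, and must either be derived from the R-matrix symmetries or verified by a direct enumeration over the intermediate spin configurations (analogous to the SAGE-assisted proofs of Theorems \ref{YBE1} and \ref{caduceus}).
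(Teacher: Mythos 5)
Your proposal is the paper's train argument run in reverse: it uses exactly the same two ingredients (the four Yang--Baxter equations of Theorem \ref{YBE1} and the caduceus relation of Theorem \ref{caduceus}), only you introduce the four-vertex braid at the capped right boundary and drag it leftward, whereas the paper attaches the braid at the \emph{left} boundary, where its weight is a directly computable scalar, and drags it rightward so that the caduceus relation is invoked only once, at the caps. Your bookkeeping of the scalar $C=L(z,q,i)$, of the $4L$ Yang--Baxter moves, and of the induced swap $z_i\leftrightarrow z_{i+1}$ is correct, and the reduction to adjacent transpositions is the same as in the paper.

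The gap is precisely the step you flag as the main obstacle, and the problem is not only that you leave it unproven but that you misdiagnose what it requires. No ``mirror caduceus'' and no reflection-equation-type boundary identity is needed at the left end, because there is nothing to sum over there: with the fixed external spins $(+,-,+,-)$ feeding into the braid, the braid admits a \emph{unique} admissible configuration. Concretely, the two strands carrying $-$ and $+$ meet first at the $\Gamma$--$\Delta$ crossing, whose weight table (Figure \ref{Rmatrix4}) contains no spin-exchanging $c_1,c_2$ patterns, so the spins are forced to continue along their strands; the $\Gamma$--$\Gamma$ and $\Delta$--$\Delta$ crossings then receive equal-spin inputs $(-,-)$ and $(+,+)$, which are forced with weight $1$; and the last crossing, of type $\Delta$--$\Gamma$ (Figure \ref{Rmatrix2}), again has no $c$-type pattern and is forced. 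The weight of this single state is a product of four explicit R-matrix entries, which one checks equals the caduceus constant $L(z,q,i)$; this is exactly the computation the paper performs for $Z(J_1)$ at the start of its proof of Proposition \ref{Transposition}. Substituting this one-configuration computation for your hypothesized mirror relation closes your argument; as written, the appeal to an unstated boundary identity is a hole, and trying to derive it as a Sklyanin-type reflection equation would be effort spent in the wrong direction.
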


\begin{proposition}\label{Interchange}
Let $s_n z:=(z_1,\cdots,z_{n-1},\frac{1}{z_n'})$. Then we have
\begin{equation}
    Z(\mathcal{S}_{n,L,\lambda,s_nz})=(\frac{1}{z_n z_n'})^L\frac{1-(q+1)z_n'^{-1}+q z_n z_n'^{-1}}{1-(q+1)z_n+q z_n z_n'^{-1}} Z(\mathcal{S}_{n,L,\lambda,z}),
\end{equation}
\begin{equation}
    Z(\mathcal{T}_{n,L,\lambda,s_nz})=(\frac{1}{ z_n z_n'})^L  Z(\mathcal{T}_{n,L,\lambda,z}).
\end{equation}
\end{proposition}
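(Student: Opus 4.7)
The plan is to execute a ``train argument'' from the top right boundary. The interchange $z_n \leftrightarrow 1/z_n'$ affects only the top pair of rows (rows $2n-1$ and $2n$, both with spectral parameter $z_n$) together with the cap joining them; every other row of the lattice is untouched. The argument has three stages: (i) establish a \emph{fish relation} that provides a boundary analog of the Yang--Baxter equation at the cap, (ii) use it to insert an R-matrix adjacent to the top cap, and (iii) drag this R-matrix across the lattice using Theorem \ref{YBE1} and evaluate it on the frozen left boundary spins.

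The fish relation, which I would establish first, is the two-row analog of the caduceus relation of Theorem \ref{caduceus}. It should assert that attaching an appropriate $\Gamma$-$\Delta$ (or $\Delta$-$\Gamma$) R-matrix to a cap for two rows of parameter $z_n$ equals, up to an explicit scalar, a configuration in which the cap now has the interchanged parameter $1/z_n'$ and the R-matrix has migrated to the opposite side. The scalar should differ between the two cap types: for the reflecting cap it will come out to $\bigl(1-(q+1)/z_n'+q z_n/z_n'\bigr)/\bigl(1-(q+1)z_n+q z_n/z_n'\bigr)$, while for the absorbing-and-emitting cap it should equal $1$. This is precisely the source of the asymmetry between the two statements of the proposition. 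The relation is verified by an exhaustive case analysis of the $2^4=16$ boundary spin combinations against the weights in Figures \ref{Figure1}--\ref{Figure2} and Figure \ref{Rmatrix4}, parallel to the verification used for Theorem \ref{caduceus}.

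Given the fish relation, I begin with the lattice for $Z(\mathcal{S}_{n,L,\lambda,s_nz})$ or $Z(\mathcal{T}_{n,L,\lambda,s_nz})$ (where rows $2n-1, 2n$ carry parameter $1/z_n'$), apply the fish at the top cap to simultaneously change the cap parameter to $z_n$ and insert an R-matrix at the right end of rows $2n-1, 2n$, then apply the Yang--Baxter equation of Theorem \ref{YBE1} repeatedly---once per column, from right to left---to slide the R-matrix to the left boundary. Each application swaps the R-matrix past one $\Gamma$ vertex and one $\Delta$ vertex in that column; along the way, the effective parameters of these two vertices are transmuted from $1/z_n'$ into $z_n$. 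At the left boundary, the R-matrix abuts the frozen spins ($-$ for the $\Gamma$ row, $+$ for the $\Delta$ row), only one internal assignment is admissible, and the R-matrix collapses to an explicit scalar; the remaining interior then matches $Z(\mathcal{S}_{n,L,\lambda,z})$ or $Z(\mathcal{T}_{n,L,\lambda,z})$ exactly.

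The main obstacle is pinning down the correct form of the fish relation---in particular, correctly identifying the two cap-dependent scalars and verifying the $16$ cases. A secondary bookkeeping task is checking that the product of R-matrix weights picked up over the $L$ column swaps, combined with the single scalar from the left-boundary evaluation, multiplies out to exactly $(z_n z_n')^{-L}$; this should follow from the structure of the stochastic $\Gamma$-$\Delta$ R-matrix weights in Figure \ref{Rmatrix4}, but requires careful tracking of factors across the $L$ columns, where small errors cascade.
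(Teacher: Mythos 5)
Your high-level shape (a fish relation at the cap plus a train argument) matches the paper, but the mechanism you propose for the middle step does not exist, and this is a genuine gap. You claim that sliding the R-matrix column by column via Theorem \ref{YBE1} will ``transmute'' the spectral parameters of the top two rows from $\frac{1}{z_n'}$ to $z_n$. Theorem \ref{YBE1} cannot do this: in each of its four identities, $S$ and $T$ appear with the \emph{same} types and the \emph{same} spectral parameters on both sides; the equation only reorders the two crossing lines. There is no Yang--Baxter equation in Theorem \ref{YBE1} relating a $\Gamma$ row at $\frac{1}{z_n'}$ and a $\Delta$ row at $\frac{1}{z_n'}$ to a $\Gamma$ row at $z_n$ and a $\Delta$ row at $z_n$, so your train has nothing to ride on. The paper's actual device is a gauge transformation: flip the horizontal spins of the $2n$th ($\Gamma$) row and rescale its weights by $(qz_n)^{-1}$ per vertex, turning it into a $\Delta$-type row with effective parameter $t_1=\frac{1}{qz_n}$, while row $2n-1$ is a $\Delta$-type row with effective parameter $t_2=\frac{1}{q}z_n'$. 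Because the two rows now carry \emph{different} effective parameters, a \emph{new} Yang--Baxter equation (Lemma \ref{Lem1}, with the R-matrix of Figure \ref{Fig3} specialized in Figure \ref{Rm} --- not the $\Gamma$--$\Delta$ matrix of Figure \ref{Rmatrix4}) can genuinely exchange them; undoing the gauge transformation afterwards is precisely where the parameter $\frac{1}{z_n'}$ appears. Your fish relation would likewise have to be stated for this new R-matrix and the \emph{modified} cap weights of Figures \ref{F2}--\ref{F3}, not the original caps.

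A second, related error: you expect the factor $(z_nz_n')^{-L}$ to arise as ``the product of R-matrix weights picked up over the $L$ column swaps.'' The Yang--Baxter equation is an exact equality of partition functions and picks up no per-column factors. In the paper that prefactor is the product of the two gauge-transformation rescalings, $(qz_n)^{-L}\cdot\bigl(\tfrac{q}{z_n'}\bigr)^{L}=(z_nz_n')^{-L}$, together with a global sign coming from a parity count of $\tt{c}_1,\tt{c}_2,\tt{d}_1,\tt{d}_2$ patterns in the last two rows (odd for the reflecting cap, even for the absorbing-and-emitting one) --- that parity argument is also what cancels the minus sign in the reflecting fish scalar, which your guessed scalar omits. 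Since your argument contains no gauge transformation, it has no source for either the $(z_nz_n')^{-L}$ factor or the sign, and the bookkeeping you defer cannot be made to close.
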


The rest of this section is devoted to the proof of Propositions \ref{Transposition}-\ref{Interchange}. The proof of Proposition \ref{Transposition} is based on the Yang-Baxter equation and the caduceus relation. The proof of Proposition \ref{Interchange} is based on the Yang-Baxter equation and another relation called the ``fish relation'' (see Proposition \ref{fish} below).

\subsubsection{Proof of Proposition \ref{Transposition}}

Based on Theorems \ref{YBE1}-\ref{caduceus}, we give the proof of Proposition \ref{Transposition} as follows.

\begin{proof}[Proof of Proposition \ref{Transposition}]
We note that the symmetric group $S_n$ is generated by adjacent transpositions $(i,i+1)$ for $1\leq i\leq n-1$. Therefore it suffices to verify the invariance of the partition functions under the transposition of $z_i$ and $z_{i+1}$. We show the details below for $Z(\mathcal{S}_{n,L,\lambda,z})$. The argument for $Z(\mathcal{T}_{n,L,\lambda,z})$ is essentially the same.

We attach a braid to the left boundary of the rows $2i-1,2i,2i+1,2i+2$ of $\mathcal{S}_{n,L,\lambda,z}$, and obtain the following
\begin{equation}
    \begin{tikzpicture}[baseline=(current bounding box.center)]
  \draw (0,0) to (0.6,0) to [out=0, in=180] (3,2) to (4,2);
  \draw (0,3) to (0.6,3) to [out=0, in=180] (3,1) to (4,1);
  \draw (0,1) to [out=0, in=180] (2.4,3) to (3,3) to (4,3);
  \draw (0,2) to [out=0, in=180] (2.4,0) to (3,0) to (4,0);
  \draw [dashed] (4,2) to (5,2);
  \draw [dashed] (4,1) to (5,1);
  \draw [dashed] (4,3) to (5,3);
  \draw [dashed] (4,0) to (5,0);
  \draw (5,2) to (5.5,2);
  \draw (5,1) to (5.5,1);
  \draw (5,3) to (5.5,3);
  \draw (5,0) to (5.5,0);
  \draw (5.5,2) arc(-90:90:0.5);
  \draw (5.5,0) arc(-90:90:0.5);
  \draw (3,-0.5) to (3,3.5);
  \draw (4,-0.5) to (4,3.5);
  \draw (5,-0.5) to (5,3.5);
  \filldraw[black] (6,0.5) circle (2pt);
  \filldraw[black] (6,2.5) circle (2pt);
  \filldraw[black] (0.9,1.5) circle (2pt);
  \filldraw[black] (2.1,1.5) circle (2pt);
  \filldraw[black] (1.5,0.5) circle (2pt);
  \filldraw[black] (1.5,2.5) circle (2pt);
  \filldraw[black] (3,0) circle (2pt);
  \filldraw[black] (4,0) circle (2pt);
  \filldraw[black] (5,0) circle (2pt);
  \filldraw[black] (3,1) circle (2pt);
  \filldraw[black] (4,1) circle (2pt);
  \filldraw[black] (5,1) circle (2pt);
  \filldraw[black] (3,2) circle (2pt);
  \filldraw[black] (4,2) circle (2pt);
  \filldraw[black] (5,2) circle (2pt);
  \filldraw[black] (3,3) circle (2pt);
  \filldraw[black] (4,3) circle (2pt);
  \filldraw[black] (5,3) circle (2pt);
  \node at (0.9,1.5) [anchor=south] {$D$};
  \node at (2.1,1.5) [anchor=south] {$C$};
  \node at (1.5,0.5) [anchor=south] {$B$};
  \node at (1.5,2.5) [anchor=south] {$A$};
  \node at (0,0) [anchor=east] {$+$};
  \node at (0,1) [anchor=east] {$-$};
  \node at (0,2) [anchor=east] {$+$};
  \node at (0,3) [anchor=east] {$-$};
  \node at (5.5,0) [anchor=south] {$z_i$};
  \node at (5.5,1) [anchor=south] {$z_i$};
  \node at (5.5,2) [anchor=south] {$z_{i+1}$};
  \node at (5.5,3) [anchor=south] {$z_{i+1}$};
  \node at (2.7,0) [anchor=south] {$\Delta$};
  \node at (2.7,1) [anchor=south] {$\Gamma$};
  \node at (2.7,2) [anchor=south] {$\Delta$};
  \node at (2.7,3) [anchor=south] {$\Gamma$};
\end{tikzpicture}
\end{equation}
where we have omitted the other rows of $\mathcal{S}_{n,L,\lambda,z}$. We denote by $Z(J_1)$ the partition function of this new ice model.

Note that the only admissible configuration of the braid is given as follows: 
\begin{equation}
    \begin{tikzpicture}[baseline=(current bounding box.center)]
  \draw (0,0) to (0.6,0) to [out=0, in=180] (3,2) to (4,2);
  \draw (0,3) to (0.6,3) to [out=0, in=180] (3,1) to (4,1);
  \draw (0,1) to [out=0, in=180] (2.4,3) to (3,3) to (4,3);
  \draw (0,2) to [out=0, in=180] (2.4,0) to (3,0) to (4,0);
  \draw [dashed] (4,2) to (5,2);
  \draw [dashed] (4,1) to (5,1);
  \draw [dashed] (4,3) to (5,3);
  \draw [dashed] (4,0) to (5,0);
  \draw (5,2) to (5.5,2);
  \draw (5,1) to (5.5,1);
  \draw (5,3) to (5.5,3);
  \draw (5,0) to (5.5,0);
  \draw (5.5,2) arc(-90:90:0.5);
  \draw (5.5,0) arc(-90:90:0.5);
  \draw (3,-0.5) to (3,3.5);
  \draw (4,-0.5) to (4,3.5);
  \draw (5,-0.5) to (5,3.5);
  \filldraw[black] (6,0.5) circle (2pt);
  \filldraw[black] (6,2.5) circle (2pt);
  \filldraw[black] (0.9,1.5) circle (2pt);
  \filldraw[black] (2.1,1.5) circle (2pt);
  \filldraw[black] (1.5,0.5) circle (2pt);
  \filldraw[black] (1.5,2.5) circle (2pt);
  \filldraw[black] (3,0) circle (2pt);
  \filldraw[black] (4,0) circle (2pt);
  \filldraw[black] (5,0) circle (2pt);
  \filldraw[black] (3,1) circle (2pt);
  \filldraw[black] (4,1) circle (2pt);
  \filldraw[black] (5,1) circle (2pt);
  \filldraw[black] (3,2) circle (2pt);
  \filldraw[black] (4,2) circle (2pt);
  \filldraw[black] (5,2) circle (2pt);
  \filldraw[black] (3,3) circle (2pt);
  \filldraw[black] (4,3) circle (2pt);
  \filldraw[black] (5,3) circle (2pt);
  \node at (0.9,1.5) [anchor=south] {$D$};
  \node at (2.1,1.5) [anchor=south] {$C$};
  \node at (1.5,0.5) [anchor=south] {$B$};
  \node at (1.5,2.5) [anchor=south] {$A$};
  \node at (0,0) [anchor=east] {$+$};
  \node at (0,1) [anchor=east] {$-$};
  \node at (0,2) [anchor=east] {$+$};
  \node at (0,3) [anchor=east] {$-$};
  \node at (5.5,0) [anchor=south] {$z_i$};
  \node at (5.5,1) [anchor=south] {$z_i$};
  \node at (5.5,2) [anchor=south] {$z_{i+1}$};
  \node at (5.5,3) [anchor=south] {$z_{i+1}$};
  \node at (2.7,0) [anchor=south] {$\Delta$};
  \node at (2.7,1) [anchor=south] {$\Gamma$};
  \node at (2.7,2) [anchor=south] {$\Delta$};
  \node at (2.7,3) [anchor=south] {$\Gamma$};
  \node at (1.1,2.1) [anchor=north west] {$-$};
  \node at (1.9,2.1) [anchor=north east] {$-$};
  \node at (1.1,0.9) [anchor=south west] {$+$};
  \node at (1.9,0.9) [anchor=south east] {$+$};
  \node at (2.3,3) [anchor=south] {$-$};
  \node at (2.3,1.8) [anchor=south] {$+$};
  \node at (2.3,1) [anchor=south] {$-$};
  \node at (2.3,0) [anchor=south] {$+$};
\end{tikzpicture}
\end{equation}
Therefore $Z(J_1)$ is the product of the partition function of the braid and $Z(\mathcal{S}_{n,L,\lambda,z})$. Let
\begin{equation}
    L(z,q,i)=\frac{(qz_{i}z_{i+1}-1)(1-(q+1)(z_i+z_{i+1})+(q^2+q+1)z_iz_{i+1})}{q(z_i+z_{i+1}-(q+1)z_iz_{i+1})^2}.
\end{equation}
By computation, we obtain that
\begin{equation}\label{E13}
    Z(J_1)=L(z,q,i) Z(\mathcal{S}_{n,L,\lambda,z}).
\end{equation}

Now using the four sets of Yang-Baxter equations (Theorem \ref{YBE1}), we can move the four vertices (in the order of $C,A,B,D$) of the braid to the right without changing the partition function. Namely, if we denote by $Z(J_2)$ the partition function of the following
\begin{equation}
    \begin{tikzpicture}[baseline=(current bounding box.center)]
  \draw (2.5,0) to (3.1,0) to [out=0, in=180] (5.5,2);
  \draw (2.5,3) to (3.1,3) to [out=0, in=180] (5.5,1);
  \draw (2.5,1) to [out=0, in=180] (4.9,3) to (5.5,3);
  \draw (2.5,2) to [out=0, in=180] (4.9,0) to (5.5,0);
  \draw (-0.5,2) to (1,2);
  \draw (-0.5,1) to (1,1);
  \draw (-0.5,3) to (1,3);
  \draw (-0.5,0) to (1,0);
  \draw [dashed] (1,2) to (2,2);
  \draw [dashed] (1,1) to (2,1);
  \draw [dashed] (1,3) to (2,3);
  \draw [dashed] (1,0) to (2,0);
  \draw (2,2) to (2.5,2);
  \draw (2,1) to (2.5,1);
  \draw (2,3) to (2.5,3);
  \draw (2,0) to (2.5,0);
  \draw (5.5,2) arc(-90:90:0.5);
  \draw (5.5,0) arc(-90:90:0.5);
  \draw (0,-0.5) to (0,3.5);
  \draw (1,-0.5) to (1,3.5);
  \draw (2,-0.5) to (2,3.5);
  \filldraw[black] (6,0.5) circle (2pt);
  \filldraw[black] (6,2.5) circle (2pt);
  \filldraw[black] (3.35,1.5) circle (2pt);
  \filldraw[black] (4.6,1.5) circle (2pt);
  \filldraw[black] (4,0.5) circle (2pt);
  \filldraw[black] (4,2.5) circle (2pt);
  \filldraw[black] (0,0) circle (2pt);
  \filldraw[black] (1,0) circle (2pt);
  \filldraw[black] (2,0) circle (2pt);
  \filldraw[black] (0,1) circle (2pt);
  \filldraw[black] (1,1) circle (2pt);
  \filldraw[black] (2,1) circle (2pt);
  \filldraw[black] (0,2) circle (2pt);
  \filldraw[black] (1,2) circle (2pt);
  \filldraw[black] (2,2) circle (2pt);
  \filldraw[black] (0,3) circle (2pt);
  \filldraw[black] (1,3) circle (2pt);
  \filldraw[black] (2,3) circle (2pt);
  \node at (3.35,1.5) [anchor=south] {$D$};
  \node at (4.6,1.5) [anchor=south] {$C$};
  \node at (4,0.5) [anchor=south] {$B$};
  \node at (4,2.5) [anchor=south] {$A$};
  \node at (-0.5,0) [anchor=east] {$+$};
  \node at (-0.5,1) [anchor=east] {$-$};
  \node at (-0.5,2) [anchor=east] {$+$};
  \node at (-0.5,3) [anchor=east] {$-$};
  \node at (2.8,0) [anchor=south] {$z_{i+1}$};
  \node at (2.8,1) [anchor=south] {$z_{i+1}$};
  \node at (2.8,2) [anchor=south] {$z_{i}$};
  \node at (2.8,3) [anchor=south] {$z_{i}$};
  \node at (2.3,0) [anchor=south] {$\Delta$};
  \node at (2.3,1) [anchor=south] {$\Gamma$};
  \node at (2.3,2) [anchor=south] {$\Delta$};
  \node at (2.3,3) [anchor=south] {$\Gamma$};
\end{tikzpicture}
\end{equation}
then we have that 
\begin{equation}\label{E12}
    Z(J_1)=Z(J_2)
\end{equation}

Let $J_3(\epsilon_1,\epsilon_2,\epsilon_3,\epsilon_4)$ be given as follows, and recall the definition of $I_1(\epsilon_1,\epsilon_2,\epsilon_3,\epsilon_4)$ and $I_2(\epsilon_1,\epsilon_2,\epsilon_3,\epsilon_4)$ from the statement of Theorem \ref{caduceus} (taking the spectral parameters to be $z_{i+1},z_i$).
\begin{equation}
J_3(\epsilon_1,\epsilon_2,\epsilon_3,\epsilon_4)=
\begin{tikzpicture}[baseline=(current bounding box.center)]
  \draw (-1,2) to (1,2);
  \draw (-1,1) to (1,1);
  \draw (-1,3) to (1,3);
  \draw (-1,0) to (1,0);
  \draw [dashed] (1,2) to (2,2);
  \draw [dashed] (1,1) to (2,1);
  \draw [dashed] (1,3) to (2,3);
  \draw [dashed] (1,0) to (2,0);
  \draw (2,2) to (3,2);
  \draw (2,1) to (3,1);
  \draw (2,3) to (3,3);
  \draw (2,0) to (3,0);
  \draw (0,-0.5) to (0,3.5);
  \draw (1,-0.5) to (1,3.5);
  \draw (2,-0.5) to (2,3.5);
  \filldraw[black] (0,0) circle (2pt);
  \filldraw[black] (1,0) circle (2pt);
  \filldraw[black] (2,0) circle (2pt);
  \filldraw[black] (0,1) circle (2pt);
  \filldraw[black] (1,1) circle (2pt);
  \filldraw[black] (2,1) circle (2pt);
  \filldraw[black] (0,2) circle (2pt);
  \filldraw[black] (1,2) circle (2pt);
  \filldraw[black] (2,2) circle (2pt);
  \filldraw[black] (0,3) circle (2pt);
  \filldraw[black] (1,3) circle (2pt);
  \filldraw[black] (2,3) circle (2pt);
  \node at (-1,0) [anchor=east] {$+$};
  \node at (-1,1) [anchor=east] {$-$};
  \node at (-1,2) [anchor=east] {$+$};
  \node at (-1,3) [anchor=east] {$-$};
  \node at (2.8,0) [anchor=south] {$z_{i+1}$};
  \node at (2.8,1) [anchor=south] {$z_{i+1}$};
  \node at (2.8,2) [anchor=south] {$z_{i}$};
  \node at (2.8,3) [anchor=south] {$z_{i}$};
  \node at (2.3,0) [anchor=south] {$\Delta$};
  \node at (2.3,1) [anchor=south] {$\Gamma$};
  \node at (2.3,2) [anchor=south] {$\Delta$};
  \node at (2.3,3) [anchor=south] {$\Gamma$};
  \node at (3,0) [anchor=west] {$\epsilon_4$};
  \node at (3,1) [anchor=west] {$\epsilon_3$};
  \node at (3,2) [anchor=west] {$\epsilon_2$};
  \node at (3,3) [anchor=west] {$\epsilon_1$};
\end{tikzpicture}
\end{equation}

Now note that by Theorem \ref{caduceus},
\begin{eqnarray}\label{E11}
    Z(J_2)&=&\sum_{\epsilon_1,\epsilon_2,\epsilon_3,\epsilon_4\in\{-,+\}}Z(I_1(\epsilon_1,\epsilon_2,\epsilon_3,\epsilon_4))Z(J_3(\epsilon_1,\epsilon_2,\epsilon_3,\epsilon_4))\nonumber\\
    &=& L(z,q,i)\sum_{\epsilon_1,\epsilon_2,\epsilon_3,\epsilon_4\in\{-,+\}}Z(I_2(\epsilon_1,\epsilon_2,\epsilon_3,\epsilon_4))Z(J_3(\epsilon_1,\epsilon_2,\epsilon_3,\epsilon_4))\nonumber\\
    &=& L(z,q,i)Z(\mathcal{S}_{n,L,\lambda,s_iz}),
\end{eqnarray}
where $s_iz$ is the vector obtained by interchanging $z_i,z_{i+1}$ from $z$.

By combining (\ref{E13}),(\ref{E12}),(\ref{E11}), we obtain that
\begin{equation}
    Z(\mathcal{S}_{n,L,\lambda,z})=Z(\mathcal{S}_{n,L,\lambda,s_i z}),
\end{equation}
which finishes the proof.
\end{proof}

\subsubsection{Proof of Proposition \ref{Interchange}}

Before the proof of Proposition \ref{Interchange}, we make the following observation. As all the boundary edges on the top carry the $+$ spin, we conclude that only the three states in Figure \ref{F0} are involved in the $2n$th row. Now we simultaneously change the sign of the spins in the $2n$th row (interchanging $-$ and $+$ spins), change the Boltzmann weights of the vertices in the $2n$th row to those in Figure \ref{F1}, and change the Boltzmann weights for the cap connecting the last two rows to those in Figure \ref{F2} or \ref{F3} (depending on the type of the stochastic symplectic ice). For each admissible state, the Boltzmann weight of each vertex in the $2n$th row is now scaled by a factor of $\frac{1}{q z_n}$. Therefore the partition functions of the new system, denoted by $Z(\mathcal{S}'_{n,L,\lambda,z})$ and $Z(\mathcal{T}'_{n,L,\lambda,z})$ respectively, satisfy the following
\begin{equation}
    Z(\mathcal{S}'_{n,L,\lambda,z})=\frac{1}{(q z_n)^L}Z(\mathcal{S}_{n,L,\lambda,z}),
\end{equation}
\begin{equation}
    Z(\mathcal{T}'_{n,L,\lambda,z})=\frac{1}{(q z_n)^L}Z(\mathcal{T}_{n,L,\lambda,z}).
\end{equation}

\begin{figure}[h]
\[
\begin{array}{|c|c|c|c|c|c|}
\hline
\gammaicen{+}{+}{+}{+} &
\gammaicen{-}{+}{-}{+} &
\gammaicen{-}{+}{+}{-}\\
\hline
   1 & q z_n & 1-q z_n \\
\hline\end{array}\]
\caption{Boltzmann weights involved in the $2n$th row}
\label{F0}
\end{figure}

\begin{figure}[h]
\[
\begin{array}{|c|c|c|c|c|c|}
\hline
  \tt{a}_1&\tt{a}_2&\tt{b}_1&\tt{b}_2&\tt{d}_1&\tt{d}_2\\
\hline
\gammaicen{+}{+}{+}{+} &
  \gammaicen{-}{-}{-}{-} &
  \gammaicen{+}{-}{+}{-} &
  \gammaicen{-}{+}{-}{+} &
  \gammaicen{-}{-}{+}{+} &
  \gammaicen{+}{+}{-}{-}\\
\hline
   1 & 1 & \frac{1}{z_n} & \frac{1}{q z_n} &\frac{1}{z_n}-1 & \frac{1}{qz_n}-1\\
\hline\end{array}\]
\caption{New Boltzmann weights for the $2n$th row}
\label{F1}
\end{figure}

\begin{figure}[h]
\[
\begin{array}{|c|c|c|c|c|c|}
\hline
\text{New cap} &\newcaps{+}{-} & \newcaps{-}{+} \\
\hline
\text{Boltzmann weight}  &  1 & 1 \\
\hline\end{array}\]
\caption{New Boltzmann weights for the cap connecting the last two rows: reflecting stochastic symplectic ice}
\label{F2}
\end{figure}

\begin{figure}[h]
\[
\begin{array}{|c|c|c|c|c|c|}
\hline
\text{New cap} &\newcaps{+}{+} & \newcaps{-}{-} \\
\hline
\text{Boltzmann weight}  &  1 & 1 \\
\hline\end{array}\]
\caption{New Boltzmann weights for the cap connecting the last two rows: absorbing-and-emitting stochastic symplectic ice}
\label{F3}
\end{figure}

The following lemma gives a new set of Yang-Baxter equations, which will be used in the proof of Proposition \ref{Interchange}.

\begin{lemma}\label{Lem1}
Assume that $t_1,t_2\in \mathbb{C}$. Also assume that the Boltzmann weights of $S$ are given by Figure \ref{Fig1}, the Boltzmann weights of $T$ are given by Figure \ref{Fig2}, and the Boltzmann weights of $R$ are given by Figure \ref{Fig3}. Then the partition functions of the following two configurations are equal for any fixed combination of spins $a,b,c,d,e,f$.
\begin{figure}[h]
\[
\begin{array}{|c|c|c|c|c|c|}
\hline
  \tt{a}_1&\tt{a}_2&\tt{b}_1&\tt{b}_2&\tt{d}_1&\tt{d}_2\\
\hline
\gammaicei{+}{+}{+}{+} &
  \gammaicei{-}{-}{-}{-} &
  \gammaicei{+}{-}{+}{-} &
  \gammaicei{-}{+}{-}{+} &
  \gammaicei{-}{-}{+}{+} &
  \gammaicei{+}{+}{-}{-}\\
\hline
   1 & 1 & q t_1 & t_1 & q t_1-1 & t_1-1\\
\hline\end{array}\]
\caption{Boltzmann weights for $S$: Lemma \ref{Lem1}}
\label{Fig1}

\end{figure}
\begin{figure}[h]
\[
\begin{array}{|c|c|c|c|c|c|}
\hline
  \tt{a}_1&\tt{a}_2&\tt{b}_1&\tt{b}_2&\tt{d}_1&\tt{d}_2\\
\hline
\gammaicei{+}{+}{+}{+} &
  \gammaicei{-}{-}{-}{-} &
  \gammaicei{+}{-}{+}{-} &
  \gammaicei{-}{+}{-}{+} &
  \gammaicei{-}{-}{+}{+} &
  \gammaicei{+}{+}{-}{-}\\
\hline
   1 & 1 & q t_2 & t_2 & 1-q t_2 & 1-t_2\\
\hline\end{array}\]
\caption{Boltzmann weights for $T$: Lemma \ref{Lem1}}
\label{Fig2}
\end{figure}

\begin{figure}[h]
\[\begin{array}{|c|c|c|c|c|c|}
\hline
  \tt{a}_1&\tt{a}_2&\tt{b}_1&\tt{b}_2&\tt{c}_1&\tt{c}_2\\
\hline
\begin{tikzpicture}[scale=0.7]
\draw (0,0) to [out = 0, in = 180] (2,2);
\draw (0,2) to [out = 0, in = 180] (2,0);
\draw[fill=white] (0,0) circle (.35);
\draw[fill=white] (0,2) circle (.35);
\draw[fill=white] (2,0) circle (.35);
\draw[fill=white] (2,2) circle (.35);
\node at (0,0) {$+$};
\node at (0,2) {$+$};
\node at (2,2) {$+$};
\node at (2,0) {$+$};
\node at (2,0) {$+$};
\end{tikzpicture}&
\begin{tikzpicture}[scale=0.7]
\draw (0,0) to [out = 0, in = 180] (2,2);
\draw (0,2) to [out = 0, in = 180] (2,0);
\draw[fill=white] (0,0) circle (.35);
\draw[fill=white] (0,2) circle (.35);
\draw[fill=white] (2,0) circle (.35);
\draw[fill=white] (2,2) circle (.35);
\node at (0,0) {$-$};
\node at (0,2) {$-$};
\node at (2,2) {$-$};
\node at (2,0) {$-$};
\end{tikzpicture}&
\begin{tikzpicture}[scale=0.7]
\draw (0,0) to [out = 0, in = 180] (2,2);
\draw (0,2) to [out = 0, in = 180] (2,0);
\draw[fill=white] (0,0) circle (.35);
\draw[fill=white] (0,2) circle (.35);
\draw[fill=white] (2,0) circle (.35);
\draw[fill=white] (2,2) circle (.35);
\node at (0,0) {$+$};
\node at (0,2) {$-$};
\node at (2,2) {$+$};
\node at (2,0) {$-$};
\end{tikzpicture}&
\begin{tikzpicture}[scale=0.7]
\draw (0,0) to [out = 0, in = 180] (2,2);
\draw (0,2) to [out = 0, in = 180] (2,0);
\draw[fill=white] (0,0) circle (.35);
\draw[fill=white] (0,2) circle (.35);
\draw[fill=white] (2,0) circle (.35);
\draw[fill=white] (2,2) circle (.35);
\node at (0,0) {$-$};
\node at (0,2) {$+$};
\node at (2,2) {$-$};
\node at (2,0) {$+$};
\end{tikzpicture}&
\begin{tikzpicture}[scale=0.7]
\draw (0,0) to [out = 0, in = 180] (2,2);
\draw (0,2) to [out = 0, in = 180] (2,0);
\draw[fill=white] (0,0) circle (.35);
\draw[fill=white] (0,2) circle (.35);
\draw[fill=white] (2,0) circle (.35);
\draw[fill=white] (2,2) circle (.35);
\node at (0,0) {$-$};
\node at (0,2) {$+$};
\node at (2,2) {$+$};
\node at (2,0) {$-$};
\end{tikzpicture}&
\begin{tikzpicture}[scale=0.7]
\draw (0,0) to [out = 0, in = 180] (2,2);
\draw (0,2) to [out = 0, in = 180] (2,0);
\draw[fill=white] (0,0) circle (.35);
\draw[fill=white] (0,2) circle (.35);
\draw[fill=white] (2,0) circle (.35);
\draw[fill=white] (2,2) circle (.35);
\node at (0,0) {$+$};
\node at (0,2) {$-$};
\node at (2,2) {$-$};
\node at (2,0) {$+$};
\end{tikzpicture}\\
\hline
1&1
&\frac{t_2-t_1}{1-(q+1)t_1+q t_1 t_2}
&\frac{q(t_2-t_1)}{1-(q+1)t_1+q t_1 t_2}
&-\frac{(1-t_2)(1-q t_1)}{1-(q+1)t_1+q t_1 t_2}
&-\frac{(1-t_1)(1-q t_2)}{1-(q+1)t_1+q t_1 t_2}\\
\hline
\end{array}\]
\caption{Boltzmann weights for $R$: Lemma \ref{Lem1}}
\label{Fig3}
\end{figure}

\begin{equation}
\label{eqn:ybenew}
\hfill
\begin{tikzpicture}[baseline=(current bounding box.center)]
  \draw (0,1) to [out = 0, in = 180] (2,3) to (4,3);
  \draw (0,3) to [out = 0, in = 180] (2,1) to (4,1);
  \draw (3,0) to (3,4);
  \draw[fill=white] (0,1) circle (.3);
  \draw[fill=white] (0,3) circle (.3);
  \draw[fill=white] (3,4) circle (.3);
  \draw[fill=white] (4,3) circle (.3);
  \draw[fill=white] (4,1) circle (.3);
  \draw[fill=white] (3,0) circle (.3);
  \draw[fill=white] (2,3) circle (.3);
  \draw[fill=white] (2,1) circle (.3);
  \draw[fill=white] (3,2) circle (.3);
  \node at (0,1) {$a$};
  \node at (0,3) {$b$};
  \node at (3,4) {$c$};
  \node at (4,3) {$d$};
  \node at (4,1) {$e$};
  \node at (3,0) {$f$};
  \node at (2,3) {$g$};
  \node at (3,2) {$h$};
  \node at (2,1) {$i$};
\filldraw[black] (3,3) circle (2pt);
\node at (3,3) [anchor=south west] {$S$};
\filldraw[black] (3,1) circle (2pt);
\node at (3,1) [anchor=north west] {$T$};
\filldraw[black] (1,2) circle (2pt);
\node at (1,2) [anchor=west] {$R$};
\end{tikzpicture}\qquad\qquad
\begin{tikzpicture}[baseline=(current bounding box.center)]
  \draw (0,1) to (2,1) to [out = 0, in = 180] (4,3);
  \draw (0,3) to (2,3) to [out = 0, in = 180] (4,1);
  \draw (1,0) to (1,4);
  \draw[fill=white] (0,1) circle (.3);
  \draw[fill=white] (0,3) circle (.3);
  \draw[fill=white] (1,4) circle (.3);
  \draw[fill=white] (4,3) circle (.3);
  \draw[fill=white] (4,1) circle (.3);
  \draw[fill=white] (1,0) circle (.3);
  \draw[fill=white] (2,3) circle (.3);
  \draw[fill=white] (1,2) circle (.3);
  \draw[fill=white] (2,1) circle (.3);
  \node at (0,1) {$a$};
  \node at (0,3) {$b$};
  \node at (1,4) {$c$};
  \node at (4,3) {$d$};
  \node at (4,1) {$e$};
  \node at (1,0) {$f$};
  \node at (2,3) {$j$};
  \node at (1,2) {$k$};
  \node at (2,1) {$l$};
\filldraw[black] (1,3) circle (2pt);
\node at (1,3) [anchor=south west] {$T$};
\filldraw[black] (1,1) circle (2pt);
\node at (1,1) [anchor=north west]{$S$};
\filldraw[black] (3,2) circle (2pt);
\node at (3,2) [anchor=west] {$R$};
\end{tikzpicture}
\end{equation}
\end{lemma}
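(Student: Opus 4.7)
The plan is to verify the identity by direct case-analysis on all $2^6 = 64$ combinations of boundary spins $(a,b,c,d,e,f)$, following the same strategy as in the proof of Theorem \ref{YBE1}. For each fixed boundary, one expands the left-hand side as a sum over the internal spins $(g,h,i)$ and the right-hand side as a sum over $(j,k,l)$, multiplies the Boltzmann weights of $S$, $T$, $R$ drawn from Figures \ref{Fig1}--\ref{Fig3}, and checks equality. As in the earlier Yang-Baxter and caduceus proofs, this is most conveniently carried out by a SAGE script.

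Before running the check, it is illuminating to identify the structural origin of the weights, since this suggests why the identity should hold in the first place. Under the substitution $qt_1 = 1/z_n$ (equivalently $t_1 = 1/(qz_n)$), the vertex $S$ matches the modified Boltzmann weights of Figure \ref{F1}. Under the substitution $qt_2 = z_j'$ with $z_j' = q+1-1/z_j$, the vertex $T$ matches the stochastic $\Delta$ vertex of Figure \ref{Figure2} with spectral parameter $z_j$. Comparing the $R$-matrix of Figure \ref{Fig3} with the stochastic $\Delta$-$\Delta$ R-matrix of Figure \ref{Rmatrix3} under these same substitutions, the $a_1, a_2, b_1, b_2$ entries agree identically while the $c_1, c_2$ entries differ only by an overall sign. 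Thus Lemma \ref{Lem1} is morally the $\Delta$-$\Delta$ case of Theorem \ref{YBE1}, transported through the horizontal-spin-flipping operation of Figures \ref{F0}--\ref{F1} applied to the $S$-row of the braid, with the resulting signs absorbed into the two $c$-entries of $R$.

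An alternative route, which I would attempt first, is to derive Lemma \ref{Lem1} from the $\Delta$-$\Delta$ case of Theorem \ref{YBE1} via precisely this gauge argument: apply the involution that interchanges $+$ and $-$ on the two horizontal edges passing through $S$ and on the two strands of $R$ incident to them, track the induced sign changes on each admissible configuration, and show that on each side of the braid these signs combine into a common global factor. The main obstacle in this approach is the sign bookkeeping: one must confirm that the sign picked up by each admissible internal state on the left-hand side matches the sign on the right-hand side, which amounts to a careful parity count of $c$-type patterns along each path through the braid. If this bookkeeping turns out to be unwieldy — for example if the correspondence between admissible configurations on the two sides does not preserve the number of $c$-patterns modulo $2$ in a uniform way — the fallback is simply the direct 64-case verification, which is tedious but presents no conceptual difficulty.
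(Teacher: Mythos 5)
Your fallback (and ultimately decisive) argument — exhaustive verification of all $2^6=64$ boundary-spin combinations by a SAGE computation — is exactly the proof given in the paper, which states only that the identities were checked by a SAGE program. The additional gauge-transformation heuristic relating the lemma to the $\Delta$--$\Delta$ case of Theorem \ref{YBE1} is a reasonable structural observation but is not needed, and the paper does not pursue it.
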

\begin{proof}
There are $2^6=64$ possible combinations of boundary spins. We have checked the identities using a SAGE program.
\end{proof}

Now consider the R-matrix with Boltzmann weights given by Figure \ref{Rm}. It is obtained by taking $t_1=\frac{1}{q z_n}$ and $t_2=\frac{1}{q}z_n'$ in the Boltzmann weights from Figure \ref{Fig3}. The following theorem gives the ``fish relation'' satisfied by the new R-matrix and the new cap. 

\begin{figure}[h]
\[\begin{array}{|c|c|c|c|c|c|}
\hline
  \tt{a}_1&\tt{a}_2&\tt{b}_1&\tt{b}_2&\tt{c}_1&\tt{c}_2\\
\hline
\begin{tikzpicture}[scale=0.7]
\draw (0,0) to [out = 0, in = 180] (2,2);
\draw (0,2) to [out = 0, in = 180] (2,0);
\draw[fill=white] (0,0) circle (.35);
\draw[fill=white] (0,2) circle (.35);
\draw[fill=white] (2,0) circle (.35);
\draw[fill=white] (2,2) circle (.35);
\node at (0,0) {$+$};
\node at (0,2) {$+$};
\node at (2,2) {$+$};
\node at (2,0) {$+$};
\node at (2,0) {$+$};
\end{tikzpicture}&
\begin{tikzpicture}[scale=0.7]
\draw (0,0) to [out = 0, in = 180] (2,2);
\draw (0,2) to [out = 0, in = 180] (2,0);
\draw[fill=white] (0,0) circle (.35);
\draw[fill=white] (0,2) circle (.35);
\draw[fill=white] (2,0) circle (.35);
\draw[fill=white] (2,2) circle (.35);
\node at (0,0) {$-$};
\node at (0,2) {$-$};
\node at (2,2) {$-$};
\node at (2,0) {$-$};
\end{tikzpicture}&
\begin{tikzpicture}[scale=0.7]
\draw (0,0) to [out = 0, in = 180] (2,2);
\draw (0,2) to [out = 0, in = 180] (2,0);
\draw[fill=white] (0,0) circle (.35);
\draw[fill=white] (0,2) circle (.35);
\draw[fill=white] (2,0) circle (.35);
\draw[fill=white] (2,2) circle (.35);
\node at (0,0) {$+$};
\node at (0,2) {$-$};
\node at (2,2) {$+$};
\node at (2,0) {$-$};
\end{tikzpicture}&
\begin{tikzpicture}[scale=0.7]
\draw (0,0) to [out = 0, in = 180] (2,2);
\draw (0,2) to [out = 0, in = 180] (2,0);
\draw[fill=white] (0,0) circle (.35);
\draw[fill=white] (0,2) circle (.35);
\draw[fill=white] (2,0) circle (.35);
\draw[fill=white] (2,2) circle (.35);
\node at (0,0) {$-$};
\node at (0,2) {$+$};
\node at (2,2) {$-$};
\node at (2,0) {$+$};
\end{tikzpicture}&
\begin{tikzpicture}[scale=0.7]
\draw (0,0) to [out = 0, in = 180] (2,2);
\draw (0,2) to [out = 0, in = 180] (2,0);
\draw[fill=white] (0,0) circle (.35);
\draw[fill=white] (0,2) circle (.35);
\draw[fill=white] (2,0) circle (.35);
\draw[fill=white] (2,2) circle (.35);
\node at (0,0) {$-$};
\node at (0,2) {$+$};
\node at (2,2) {$+$};
\node at (2,0) {$-$};
\end{tikzpicture}&
\begin{tikzpicture}[scale=0.7]
\draw (0,0) to [out = 0, in = 180] (2,2);
\draw (0,2) to [out = 0, in = 180] (2,0);
\draw[fill=white] (0,0) circle (.35);
\draw[fill=white] (0,2) circle (.35);
\draw[fill=white] (2,0) circle (.35);
\draw[fill=white] (2,2) circle (.35);
\node at (0,0) {$+$};
\node at (0,2) {$-$};
\node at (2,2) {$-$};
\node at (2,0) {$+$};
\end{tikzpicture}\\
\hline
1&1
&\frac{z_n z_n'-1}{q z_n+z_n'-(q+1)}
&\frac{q(z_nz_n'-1)}{q z_n+z_n'-(q+1)}
&-\frac{(z_n-1)(q-z_n')}{q z_n+z_n'-(q+1)}
&-\frac{(q z_n-1)(1-z_n')}{q z_n+z_n'-(q+1)}\\
\hline
\end{array}\]
\caption{The R-matrix used in the proof of Proposition \ref{Interchange}}
\label{Rm}
\end{figure}

\begin{proposition}\label{fish}
Suppose the Boltzmann weights of $R$ in the following is given by Figure \ref{Rm}. Denote by $Z(I_3(\epsilon_1,\epsilon_2))$ the partition function of the following system.
\begin{equation}
\hfill
I_3(\epsilon_1,\epsilon_2)=
\begin{tikzpicture}[baseline=(current bounding box.center)]
  \draw (0,0) to [out=0, in=-150] (1,0.5) to [out=30, in=180] (2,1);
  \draw (0,1) to [out=0, in=150] (1,0.5) to [out=-30, in=180] (2,0);
  \draw (2,0) to [out = 0, in = 180] (2.5,0.5);
  \draw (2,1) to [out = 0, in = 180] (2.5,0.5);
  \filldraw[black] (2.5,0.5) circle (2pt);
  \node at (0,0) [anchor=east] {$\epsilon_2$};
  \node at (0,1) [anchor=east] {$\epsilon_1$};
  \filldraw[black] (1,0.5) circle (2pt);
  \node at (1,0.5) [anchor=south] {$R$};
\end{tikzpicture}
\end{equation}
Also denote by $Z(I_4(\epsilon_1,\epsilon_2))$ the partition function of the following system.
\begin{equation}
\hfill
I_4(\epsilon_1,\epsilon_2)=
\begin{tikzpicture}[baseline=(current bounding box.center)]
  \draw (0,0) to [out = 0, in = 180] (0.5,0.5);
  \draw (0,1) to [out = 0, in = 180] (0.5,0.5);
  \filldraw[black] (0.5,0.5) circle (2pt);
  \node at (0,0) [anchor=east] {$\epsilon_2$};
  \node at (0,1) [anchor=east] {$\epsilon_1$};
\end{tikzpicture}
\end{equation}
Then for reflecting stochastic symplectic ice (i.e. the Boltzmann weights for the new cap are given by Figure \ref{F2}), we have
\begin{equation}
    Z(I_3(\epsilon_1,\epsilon_2))=-\frac{1-(q+1)z_n+q z_n z_n'^{-1}}{1-(q+1) z_n'^{-1}+q z_n z_n'^{-1}} Z(I_4(\epsilon_1,\epsilon_2));
\end{equation}
for absorbing-and-emitting stochastic symplectic ice (i.e. the Boltzmann weights for the new cap are given by Figure \ref{F3}), we have
\begin{equation}
    Z(I_3(\epsilon_1,\epsilon_2))= Z(I_4(\epsilon_1,\epsilon_2)).
\end{equation}
\end{proposition}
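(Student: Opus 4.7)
The plan is to carry out a direct case-by-case verification, since there are only $2^2=4$ choices of the boundary spins $(\epsilon_1,\epsilon_2)$ and only two internal edges to sum over, namely the two edges joining the right side of the $R$-vertex to the cap. For each fixed $(\epsilon_1,\epsilon_2)$ I would write
\begin{equation*}
Z(I_3(\epsilon_1,\epsilon_2)) = \sum_{\alpha,\beta\in\{+,-\}} W_R(\epsilon_2,\epsilon_1;\alpha,\beta)\,W_{\mathrm{cap}}(\alpha,\beta),
\end{equation*}
where $W_R$ denotes the Boltzmann weight from Figure \ref{Rm} with the convention (bottom-left, top-left; top-right, bottom-right), $W_{\mathrm{cap}}(\alpha,\beta)$ denotes the new cap weight with top spin $\alpha$ and bottom spin $\beta$, and then compare the result with $Z(I_4(\epsilon_1,\epsilon_2))$, which is simply a single cap weight.

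The absorbing-and-emitting case is essentially immediate. The new cap in Figure \ref{F3} forces $\alpha=\beta$, so only the two fully aligned $R$-states (both of weight $1$) contribute, and each of them already forces $\epsilon_1=\epsilon_2$ on the left. This matches the admissibility of the cap in $I_4$ precisely: in the aligned cases both $Z(I_3)$ and $Z(I_4)$ equal $1$, and in the mismatched cases both vanish.

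For the reflecting case the new cap in Figure \ref{F2} instead forces $\alpha\neq\beta$. When $\epsilon_1=\epsilon_2$, no $R$-state has oppositely signed spins on the right coming from equally signed spins on the left, so $Z(I_3)=0=Z(I_4)$. When $\epsilon_1\neq\epsilon_2$, exactly two of the six $R$-states contribute: one whose right pair equals the left pair (a ``$b$-type'' state) and one whose right pair is the swap of the left pair (a ``$c$-type'' state). I would then add the two contributions and factor out the common denominator $qz_n+z_n'-(q+1)$.

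The one nontrivial step is verifying that for both $(\epsilon_1,\epsilon_2)=(+,-)$ and $(\epsilon_1,\epsilon_2)=(-,+)$ the $b$-type and $c$-type numerators collapse to the same expression $-(z_n'+qz_n-(q+1)z_nz_n')$. Once this is confirmed, multiplying numerator and denominator of the claimed ratio $-\frac{1-(q+1)z_n+qz_nz_n'^{-1}}{1-(q+1)z_n'^{-1}+qz_nz_n'^{-1}}$ by $z_n'$ turns it into $-\frac{z_n'-(q+1)z_nz_n'+qz_n}{qz_n+z_n'-(q+1)}$, which matches $Z(I_3)/Z(I_4)$. The main obstacle is nothing beyond this short polynomial simplification, and, as with Theorems \ref{YBE1} and \ref{caduceus}, a SAGE check would uniformly confirm all four boundary cases.
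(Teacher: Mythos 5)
Your proposal is correct and follows essentially the same route as the paper: the paper's proof is exactly this case-by-case enumeration, observing that the reflecting cap forces $Z(I_3(+,-))=c_1+b_2$ and $Z(I_3(-,+))=c_2+b_1$ (both equal to $-(qz_n+z_n'-(q+1)z_nz_n')/(qz_n+z_n'-(q+1))$, matching the claimed ratio after clearing $z_n'$) while the absorbing-and-emitting cap leaves only the $a_1,a_2$ states. Your algebraic check of the collapsing numerators is the same computation the paper performs by hand.
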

\begin{proof}
We denote by $a_1,a_2,b_1,b_2,c_1,c_2$ the Boltzmann weights for the R-matrix. 

First consider reflecting stochastic symplectic ice. In this case $Z(I_3(+,+))=Z(I_3(-,-))=Z(I_4(+,+))=Z(I_4(-,-))=0$. Moreover,
\begin{equation}
    Z(I_3(+,-))=c_1+b_2=-\frac{1-(q+1)z_n+q z_n z_n'^{-1}}{1-(q+1) z_n'^{-1}+q z_n z_n'^{-1}}Z(I_4(+,-)),
\end{equation}
\begin{equation}
    Z(I_3(-,+))=c_2+b_1=-\frac{1-(q+1)z_n+q z_n z_n'^{-1}}{1-(q+1) z_n'^{-1}+q z_n z_n'^{-1}}Z(I_4(-,+)).
\end{equation}

Now consider absorbing-and-emitting stochastic symplectic ice. In this case $Z(I_3(+,-))=Z(I_3(-,+))=Z(I_4(+,-))=Z(I_4(-,+))=0$. Moreover,
\begin{equation}
    Z(I_3(+,+))=a_1=Z(I_4(+,+)),
\end{equation}
\begin{equation}
    Z(I_3(-,-))=a_2=Z(I_4(-,-)).
\end{equation}
\end{proof}

We finish the proof of Proposition \ref{Interchange} as follows.

\begin{proof}[Proof of Proposition \ref{Interchange}]
For ease of notations, we denote by $V(a_1,a_2,b_1,b_2,c_1,c_2,d_1,d_2)$ a vertex with Boltzmann weights given by $a_1,a_2,b_1,b_2,c_1,c_2,d_1,d_2$.

Now we attach the R-matrix given by Figure \ref{Rm} to the left boundary of the last two rows of the changed system:
\begin{equation}
    \begin{tikzpicture}[baseline=(current bounding box.center)]
  \draw (0,0) to [out=0, in=-150] (1,0.5) to [out=30, in=180] (2,1);
  \draw (0,1) to [out=0, in=150] (1,0.5) to [out=-30, in=180] (2,0);
  \draw (2,0) to (3.5,0);
  \draw  [dashed] (3.5,0) to (4.5,0);
  \draw (4.5,0) to (5.5,0);
  \draw (2,1) to (3.5,1);
  \draw  [dashed] (3.5,1) to (4.5,1);
  \draw (4.5,1) to (5.5,1);
  \draw (5.5,0) to [out = 0, in = 180] (6,0.5);
  \draw (5.5,1) to [out = 0, in = 180] (6,0.5);
  \draw (2.5,-0.5) to (2.5,1.5);
  \draw (3.5,-0.5) to (3.5,1.5);
  \draw (4.5,-0.5) to (4.5,1.5);
  \filldraw[black] (2.5,0) circle (2pt);
  \filldraw[black] (3.5,0) circle (2pt);
  \filldraw[black] (4.5,0) circle (2pt);
  \filldraw[black] (2.5,1) circle (2pt);
  \filldraw[black] (3.5,1) circle (2pt);
  \filldraw[black] (4.5,1) circle (2pt);
  \filldraw[black] (6,0.5) circle (2pt);
  \node at (0,0) [anchor=east] {$+$};
  \node at (0,1) [anchor=east] {$+$};
  \filldraw[black] (1,0.5) circle (2pt);
  \node at (1,0.5) [anchor=south] {$R$};
  \node at (8,1) [anchor=south] {$V(1,1,\frac{1}{z_n},\frac{1}{q z_n},0,0,\frac{1}{z_n}-1,\frac{1}{qz_n}-1)$};
  \node at (8,0) [anchor=north] {$V(1,1,z_n',\frac{1}{q}z_n',0,0,1-z_n',1-\frac{1}{q}z_n')$};
  \node at (0,0) [anchor=east] {$+$};
\end{tikzpicture}
\end{equation}
Note that the only admissible configuration of the R-matrix is given by
\begin{equation}
    \begin{tikzpicture}[baseline=(current bounding box.center)]
  \draw (0,0) to [out=0, in=-150] (1,0.5) to [out=30, in=180] (2,1);
  \draw (0,1) to [out=0, in=150] (1,0.5) to [out=-30, in=180] (2,0);
  \draw (2,0) to (3.5,0);
  \draw  [dashed] (3.5,0) to (4.5,0);
  \draw (4.5,0) to (5.5,0);
  \draw (2,1) to (3.5,1);
  \draw  [dashed] (3.5,1) to (4.5,1);
  \draw (4.5,1) to (5.5,1);
  \draw (5.5,0) to [out = 0, in = 180] (6,0.5);
  \draw (5.5,1) to [out = 0, in = 180] (6,0.5);
  \draw (2.5,-0.5) to (2.5,1.5);
  \draw (3.5,-0.5) to (3.5,1.5);
  \draw (4.5,-0.5) to (4.5,1.5);
  \filldraw[black] (2.5,0) circle (2pt);
  \filldraw[black] (3.5,0) circle (2pt);
  \filldraw[black] (4.5,0) circle (2pt);
  \filldraw[black] (2.5,1) circle (2pt);
  \filldraw[black] (3.5,1) circle (2pt);
  \filldraw[black] (4.5,1) circle (2pt);
  \filldraw[black] (6,0.5) circle (2pt);
  \node at (0,0) [anchor=east] {$+$};
  \node at (0,1) [anchor=east] {$+$};
  \filldraw[black] (1,0.5) circle (2pt);
  \node at (1,0.5) [anchor=south] {$R$};
  \node at (8,1) [anchor=south] {$V(1,1,\frac{1}{z_n},\frac{1}{q z_n},0,0,\frac{1}{z_n}-1,\frac{1}{qz_n}-1)$};
  \node at (8,0) [anchor=north] {$V(1,1,z_n',\frac{1}{q}z_n',0,0,1-z_n',1-\frac{1}{q}z_n')$};
  \node at (0,0) [anchor=east] {$+$};
  \node at (2,0) [anchor=north] {$+$};
  \node at (2,1) [anchor=south] {$+$};
\end{tikzpicture}
\end{equation}
Therefore, the partition function of the above system is equal to $Z(\mathcal{S}'_{n,L,\lambda,z})$ or $Z(\mathcal{T}'_{n,L,\lambda,z})$ according to the type of the stochastic symplectic ice.

By Lemma \ref{Lem1}, the R-matrix can be pushed to the right without changing the partition function. That is, the partition function of the above system is equal to the partition function of the following
\begin{equation}
\begin{tikzpicture}[baseline=(current bounding box.center)]
\draw (3.5,0) to [out=0, in=-150] (4.5,0.5) to [out=30, in=180] (5.5,1);
 \draw (3.5,1) to [out=0, in=150] (4.5,0.5) to [out=-30, in=180] (5.5,0);
  \draw (0,0) to (1.5,0);
  \draw  [dashed] (1.5,0) to (2.5,0);
  \draw (2.5,0) to (3.5,0);
  \draw (0,1) to (1.5,1);
  \draw  [dashed] (1.5,1) to (2.5,1);
  \draw (2.5,1) to (3.5,1);
  \draw (5.5,0) to [out = 0, in = 180] (6,0.5);
  \draw (5.5,1) to [out = 0, in = 180] (6,0.5);
  \draw (0.5,-0.5) to (0.5,1.5);
  \draw (1.5,-0.5) to (1.5,1.5);
  \draw (2.5,-0.5) to (2.5,1.5);
  \filldraw[black] (0.5,0) circle (2pt);
  \filldraw[black] (1.5,0) circle (2pt);
  \filldraw[black] (2.5,0) circle (2pt);
  \filldraw[black] (0.5,1) circle (2pt);
  \filldraw[black] (1.5,1) circle (2pt);
  \filldraw[black] (2.5,1) circle (2pt);
  \filldraw[black] (6,0.5) circle (2pt);
  \node at (0,0) [anchor=east] {$+$};
  \node at (0,1) [anchor=east] {$+$};
  \filldraw[black] (4.5,0.5) circle (2pt);
  \node at (4.5,0.5) [anchor=south] {$R$};
  \node at (5.6,0) [anchor=north] {$V(1,1,\frac{1}{z_n},\frac{1}{q z_n},0,0,\frac{1}{z_n}-1,\frac{1}{qz_n}-1)$};
  \node at (5.6,1) [anchor=south] {$V(1,1,z_n',\frac{1}{q}z_n',0,0,1-z_n',1-\frac{1}{q}z_n')$};
  \node at (0,0) [anchor=east] {$+$};
\end{tikzpicture}
\end{equation}

Consider the reflecting stochastic symplectic ice. By Proposition \ref{fish}, the above partition function is equal to $-\frac{1-(q+1)z_n+q z_n z_n'^{-1}}{1-(q+1) z_n'^{-1}+q z_n z_n'^{-1}}$ times the partition function of the following system (denoted by $Z_1$)
\begin{equation}
    \begin{tikzpicture}[baseline=(current bounding box.center)]
  \draw (0,0) to (1.5,0);
  \draw  [dashed] (1.5,0) to (2.5,0);
  \draw (2.5,0) to (3.5,0);
  \draw (0,1) to (1.5,1);
  \draw  [dashed] (1.5,1) to (2.5,1);
  \draw (2.5,1) to (3.5,1);
  \draw (3.5,0) to [out = 0, in = 180] (4,0.5);
  \draw (3.5,1) to [out = 0, in = 180] (4,0.5);
  \draw (0.5,-0.5) to (0.5,1.5);
  \draw (1.5,-0.5) to (1.5,1.5);
  \draw (2.5,-0.5) to (2.5,1.5);
  \filldraw[black] (0.5,0) circle (2pt);
  \filldraw[black] (1.5,0) circle (2pt);
  \filldraw[black] (2.5,0) circle (2pt);
  \filldraw[black] (0.5,1) circle (2pt);
  \filldraw[black] (1.5,1) circle (2pt);
  \filldraw[black] (2.5,1) circle (2pt);
  \filldraw[black] (4,0.5) circle (2pt);
  \node at (0,0) [anchor=east] {$+$};
  \node at (0,1) [anchor=east] {$+$};
  \node at (5.6,0) [anchor=north] {$V(1,1,\frac{1}{z_n},\frac{1}{q z_n},0,0,\frac{1}{z_n}-1,\frac{1}{qz_n}-1)$};
  \node at (5.6,1) [anchor=south] {$V(1,1,z_n',\frac{1}{q}z_n',0,0,1-z_n',1-\frac{1}{q}z_n')$};
  \node at (0,0) [anchor=east] {$+$};
\end{tikzpicture}
\end{equation}

We note again that the top boundary edges of the system all carry $+$ spin. We change the sign of the spins of the $2n$th row again (interchanging $+$ and $-$), and also change the Boltzmann weights of the $2n$th row as in the following configuration. The Boltzmann weights for the cap connecting the last two rows are changed back to the original one given in Figure \ref{cap1}. We denote the partition function of the following system by $Z_2$.
\begin{equation}
    \begin{tikzpicture}[baseline=(current bounding box.center)]
  \draw (0,0) to (1.5,0);
  \draw  [dashed] (1.5,0) to (2.5,0);
  \draw (2.5,0) to (3.5,0);
  \draw (0,1) to (1.5,1);
  \draw  [dashed] (1.5,1) to (2.5,1);
  \draw (2.5,1) to (3.5,1);
  \draw (3.5,0) arc(-90:90:0.5);
  \draw (0.5,-0.5) to (0.5,1.5);
  \draw (1.5,-0.5) to (1.5,1.5);
  \draw (2.5,-0.5) to (2.5,1.5);
  \filldraw[black] (0.5,0) circle (2pt);
  \filldraw[black] (1.5,0) circle (2pt);
  \filldraw[black] (2.5,0) circle (2pt);
  \filldraw[black] (0.5,1) circle (2pt);
  \filldraw[black] (1.5,1) circle (2pt);
  \filldraw[black] (2.5,1) circle (2pt);
  \filldraw[black] (4,0.5) circle (2pt);
  \node at (0,0) [anchor=east] {$+$};
  \node at (0,1) [anchor=east] {$-$};
  \node at (5.6,0) [anchor=north] {$V(1,1,\frac{1}{z_n},\frac{1}{q z_n},0,0,\frac{1}{z_n}-1,\frac{1}{qz_n}-1)$};
  \node at (5.6,1) [anchor=south] {$V(1,1,\frac{1}{z_n'},\frac{q}{z_n'},\frac{q}{z_n'}-1,\frac{1}{z_n'}-1,0,0)$};
\end{tikzpicture}
\end{equation}
Similar to the previous argument, we conclude that 
\begin{equation}
    Z_2=(\frac{q}{z_n'})^L Z_1.
\end{equation}

Now note that the total number of $c_1,c_2,d_1,d_2$ patterns in the last two rows is an odd number (as can be seen by interpreting $-$ spins as paths and considering all possibilities). Hence $Z_2$ is equal to $-1$ times the partition function of the following configuration, which is $Z(\mathcal{S}_{n,L,\lambda,s_nz})$.
\begin{equation}
    \begin{tikzpicture}[baseline=(current bounding box.center)]
  \draw (0,0) to (1.5,0);
  \draw  [dashed] (1.5,0) to (2.5,0);
  \draw (2.5,0) to (3.5,0);
  \draw (0,1) to (1.5,1);
  \draw  [dashed] (1.5,1) to (2.5,1);
  \draw (2.5,1) to (3.5,1);
  \draw (3.5,0) arc(-90:90:0.5);
  \draw (0.5,-0.5) to (0.5,1.5);
  \draw (1.5,-0.5) to (1.5,1.5);
  \draw (2.5,-0.5) to (2.5,1.5);
  \filldraw[black] (0.5,0) circle (2pt);
  \filldraw[black] (1.5,0) circle (2pt);
  \filldraw[black] (2.5,0) circle (2pt);
  \filldraw[black] (0.5,1) circle (2pt);
  \filldraw[black] (1.5,1) circle (2pt);
  \filldraw[black] (2.5,1) circle (2pt);
  \filldraw[black] (4,0.5) circle (2pt);
  \node at (0,0) [anchor=east] {$+$};
  \node at (0,1) [anchor=east] {$-$};
  \node at (5.6,0) [anchor=north] {$V(1,1,\frac{1}{z_n},\frac{1}{q z_n},0,0,1-\frac{1}{z_n},1-\frac{1}{qz_n})$};
  \node at (5.6,1) [anchor=south] {$V(1,1,\frac{1}{z_n'},\frac{q}{z_n'},1-\frac{q}{z_n'},1-\frac{1}{z_n'},0,0)$};
\end{tikzpicture}
\end{equation}

Therefore we conclude that
\begin{equation}
    Z(\mathcal{S}_{n,L,\lambda,s_nz})=(\frac{1}{z_n z_n'})^L\frac{1-(q+1)z_n'^{-1}+q z_n z_n'^{-1}}{1-(q+1)z_n+q z_n z_n'^{-1}} Z(\mathcal{S}_{n,L,\lambda,z}).
\end{equation}

The conclusion for $Z(\mathcal{T}_{n,L,\lambda,s_nz})$ can be obtained similarly, noting that the number of $c_1,c_2,d_1,d_2$ patterns in the last two rows is an even number for this case.

\end{proof}

\section{Colored stochastic symplectic ice}\label{Sect.3}

In this section, we introduce a colored version of the stochastic symplectic ice model. For each edge of the rectangular lattice, instead of assigning either a $+$ or $-$ spin, we now associate either $+$ or one of $2n$ colors to it. The $2n$ colors are labeled by $[\pm n ]=\{\overline{n},\cdots,\overline{1},1,\cdots,n\}$.

The colored model is closely related to Cartan type C: part of the boundary conditions are specified by two elements $\sigma$ and $\tau$ of the hyperoctahedral group--the Weyl group of type C; the recursive relations for the partition function, upon a change of variables, are related to Demazure-Lusztig operators of type C.

We start by introducing the colored model in Section \ref{Sect.3.1}. Then we introduce the R-matrix and prove the Yang-Baxter equation in Section \ref{Sect.3.2}. In Section \ref{Sect.3.2.5}, we compute the partition function when $\sigma(i)=\overline{\tau(i)}$ for every $1\leq i\leq n$. Then we present a new relation, the reflection equation, in Section \ref{Sect.3.2.7}. By combing the Yang-Baxter equation and the reflection equation, we derive the recursive relations for the partition function in Section \ref{Sect.3.3}. The recursive relations are further related to Demazure-Lusztig operators of type C in Section \ref{Sect.3.4}. 

We briefly introduce the hyperoctahedral group--denoted by $B_n$--here. The hyperoctahedral group has the following presentation
\begin{eqnarray*}
  B_n&=&\langle s_1,\cdots,s_n|s_i^2=1, 1\leq i\leq n; (s_is_{i+1})^3=1, 1\leq i\leq n-2;\\
  &&(s_{n-1}s_n)^4=1; (s_is_j)^2=1,1\leq i<j\leq n, |i-j|>1\rangle.
\end{eqnarray*}
The group $B_n$ is the Weyl group for the root system of type $C_n$. Elements of $B_n$ can be viewed as permutations $\sigma$ of $[\pm n]$ such that $\sigma(-i)=-\sigma(i)$ for every $1\leq i\leq n$. 

We make the convention that elements of $B_n$ are multiplied from right to left, and that $\overline{\overline{i}}=i$ for each $1\leq i\leq n$. Thus for each $i\in [\pm n]$ and $\sigma,\tau\in B_n$, we have $\sigma\tau (i)=\sigma(\tau(i))$. Note that $s_i$ is the transposition $(i,i+1)$ for each $1\leq i\leq n-1$, and that $s_n(i)=i$ for $1\leq i\leq n-1$ and $s_n(n)=\overline{n}$. 

\subsection{The colored model}\label{Sect.3.1}
We introduce the colored version of the stochastic symplectic ice in this section. The main difference from the uncolored model is that now every edge of the lattice can either take $+$ or one of the $2n$ colors labeled by $[\pm n]$.

We denote the colors by $c_{\overline{n}},\cdots,c_n$. Hereafter we refer to $c_i$ and $c_{\overline{i}}$ as mutually opposite colors for every $1\leq i\leq n$. For convenience of notations, we also let $c_0:=+$. We take the following order on $[\pm n]\cup\{0\}$:
\begin{equation}
    \bar{n}<\cdots<\bar{1}<0<1<\cdots n.
\end{equation}

For the colored model, there are also two types of vertices. They are termed ``colored stochastic $\Gamma$ vertex'' and ``colored stochastic $\Delta$ vertex''. The model depends on $n$ spectral parameters $z_1,\cdots,z_n$ and a deformation parameter $q$. Again we take
\begin{equation*}
    z_i'=q+1-\frac{1}{z_i}.
\end{equation*}
The Boltzmann weights for the two types of vertices are listed in Figures \ref{Gamma}-\ref{Delta}. 

\begin{figure}[h]
\[
\begin{array}{|c|c|c|c|c|c|}
\hline
  \tt{a}_1\slash \tt{a}_2&\tt{a}_1\slash\tt{a}_2& \tt{b}_1&\tt{b}_2&\tt{c}_1&\tt{c}_2\\
\hline
\gammaice{c_{\alpha}}{c_{\alpha}}{c_{\alpha}}{c_{\alpha}} &
  \gammaice{c_{\beta}}{c_{\beta}}{c_{\beta}}{c_{\beta}} &
  \gammaice{c_{\alpha}}{c_{\beta}}{c_{\alpha}}{c_{\beta}} &
  \gammaice{c_{\beta}}{c_{\alpha}}{c_{\beta}}{c_{\alpha}} &
  \gammaice{c_{\beta}}{c_{\alpha}}{c_{\alpha}}{c_{\beta}} &
  \gammaice{c_{\alpha}}{c_{\beta}}{c_{\beta}}{c_{\alpha}}\\
\hline
   1 & 1 & z_i & q z_i &1-q z_i & 1-z_i\\
\hline\end{array}\]
\caption{Boltzmann weights for colored stochastic $\Gamma$ vertex with spectral parameter $z_i$, where $\alpha<\beta$}
\label{Gamma}
\end{figure}

\begin{figure}[h]
\[
\begin{array}{|c|c|c|c|c|c|}
\hline
  \tt{a}_1 \slash \tt{a}_2 &\tt{a}_1\slash \tt{a}_2&\tt{b}_1&\tt{b}_2&\tt{d}_1&\tt{d}_2\\
\hline
\gammaice{c_{\alpha}}{c_{\alpha}}{c_{\alpha}}{c_{\alpha}} &
  \gammaice{c_{\beta}}{c_{\beta}}{c_{\beta}}{c_{\beta}} &
  \gammaice{c_{\alpha}}{c_{\beta}}{c_{\alpha}}{c_{\beta}} &
  \gammaice{c_{\beta}}{c_{\alpha}}{c_{\beta}}{c_{\alpha}} &
  \gammaice{c_{\beta}}{c_{\beta}}{c_{\alpha}}{c_{\alpha}} &
  \gammaice{c_{\alpha}}{c_{\alpha}}{c_{\beta}}{c_{\beta}}\\
\hline
   1 & 1 & z_i' & \frac{1}{q} z_i' &  1-z_i' & 1-\frac{1}{q}z_i'\\
\hline\end{array}\]
\caption{Boltzmann weights for colored stochastic $\Delta$ vertex with spectral parameter $z_i$, where $\alpha<\beta$ and $z_i'=q+1-\frac{1}{z_i}$}
\label{Delta}
\end{figure}

The colored model consists of a rectangular lattice with $2n$ rows and $L$ columns. The rows are numbered $1,2,\cdots,2n$ from bottom to top, and the columns are numbered $1,2,\cdots,L$ from right to left. Every odd-numbered row is a row of colored stochastic $\Delta$ vertex, and every even-numbered row is a row of colored stochastic $\Gamma$ vertex. The spectral parameter for the $i$th row of colored stochastic $\Gamma$ vertices and the $i$th row of the colored stochastic $\Delta$ vertices is $z_i$. 

The model also depends on a partition $\lambda=(\lambda_1,\cdots,\lambda_n)\in\mathbb{Z}^n$ (with $\lambda_1\geq \cdots \geq \lambda_n$) and two elements $\sigma,\tau\in B_n$, where $B_n$ is the hyperoctahedral group as introduced previously. We assume that $L\geq \lambda_1+n$. The assignment of boundary conditions is given as follows: on the left column, we assign color $c_{\sigma(i)}$ to the $i$th row of colored stochastic $\Gamma$ vertex, and $+$ to each row of colored stochastic $\Delta$ vertex; on the top, we assign $+$ to each boundary edge; on the bottom, we assign color $c_{\tau(i)}$ to each column labeled $\lambda_i+n+1-i$ for $1\leq i\leq n$, and assign $+$ to the other columns. On the right, the $i$th row of colored stochastic $\Gamma$ vertex and the $i$th row of colored stochastic $\Delta$ vertex are connected with a cap. The Boltzmann weights for the caps are given in Figure \ref{ColorCap}. 

\begin{figure}[h]
\[
\begin{array}{|c|c|c|c|c|c|}
\hline
\text{Cap} &\caps{+}{+} & \caps{c_{\alpha}}{c_{\bar{\alpha}}} & \caps{c_{\bar{\alpha}}}{c_{\alpha}} \\
\hline
\text{Boltzmann weight}  &  1 & 1 & 1 \\
\hline\end{array}\]
\caption{Boltzmann weights of the caps for colored stochastic symplectic ice: where $\alpha\in \{1,2,\cdots,n\}$}
\label{ColorCap}
\end{figure}

Hereafter we denote by $\mathcal{S}_{n,L,\lambda,\sigma,\tau,z}$ the collection of admissible configurations with the above-specified data. We also denote by $Z(\mathcal{S}_{n,L,\lambda,\sigma,\tau,z})$ the corresponding partition function.

We note that the Boltzmann weights for both types of vertices and the caps are also stochastic, as in the uncolored case. When $z$ satisfies the condition (\ref{Co1}), a probabilistic interpretation for each vertex can similarly be obtained.

We also note that the colored model can be interpreted as an interacting particle system as in the uncolored case if condition (\ref{Co1}) is satisfied. The interpretation is similar to the uncolored case, except that now each particle carries a color, and that the updating rule for a particle depends on its color.

The detailed rule is as follows. When $t$ is even, the particles are ordered from left to right. There is a new particle entering from the left boundary with color $c_{\sigma(n-\frac{t}{2})}$ (we call it $0$th particle), which jumps to the right with geometric jump size (with parameter $q z_{n-\frac{t}{2}}$ if $\sigma(n-\frac{t}{2})>0$, or $z_{n-\frac{t}{2}}$ otherwise) unless it hits the $1$st particle; if the particle hits the $1$st particle, the updating rule will be described later. Starting from $l=1$, if the $l$th particle wasn't hit by any particle on its left, we flip a coin with head probability $z_{n-\frac{t}{2}}$ (if the color of the particle is $c_{\alpha}$ with $\alpha>0)$ or $q z_{n-\frac{t}{2}}$ (if $\alpha<0$) to determine whether it will stay at its current position; if the coin comes up tail, then the particle jumps to the right with geometric jump size (with parameter $q z_{n-\frac{t}{2}}$ if $\alpha>0$, or $z_{n-\frac{t}{2}}$ if $\alpha<0$) unless it hits the $(l+1)$th particle; if the particle hits the $(l+1)$th particle, the updating rule will be described later. If the $l$th particle was hit by the $(l-1)$th particle, either the $(l-1)$th or the $l$th particle (depending on the updating rule as will be described later) jumps to the right by $1$ and the following move is the same as the previous case except for the first step determining whether it will stay at the current position. Then the $(l+1)$th particle begins to move. If the rightmost particle moves beyond the first column (meaning that it hits the cap), it is reflected by the cap (meaning that it will start to move leftward from the first column at time $t+1$), and its color $c_{\alpha}$ is changed to $c_{\overline{\alpha}}$.

When $t$ is odd, the particles are ordered from right to left. If there is a particle reflected from the cap (we call it $0$th particle), it jumps to the left with geometric jump size (with parameter $\frac{1}{q}  z_{n-\frac{t-1}{2}}'$ if the color of the particle is $c_{\alpha}$ with $\alpha>0$, or $z_{n-\frac{t-1}{2}}'$ otherwise) unless it hits the $1$st particle; if the particle hits the $1$st particle, the updating rule will be described later. Starting from $l=1$, if the $l$th particle wasn't hit by any particle on its right, we flip a coin with head probability $z_{n-\frac{t-1}{2}}'$ (if the color of the particle is $c_{\alpha}$ with $\alpha>0$) or $\frac{1}{q}z_{n-\frac{t-1}{2}}'$ (if $\alpha<0$) to determine whether it will stay at its current position; if the coin comes up tail, then the particle jumps to the left with geometric jump size (with parameter $\frac{1}{q}  z_{n-\frac{t-1}{2}}'$ if $\alpha>0$, or $z_{n-\frac{t-1}{2}}'$ if $\alpha<0$) unless it hits the $(l+1)$th particle; if the particle hits the $(l+1)$th particle, the updating rule will be described later. If the $l$th particle was hit by the $(l-1)$th particle, either the $(l-1)$th or the $l$th particle (depending on the updating rule as will be described later) jumps to the left by $1$ and the following move is the same as the previous case except for the first step determining whether it will stay at the current position. Then the $(l+1)$th particle begins to move.

Now we describe the updating rule for the case when a particle hits another.
When the time $t$ is even, consider the situation when a particle of color $c_{\alpha}$ hits another particle of color $c_{\beta}$ from the left. If $\alpha<\beta$, with probability $z_{n-\frac{t}{2}}$ the two particles are swapped, with the particle of color $c_{\beta}$ staying at the original position and the other particle continuing to move; with probability $1-z_{n-\frac{t}{2}}$, the particle of color $c_{\alpha}$ stays at the current position and the other particle starts to move. If $\alpha>\beta$, with probability $q z_{n-\frac{t}{2}}$ the two particles are swapped, with the particle of color $c_{\beta}$ staying at the original position and the other particle continuing to move; with probability $1-q z_{n-\frac{t}{2}}$, the particle of color $c_{\alpha}$ stays at the current position and the other particle starts to move.

When the time $t$ is odd, we also consider the situation when a particle of color $c_{\alpha}$ hits another particle of color $c_{\beta}$ from the right. If $\alpha<\beta$, with probability $z_{n-\frac{t-1}{2}}'$ the two particles are swapped, with the particle of color $c_{\beta}$ staying at the original position and the other particle continuing to move; with probability $1-z_{n-\frac{t-1}{2}}'$, the particle of color $c_{\alpha}$ stays at the current position and the other particle starts to move. If $\alpha>\beta$, with probability $\frac{1}{q} z_{n-\frac{t-1}{2}}'$ the two particles are swapped, with the particle of color $c_{\beta}$ staying at the original position and the other particle continuing to move; with probability $1-\frac{1}{q} z_{n-\frac{t-1}{2}}'$, the particle of color $c_{\alpha}$ stays at the current position and the other particle starts to move. 

Under this probabilistic interpretation, the partition function $Z(\mathcal{S}_{n,L,\lambda,\sigma,\tau,z})$ represents the probability that (with the entering order of particle colors specified by $\sigma$) the particle configuration at time $t=2n$ is given by $\mu$ and $\tau$, with $\mu$ specifying the particle locations and $\tau$ specifying the particle colors.

\subsection{The R-matrix and the Yang-Baxter equation}\label{Sect.3.2}
For the colored stochastic symplectic ice, we find three sets of Yang-Baxter equations. The corresponding R-matrices are termed ``colored stochastic $\Gamma-\Gamma$ vertex'', ``colored stochastic $\Delta-\Gamma$ vertex'' and ``colored stochastic $\Delta-\Delta$ vertex''. In Section \ref{Sect.3.3} we will show that, when combined with the reflection equation, these three sets of Yang-Baxter equations are enough for us to derive the recursive relations for the partition functions.

Throughout the paper we denote the Boltzmann weights of an R-matrix of type $XY$ and spectral parameters $z_i,z_j$ as shown in Figure \ref{RmatrixC} by $R_{XY}(c_{\alpha},c_{\beta},c_{\gamma},c_{\delta}; z_i,z_j)$, where $(X,Y)\in\{(\Gamma,\Gamma),(\Delta,\Delta),(\Delta,\Gamma)\}$ and $\alpha,\beta,\gamma,\delta\in\{\overline{n},\cdots,\overline{1},0,1,\cdots,n\}$.
The Boltzmann weights for the three types of R-matrices are given in Figures \ref{RmatrixC1}-\ref{RmatrixC3}.

\begin{figure}[h]
\centering
 \begin{tikzpicture}[scale=0.7]
\draw (0,0) to [out = 0, in = 180] (2,2);
\draw (0,2) to [out = 0, in = 180] (2,0);
\draw[fill=white] (0,0) circle (.35);
\draw[fill=white] (0,2) circle (.35);
\draw[fill=white] (2,0) circle (.35);
\draw[fill=white] (2,2) circle (.35);
\node at (0,0) {$c_{\alpha}$};
\node at (0,2) {$c_{\beta}$};
\node at (2,2) {$c_{\gamma}$};
\node at (2,0) {$c_{\delta}$};
\path[fill=white] (1,1) circle (.3);
\node at (1,1) {$R_{z_i,z_j}$};
\end{tikzpicture}
\caption{R-matrix}
\label{RmatrixC}
\end{figure}

\begin{figure}[h]
\[\begin{array}{|c|c|c|c|c|c|}
\hline
\begin{tikzpicture}[scale=0.7]
\draw (0,0) to [out = 0, in = 180] (2,2);
\draw (0,2) to [out = 0, in = 180] (2,0);
\draw[fill=white] (0,0) circle (.35);
\draw[fill=white] (0,2) circle (.35);
\draw[fill=white] (2,0) circle (.35);
\draw[fill=white] (2,2) circle (.35);
\node at (0,0) {$c_{\alpha}$};
\node at (0,2) {$c_{\alpha}$};
\node at (2,2) {$c_{\alpha}$};
\node at (2,0) {$c_{\alpha}$};
\path[fill=white] (1,1) circle (.3);
\node at (1,1) {$R_{z_i,z_j}$};
\end{tikzpicture}&
\begin{tikzpicture}[scale=0.7]
\draw (0,0) to [out = 0, in = 180] (2,2);
\draw (0,2) to [out = 0, in = 180] (2,0);
\draw[fill=white] (0,0) circle (.35);
\draw[fill=white] (0,2) circle (.35);
\draw[fill=white] (2,0) circle (.35);
\draw[fill=white] (2,2) circle (.35);
\node at (0,0) {$c_{\beta}$};
\node at (0,2) {$c_{\beta}$};
\node at (2,2) {$c_{\beta}$};
\node at (2,0) {$c_{\beta}$};
\path[fill=white] (1,1) circle (.3);
\node at (1,1) {$R_{z_i,z_j}$};
\end{tikzpicture}&
\begin{tikzpicture}[scale=0.7]
\draw (0,0) to [out = 0, in = 180] (2,2);
\draw (0,2) to [out = 0, in = 180] (2,0);
\draw[fill=white] (0,0) circle (.35);
\draw[fill=white] (0,2) circle (.35);
\draw[fill=white] (2,0) circle (.35);
\draw[fill=white] (2,2) circle (.35);
\node at (0,0) {$c_{\alpha}$};
\node at (0,2) {$c_{\beta}$};
\node at (2,2) {$c_{\alpha}$};
\node at (2,0) {$c_{\beta}$};
\path[fill=white] (1,1) circle (.3);
\node at (1,1) {$R_{z_i,z_j}$};
\end{tikzpicture}&
\begin{tikzpicture}[scale=0.7]
\draw (0,0) to [out = 0, in = 180] (2,2);
\draw (0,2) to [out = 0, in = 180] (2,0);
\draw[fill=white] (0,0) circle (.35);
\draw[fill=white] (0,2) circle (.35);
\draw[fill=white] (2,0) circle (.35);
\draw[fill=white] (2,2) circle (.35);
\node at (0,0) {$c_{\beta}$};
\node at (0,2) {$c_{\alpha}$};
\node at (2,2) {$c_{\beta}$};
\node at (2,0) {$c_{\alpha}$};
\path[fill=white] (1,1) circle (.3);
\node at (1,1) {$R_{z_i,z_j}$};
\end{tikzpicture}&
\begin{tikzpicture}[scale=0.7]
\draw (0,0) to [out = 0, in = 180] (2,2);
\draw (0,2) to [out = 0, in = 180] (2,0);
\draw[fill=white] (0,0) circle (.35);
\draw[fill=white] (0,2) circle (.35);
\draw[fill=white] (2,0) circle (.35);
\draw[fill=white] (2,2) circle (.35);
\node at (0,0) {$c_{\beta}$};
\node at (0,2) {$c_{\alpha}$};
\node at (2,2) {$c_{\alpha}$};
\node at (2,0) {$c_{\beta}$};
\path[fill=white] (1,1) circle (.3);
\node at (1,1) {$R_{z_i,z_j}$};
\end{tikzpicture}&
\begin{tikzpicture}[scale=0.7]
\draw (0,0) to [out = 0, in = 180] (2,2);
\draw (0,2) to [out = 0, in = 180] (2,0);
\draw[fill=white] (0,0) circle (.35);
\draw[fill=white] (0,2) circle (.35);
\draw[fill=white] (2,0) circle (.35);
\draw[fill=white] (2,2) circle (.35);
\node at (0,0) {$c_{\alpha}$};
\node at (0,2) {$c_{\beta}$};
\node at (2,2) {$c_{\beta}$};
\node at (2,0) {$c_{\alpha}$};
\path[fill=white] (1,1) circle (.3);
\node at (1,1) {$R_{z_i,z_j}$};
\end{tikzpicture}\\
\hline
1&1
&\frac{z_i-z_j}{1-(q+1)z_j+q z_i z_j}
&\frac{q(z_i-z_j)}{1-(q+1)z_j+q z_i z_j}
&\frac{(1-q z_i)(1-z_j)}{1-(q+1)z_j+q z_i z_j}
&\frac{(1-z_i)(1-q z_j)}{1-(q+1)z_j+q z_i z_j}\\
\hline
\end{array}\]
\caption{Boltzmann weights for colored stochastic $\Gamma-\Gamma$ vertex with spectral parameters $z_i$ and $z_j$: where $\alpha<\beta$}
\label{RmatrixC1}
\end{figure}

\begin{figure}[h]
\[\begin{array}{|c|c|c|c|c|c|}
\hline
\begin{tikzpicture}[scale=0.7]
\draw (0,0) to [out = 0, in = 180] (2,2);
\draw (0,2) to [out = 0, in = 180] (2,0);
\draw[fill=white] (0,0) circle (.35);
\draw[fill=white] (0,2) circle (.35);
\draw[fill=white] (2,0) circle (.35);
\draw[fill=white] (2,2) circle (.35);
\node at (0,0) {$c_{\alpha}$};
\node at (0,2) {$c_{\alpha}$};
\node at (2,2) {$c_{\alpha}$};
\node at (2,0) {$c_{\alpha}$};
\path[fill=white] (1,1) circle (.3);
\node at (1,1) {$R_{z_i,z_j}$};
\end{tikzpicture}&
\begin{tikzpicture}[scale=0.7]
\draw (0,0) to [out = 0, in = 180] (2,2);
\draw (0,2) to [out = 0, in = 180] (2,0);
\draw[fill=white] (0,0) circle (.35);
\draw[fill=white] (0,2) circle (.35);
\draw[fill=white] (2,0) circle (.35);
\draw[fill=white] (2,2) circle (.35);
\node at (0,0) {$c_{\beta}$};
\node at (0,2) {$c_{\beta}$};
\node at (2,2) {$c_{\beta}$};
\node at (2,0) {$c_{\beta}$};
\path[fill=white] (1,1) circle (.3);
\node at (1,1) {$R_{z_i,z_j}$};
\end{tikzpicture}&
\begin{tikzpicture}[scale=0.7]
\draw (0,0) to [out = 0, in = 180] (2,2);
\draw (0,2) to [out = 0, in = 180] (2,0);
\draw[fill=white] (0,0) circle (.35);
\draw[fill=white] (0,2) circle (.35);
\draw[fill=white] (2,0) circle (.35);
\draw[fill=white] (2,2) circle (.35);
\node at (0,0) {$c_{\alpha}$};
\node at (0,2) {$c_{\beta}$};
\node at (2,2) {$c_{\alpha}$};
\node at (2,0) {$c_{\beta}$};
\path[fill=white] (1,1) circle (.3);
\node at (1,1) {$R_{z_i,z_j}$};
\end{tikzpicture}&
\begin{tikzpicture}[scale=0.7]
\draw (0,0) to [out = 0, in = 180] (2,2);
\draw (0,2) to [out = 0, in = 180] (2,0);
\draw[fill=white] (0,0) circle (.35);
\draw[fill=white] (0,2) circle (.35);
\draw[fill=white] (2,0) circle (.35);
\draw[fill=white] (2,2) circle (.35);
\node at (0,0) {$c_{\beta}$};
\node at (0,2) {$c_{\alpha}$};
\node at (2,2) {$c_{\beta}$};
\node at (2,0) {$c_{\alpha}$};
\path[fill=white] (1,1) circle (.3);
\node at (1,1) {$R_{z_i,z_j}$};
\end{tikzpicture}&
\begin{tikzpicture}[scale=0.7]
\draw (0,0) to [out = 0, in = 180] (2,2);
\draw (0,2) to [out = 0, in = 180] (2,0);
\draw[fill=white] (0,0) circle (.35);
\draw[fill=white] (0,2) circle (.35);
\draw[fill=white] (2,0) circle (.35);
\draw[fill=white] (2,2) circle (.35);
\node at (0,0) {$c_{\beta}$};
\node at (0,2) {$c_{\beta}$};
\node at (2,2) {$c_{\alpha}$};
\node at (2,0) {$c_{\alpha}$};
\path[fill=white] (1,1) circle (.3);
\node at (1,1) {$R_{z_i,z_j}$};
\end{tikzpicture}&
\begin{tikzpicture}[scale=0.7]
\draw (0,0) to [out = 0, in = 180] (2,2);
\draw (0,2) to [out = 0, in = 180] (2,0);
\draw[fill=white] (0,0) circle (.35);
\draw[fill=white] (0,2) circle (.35);
\draw[fill=white] (2,0) circle (.35);
\draw[fill=white] (2,2) circle (.35);
\node at (0,0) {$c_{\alpha}$};
\node at (0,2) {$c_{\alpha}$};
\node at (2,2) {$c_{\beta}$};
\node at (2,0) {$c_{\beta}$};
\path[fill=white] (1,1) circle (.3);
\node at (1,1) {$R_{z_i,z_j}$};
\end{tikzpicture}\\
\hline
1&1
&\frac{z_i'+q z_j-(q+1)z_i'z_j}{1-z_i'z_j}
&\frac{q^{-1}z_i'+z_j-(1+q^{-1})z_i'z_j}{1-z_i'z_j}
&\frac{(1-z_i')(1-q z_j)}{1-z_i'z_j}
&\frac{(1-q^{-1}z_i')(1-z_j)}{1-z_i'z_j}\\
\hline
\end{array}\]
\caption{Boltzmann weights for colored stochastic $\Delta-\Gamma$ vertex with spectral parameters $z_i$ and $z_j$: where $\alpha<\beta$}
\label{RmatrixC2}
\end{figure}

\begin{figure}[h]
\[\begin{array}{|c|c|c|c|c|c|}
\hline
\begin{tikzpicture}[scale=0.7]
\draw (0,0) to [out = 0, in = 180] (2,2);
\draw (0,2) to [out = 0, in = 180] (2,0);
\draw[fill=white] (0,0) circle (.35);
\draw[fill=white] (0,2) circle (.35);
\draw[fill=white] (2,0) circle (.35);
\draw[fill=white] (2,2) circle (.35);
\node at (0,0) {$c_{\alpha}$};
\node at (0,2) {$c_{\alpha}$};
\node at (2,2) {$c_{\alpha}$};
\node at (2,0) {$c_{\alpha}$};
\path[fill=white] (1,1) circle (.3);
\node at (1,1) {$R_{z_i,z_j}$};
\end{tikzpicture}&
\begin{tikzpicture}[scale=0.7]
\draw (0,0) to [out = 0, in = 180] (2,2);
\draw (0,2) to [out = 0, in = 180] (2,0);
\draw[fill=white] (0,0) circle (.35);
\draw[fill=white] (0,2) circle (.35);
\draw[fill=white] (2,0) circle (.35);
\draw[fill=white] (2,2) circle (.35);
\node at (0,0) {$c_{\beta}$};
\node at (0,2) {$c_{\beta}$};
\node at (2,2) {$c_{\beta}$};
\node at (2,0) {$c_{\beta}$};
\path[fill=white] (1,1) circle (.3);
\node at (1,1) {$R_{z_i,z_j}$};
\end{tikzpicture}&
\begin{tikzpicture}[scale=0.7]
\draw (0,0) to [out = 0, in = 180] (2,2);
\draw (0,2) to [out = 0, in = 180] (2,0);
\draw[fill=white] (0,0) circle (.35);
\draw[fill=white] (0,2) circle (.35);
\draw[fill=white] (2,0) circle (.35);
\draw[fill=white] (2,2) circle (.35);
\node at (0,0) {$c_{\alpha}$};
\node at (0,2) {$c_{\beta}$};
\node at (2,2) {$c_{\alpha}$};
\node at (2,0) {$c_{\beta}$};
\path[fill=white] (1,1) circle (.3);
\node at (1,1) {$R_{z_i,z_j}$};
\end{tikzpicture}&
\begin{tikzpicture}[scale=0.7]
\draw (0,0) to [out = 0, in = 180] (2,2);
\draw (0,2) to [out = 0, in = 180] (2,0);
\draw[fill=white] (0,0) circle (.35);
\draw[fill=white] (0,2) circle (.35);
\draw[fill=white] (2,0) circle (.35);
\draw[fill=white] (2,2) circle (.35);
\node at (0,0) {$c_{\beta}$};
\node at (0,2) {$c_{\alpha}$};
\node at (2,2) {$c_{\beta}$};
\node at (2,0) {$c_{\alpha}$};
\path[fill=white] (1,1) circle (.3);
\node at (1,1) {$R_{z_i,z_j}$};
\end{tikzpicture}&
\begin{tikzpicture}[scale=0.7]
\draw (0,0) to [out = 0, in = 180] (2,2);
\draw (0,2) to [out = 0, in = 180] (2,0);
\draw[fill=white] (0,0) circle (.35);
\draw[fill=white] (0,2) circle (.35);
\draw[fill=white] (2,0) circle (.35);
\draw[fill=white] (2,2) circle (.35);
\node at (0,0) {$c_{\beta}$};
\node at (0,2) {$c_{\alpha}$};
\node at (2,2) {$c_{\alpha}$};
\node at (2,0) {$c_{\beta}$};
\path[fill=white] (1,1) circle (.3);
\node at (1,1) {$R_{z_i,z_j}$};
\end{tikzpicture}&
\begin{tikzpicture}[scale=0.7]
\draw (0,0) to [out = 0, in = 180] (2,2);
\draw (0,2) to [out = 0, in = 180] (2,0);
\draw[fill=white] (0,0) circle (.35);
\draw[fill=white] (0,2) circle (.35);
\draw[fill=white] (2,0) circle (.35);
\draw[fill=white] (2,2) circle (.35);
\node at (0,0) {$c_{\alpha}$};
\node at (0,2) {$c_{\beta}$};
\node at (2,2) {$c_{\beta}$};
\node at (2,0) {$c_{\alpha}$};
\path[fill=white] (1,1) circle (.3);
\node at (1,1) {$R_{z_i,z_j}$};
\end{tikzpicture}\\
\hline
1&1
&\frac{z_j'-z_i'}{q-(q+1)z_i'+z_i'z_j'}
&\frac{q(z_j'-z_i')}{q-(q+1)z_i'+z_i'z_j'}
&\frac{(1-z_i')(q-z_j')}{q-(q+1)z_i'+z_i'z_j'}
&\frac{(1-z_j')(q-z_i')}{q-(q+1)z_i'+z_i'z_j'}\\
\hline
\end{array}\]
\caption{Boltzmann weights for colored stochastic $\Delta-\Delta$ vertex with spectral parameters $z_i$ and $z_j$: where $\alpha<\beta$}
\label{RmatrixC3}
\end{figure}

The following theorem gives the three sets of Yang-Baxter equations for the colored stochastic symplectic ice.

\begin{theorem}\label{YBE}
For any $(X,Y)\in\{(\Delta,\Gamma),(\Gamma,\Gamma),(\Delta,\Delta)\}$ the following holds. Assume that $S$ is colored stochastic $X$ vertex with spectral parameter $z_i$, $T$ is colored stochastic $Y$ vertex with spectral parameter $z_j$, and $R$ is colored stochastic $X-Y$ vertex with spectral parameters $z_i,z_j$. Then the partition functions of the following two configurations are equal for any fixed combination of colors $a,b,c,d,e,f\in \{c_{\bar{n}},\cdots,c_0,\cdots,c_n\}$.
\begin{equation}
\hfill
\begin{tikzpicture}[baseline=(current bounding box.center)]
  \draw (0,1) to [out = 0, in = 180] (2,3) to (4,3);
  \draw (0,3) to [out = 0, in = 180] (2,1) to (4,1);
  \draw (3,0) to (3,4);
  \draw[fill=white] (0,1) circle (.3);
  \draw[fill=white] (0,3) circle (.3);
  \draw[fill=white] (3,4) circle (.3);
  \draw[fill=white] (4,3) circle (.3);
  \draw[fill=white] (4,1) circle (.3);
  \draw[fill=white] (3,0) circle (.3);
  \draw[fill=white] (2,3) circle (.3);
  \draw[fill=white] (2,1) circle (.3);
  \draw[fill=white] (3,2) circle (.3);
  \node at (0,1) {$a$};
  \node at (0,3) {$b$};
  \node at (3,4) {$c$};
  \node at (4,3) {$d$};
  \node at (4,1) {$e$};
  \node at (3,0) {$f$};
  \node at (2,3) {$g$};
  \node at (3,2) {$h$};
  \node at (2,1) {$i$};
\filldraw[black] (3,3) circle (2pt);
\node at (3,3) [anchor=south west] {$S$};
\filldraw[black] (3,1) circle (2pt);
\node at (3,1) [anchor=north west] {$T$};
\filldraw[black] (1,2) circle (2pt);
\node at (1,2) [anchor=west] {$R$};
\end{tikzpicture}\qquad\qquad
\begin{tikzpicture}[baseline=(current bounding box.center)]
  \draw (0,1) to (2,1) to [out = 0, in = 180] (4,3);
  \draw (0,3) to (2,3) to [out = 0, in = 180] (4,1);
  \draw (1,0) to (1,4);
  \draw[fill=white] (0,1) circle (.3);
  \draw[fill=white] (0,3) circle (.3);
  \draw[fill=white] (1,4) circle (.3);
  \draw[fill=white] (4,3) circle (.3);
  \draw[fill=white] (4,1) circle (.3);
  \draw[fill=white] (1,0) circle (.3);
  \draw[fill=white] (2,3) circle (.3);
  \draw[fill=white] (1,2) circle (.3);
  \draw[fill=white] (2,1) circle (.3);
  \node at (0,1) {$a$};
  \node at (0,3) {$b$};
  \node at (1,4) {$c$};
  \node at (4,3) {$d$};
  \node at (4,1) {$e$};
  \node at (1,0) {$f$};
  \node at (2,3) {$j$};
  \node at (1,2) {$k$};
  \node at (2,1) {$l$};
\filldraw[black] (1,3) circle (2pt);
\node at (1,3) [anchor=south west] {$T$};
\filldraw[black] (1,1) circle (2pt);
\node at (1,1) [anchor=north west]{$S$};
\filldraw[black] (3,2) circle (2pt);
\node at (3,2) [anchor=west] {$R$};
\end{tikzpicture}
\end{equation}
\end{theorem}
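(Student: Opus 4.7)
The plan is to mirror the proof of Theorem \ref{YBE1}, using a color-conservation argument to reduce the identity (whose boundary data $a,b,c,d,e,f$ range over the $(2n{+}1)$-element color set $\{c_{\overline n},\ldots,c_n\}$) to a finite collection of checks, each of which either follows from Theorem \ref{YBE1} or can be verified by a short direct computation.

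The first step is to record a color-conservation property, visible by inspection of Figures \ref{Gamma}--\ref{Delta} and \ref{RmatrixC1}--\ref{RmatrixC3}: at every colored $\Gamma$, $\Delta$, and colored $X$-$Y$ R-vertex, the unordered pair of colors on the two input edges equals the unordered pair on the two output edges. Propagating this through the three vertices of either hexagon, one sees that both partition functions vanish unless $\{a,b,c\}=\{d,e,f\}$ as multisets, and that in any admissible configuration the internal edges $g,h,i$ (resp.\ $j,k,l$) can only carry colors drawn from this common multiset. In particular at most three distinct colors appear anywhere, so the identity reduces to the case when $a,b,c,d,e,f$ take values in a fixed three-element subset of $\{c_{\overline n},\ldots,c_n\}$.

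Next I would dispose of the at-most-two-color case (which includes the trivial single-color case): if only two colors $c_\alpha<c_\beta$ occur, then comparing Figures \ref{Gamma}--\ref{Delta} with Figures \ref{Figure1}--\ref{Figure2}, and Figures \ref{RmatrixC1}--\ref{RmatrixC3} with Figures \ref{Rmatrix1}--\ref{Rmatrix3}, the relabeling $c_\alpha \mapsto -$, $c_\beta \mapsto +$ carries the colored weights onto the uncolored ones verbatim (for the same ordinary-vertex type and the same R-matrix type). Hence these instances of the colored Yang-Baxter equation follow at once from Theorem \ref{YBE1}, for each of the three choices $(X,Y)\in\{(\Delta,\Gamma),(\Gamma,\Gamma),(\Delta,\Delta)\}$.

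What remains is the genuinely three-color case, in which $(a,b,c)$ and $(d,e,f)$ are each a permutation of a common triple $c_\alpha<c_\beta<c_\gamma$. Since the weights depend on the colors only through their relative order, we may fix representatives such as $c_{\overline 1}<c_1<c_2$ once and for all, leaving $6\times 6=36$ boundary configurations for each of the three $(X,Y)$-pairs, i.e.\ $108$ polynomial identities in $z_i,z_j,q$. Each identity is a finite sum, indexed by the admissible values of the three internal colors on $g,h,i$ (resp.\ $j,k,l$), of products of six Boltzmann weights read off from Figures \ref{Gamma}--\ref{Delta} and \ref{RmatrixC1}--\ref{RmatrixC3}, and so is amenable to the same SAGE verification used in Theorem \ref{YBE1}. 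The main obstacle I anticipate is purely bookkeeping: when three distinct colors are in play, each boundary configuration typically admits several admissible internal states (the ``braiding'' phenomenon familiar from colored vertex models), so the relevant summations are noticeably longer than in the uncolored setting and the SAGE computation becomes the substantive part of the proof.
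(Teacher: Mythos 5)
Your strategy is the same as the paper's: use conservation of colors to bound the number of distinct colors that can occur, invoke the fact that the Boltzmann weights depend only on the relative order of the colors, and reduce to a finite verification carried out in SAGE. The paper does exactly this, except that it checks all $4^6$ boundary tuples over a four-color palette rather than performing your two preliminary reductions. Your extra reduction of the two-color case to Theorem \ref{YBE1} is a valid observation (modulo one sign: comparing the $\mathtt{c}_1$ patterns in Figures \ref{Gamma} and \ref{Figure1}, the dictionary must send the \emph{smaller} color to $+$ and the \emph{larger} to $-$, the reverse of what you wrote).

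There is, however, one concrete slip in the conservation bookkeeping that would corrupt the enumeration if implemented literally. Because a $\Delta$-vertex has its inputs on the right and top (not left and top), the per-vertex conservation law pairs the four adjacent edges differently for $\Gamma$ and $\Delta$ vertices, and likewise the three R-matrices conserve different pairs ($\{$left$\}=\{$right$\}$ for $\Gamma$--$\Gamma$, but $\{$top-left, top-right$\}=\{$bottom-left, bottom-right$\}$ for $\Delta$--$\Gamma$). Propagating these through the hexagon gives $\{a,b,c\}=\{d,e,f\}$ only in the $(\Gamma,\Gamma)$ case; for $(\Delta,\Gamma)$ one gets $\{b,c,d\}=\{a,e,f\}$, and for $(\Delta,\Delta)$ one gets $\{c,d,e\}=\{a,b,f\}$. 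In all three cases the conclusions you actually need survive (each color occurs an even number of times among the six boundary edges, hence at most three distinct colors appear, and inner edges carry only boundary colors), so a three-color palette does suffice; but your list of ``$6\times 6$ boundary configurations with $(a,b,c)$ and $(d,e,f)$ permutations of a common triple'' is the wrong set of potentially nonvanishing configurations for two of the three $(X,Y)$ pairs. The safe fix, and effectively what the paper does, is to check all boundary tuples over the fixed small palette rather than pre-filtering by a conservation identity.
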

\begin{proof}
From conservation of colors for both colored stochastic $\Gamma$ vertices and colored stochastic $\Delta$ vertices (note that the directions of input and output are different for these two types of vertices), it can be checked that at most four distinct colors (including $c_0$) can appear on the boundary edges in any of the two configurations, and that the color on an inner edge must be one of the colors on boundary edges (only considering admissible configurations). Moreover, the Boltzmann weight of a vertex only depends on the relative order of the colors on its adjacent four edges. Therefore it suffices to check the result for four colors, and there are at most $4^6$ possible combinations of boundary colors. These identities are checked using a SAGE program.
\end{proof}

\subsection{Evaluation of the partition function when $\sigma(i)=\overline{\tau(i)}$}\label{Sect.3.2.5}

When $\sigma(i)=\overline{\tau(i)}$ for every $1\leq i\leq n$, the partition function $Z(\mathcal{S}_{n,L,\lambda,\sigma,\tau,z})$ has a relatively simple form, as is shown in the following theorem.

\begin{theorem}\label{ThmB}
If $\sigma$ and $\tau$ satisfy the condition that $\sigma(i)=\overline{\tau(i)}$ for every $1\leq i\leq n$, then we have
\begin{eqnarray}
    Z(\mathcal{S}_{n,L,\lambda,\sigma,\tau,z})&=&\prod_{i=1}^nz_i^L \prod_{i=1}^n (\frac{z_i'}{q})^{\lambda_i+n-i}\prod_{i=1}^n(1-q^{-1_{\sigma(i)<0}}z_i') \nonumber \\
    &\times& q^{\sum_{i=1}^n(L-n+i+\lambda_i)1_{\sigma(i)>0}+\sum_{1\leq i<j\leq n}(1_{\overline{\sigma(j)}<\sigma(i)}+1_{\sigma(j)<\sigma(i)})}.
\end{eqnarray}
In particular, if $\sigma(i)=\overline{i}$ and $\tau(i)=i$ for every $1\leq i\leq n$, then
\begin{equation}
    Z(\mathcal{S}_{n,L,\lambda,\sigma,\tau,z})=\prod_{i=1}^nz_i^L \prod_{i=1}^n (\frac{z_i'}{q})^{\lambda_i+n-i}\prod_{i=1}^n(1-\frac{z_i'}{q}) q^{\frac{n(n-1)}{2}}.
\end{equation}
\end{theorem}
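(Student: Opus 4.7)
My plan is to first show that, under the hypothesis $\sigma(i)=\overline{\tau(i)}$, there is a unique admissible configuration, and then to compute its Boltzmann weight directly. For uniqueness, from Figure~\ref{ColorCap} each cap carries at most one non-$+$ color, bounding the total number of cap passages by $n$. On the other hand, the entering color set $\{c_{\sigma(i)}\}$ on the left and the exiting color set $\{c_{\tau(i)}\}$ on the bottom are disjoint: $\tau([n])\subset[\pm n]$ contains no conjugate pair $\{k,\bar k\}$ because $\tau\in B_n$ satisfies $\tau(\bar j)=\overline{\tau(j)}$, and $\sigma(i)=\overline{\tau(i)}$. Hence each entering color must undergo an odd number of sign flips before reaching the bottom, so each of the $n$ colors uses exactly one cap. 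Because $c_{\sigma(i)}$ can only reach caps $1,\dots,i$ by downward monotonicity of the flow, an inductive argument on $i$ forces $c_{\sigma(i)}$ to pass through cap $i$. A further row-by-row induction then shows that the horizontal edges in $\Gamma$ row $j$ are saturated by the single color $c_{\sigma(j)}$ and the leftward horizontal segment of $\Delta$ row $j$ by $c_{\tau(j)}$, which blocks any $c_1,c_2$ (resp.\ $d_1,d_2$) swap pattern for other colors and pins down the unique path: particle $i$ enters left of $\Gamma$ row $i$, traverses it rightward to cap $i$, flips to $c_{\tau(i)}$, traverses $\Delta$ row $i$ leftward to column $p_i:=\lambda_i+n+1-i$, turns down, and descends vertically through all rows below at column $p_i$.

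Given this unique configuration, I would compute its weight row by row. In $\Gamma$ row $i$, particle $i$ is horizontal at every column; the $n-i$ columns $p_j$ for $j>i$ carry a vertical edge with color $c_{\overline{\sigma(j)}}$ from particle $j$'s descending column, while the other $L-(n-i)$ columns have vertical $+$. Matching to patterns $b_1$ (weight $z_i$) and $b_2$ (weight $qz_i$) of Figure~\ref{Gamma} gives
\begin{equation*}
Z(\Gamma\text{ row }i)=z_i^{L}\,q^{(L-n+i)\,1_{\sigma(i)>0}+\sum_{j>i}1_{\overline{\sigma(j)}<\sigma(i)}}.
\end{equation*}
Analogously, $\Delta$ row $i$ has $\lambda_i$ plain horizontal vertices, $n-i$ crossings at $p_j$ ($j>i$), one turning vertex at $p_i$ of type $d_1$ or $d_2$ with weight $1-q^{-1_{\sigma(i)<0}}z_i'$, and $L-p_i$ trivial $a_1$ vertices. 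Using $\tau(i)>0\iff\sigma(i)<0$ for the plain and turning weights and $\tau(i)>\tau(j)\iff\sigma(i)<\sigma(j)$ for the crossings, this yields
\begin{equation*}
Z(\Delta\text{ row }i)=(z_i')^{\lambda_i+n-i}\bigl(1-q^{-1_{\sigma(i)<0}}z_i'\bigr)\,q^{-\lambda_i\,1_{\sigma(i)<0}-\sum_{j>i}1_{\sigma(j)>\sigma(i)}}.
\end{equation*}

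Multiplying over $i$ (all cap weights being $1$) and pulling $(z_i'/q)^{\lambda_i+n-i}$ out of $(z_i')^{\lambda_i+n-i}$, the total $q$-exponent becomes
\begin{equation*}
\sum_i(L-n+i+\lambda_i)\,1_{\sigma(i)>0}+\sum_{i<j}1_{\overline{\sigma(j)}<\sigma(i)}+\binom{n}{2}-\sum_{i<j}1_{\sigma(j)>\sigma(i)},
\end{equation*}
after applying $1-1_{\sigma(i)<0}=1_{\sigma(i)>0}$ and $\sum_i(n-i)=\binom{n}{2}$. The identity $\binom{n}{2}-\sum_{i<j}1_{\sigma(j)>\sigma(i)}=\sum_{i<j}1_{\sigma(j)<\sigma(i)}$ then recovers the exponent stated in the theorem. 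In the specialization $\sigma(i)=\bar i,\ \tau(i)=i$, the first two sums vanish (since $1_{\sigma(i)>0}=0$ and $1_{j<\bar i}=0$), while $1_{\sigma(j)<\sigma(i)}=1_{\bar j<\bar i}=1$ for every $i<j$, producing $q^{n(n-1)/2}$ together with the factor $\prod_i(1-z_i'/q)$ from the turning vertices.

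The main obstacle will be the uniqueness step. The cap assignment is pinned down immediately by the pigeonhole count, but ruling out swap configurations in which two differently colored particles exchange paths via $c_1,c_2$ (for $\Gamma$) or $d_1,d_2$ (for $\Delta$) vertices requires a careful row-by-row argument exploiting the fact that each row's horizontal edges must be saturated by a single particle of fixed color.
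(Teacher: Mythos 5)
Your proposal is correct and follows essentially the same route as the paper: establish that the colored-path/cap-conservation constraints force a unique admissible state (each path traversing its $\Gamma$ row to its own cap, flipping sign, moving left to column $\lambda_i+n+1-i$, and descending), then evaluate that state's Boltzmann weight. Your pigeonhole-on-caps and row-saturation organization of the uniqueness step, and your explicit row-by-row weight bookkeeping (which the paper leaves implicit), check out against the stated exponents.
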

\begin{proof}
We use the colored path interpretation. An illustration of the proof is shown in Figure \ref{C}, where we assume that $c_1=B,c_2=R,c_{\overline{1}}=\overline{B},c_{\overline{2}}=\overline{R}$. We say that $c_{i}$ and $c_{\overline{i}}$ are of the same color type for $1\leq i\leq n$ (and $c_0$ itself forms a color type). The collection of the particles with the same color type are viewed as a colored path. Each path enters from the left boundary, moves rightward or downward on each row of colored stochastic $\Gamma$ vertex, and moves leftward or downward on each row of colored stochastic $\Delta$ vertex. When the path enters the cap on a row of colored stochastic $\Gamma$ vertex, it will bend through the cap, change the color to its opposite, and restart on the right-most vertex of the row of colored stochastic $\Delta$ vertex just below the previous row. Finally, the colored path leaves the rectangular lattice at the bottom boundary.

Consider the colored path entering from the $2$nd row, which has color $c_{\sigma(1)}$. In order for the path to leave the domain with an opposite color (which is required by the boundary condition, as $\tau(1)=\overline{\sigma(1)}$), it has to move rightward until it goes through the cap. Then the path changes its color to $c_{\tau(1)}$ and leaves the domain at the column labeled as $\lambda_1+n$.

Now consider the colored path entering from the $4$th row, which has color $c_{\sigma(2)}$. In order for the path to leave the domain with an opposite color, it has to move rightward until it goes through the cap (as the cap connecting the first two rows has already been taken by the colored path entering from the $2$nd row). Then note that $\lambda_1+n>\lambda_2+n-1$. In order for the path to leave at the column labeled $\lambda_2+n-1$, it has to move leftward after passing the cap until it reaches the column labeled $\lambda_2+n-1$. After that it moves downward until it leaves the domain.

The rest of the colored paths can be analyzed similarly. The colored path entering from the $(2i)$th row first moves rightward until it goes through the cap (and changes the color to its opposite), then it moves leftward until it reaches the column labeled $\lambda_i+n+1-i$, and finally it moves downward until it leaves the domain.

The above analysis shows that there is only one admissible state. Computing the Boltzmann weight of this state finishes the proof.
\end{proof}

\begin{figure}[h]
\centering
  \begin{tikzpicture}[scale=0.6]
    \draw (1,0)--(1,8);
    \draw (3,0)--(3,8);
    \draw (5,0)--(5,8);
    \draw (7,0)--(7,8);
    \draw (0,1)--(10,1);
    \draw (0,3)--(10,3);
    \draw (0,5)--(10,5);
    \draw (0,7)--(10,7);
    \draw (10,1) arc(-90:90:1);
    \draw (10,5) arc(-90:90:1);
    \draw[fill=white] (3,6) circle (.35);
    \draw[fill=white] (3,4) circle (.35);
    \draw[fill=white] (1,2) circle (.35);
    \draw[fill=white] (3,2) circle (.35);
    \draw[fill=white] (5,2) circle (.35);
    \draw[fill=white] (1,0) circle (.35);
    \draw[fill=white] (3,0) circle (.35);
    \draw[fill=white] (5,0) circle (.35);
    \draw[fill=white] (7,0) circle (.35);
    \draw[fill=white] (0,5) circle (.35);
    \draw[fill=white] (2,5) circle (.35);
    \draw[fill=white] (6,5) circle (.35);
    \draw[fill=white] (4,5) circle (.35);
    \draw[fill=white] (8,5) circle (.35);
    \draw[fill=white] (0,3) circle (.35);
    \draw[fill=white] (2,3) circle (.35);
    \draw[fill=white] (4,3) circle (.35);
    \draw[fill=white] (6,3) circle (.35);
    \draw[fill=white] (4,3) circle (.35);
    \draw[fill=white] (3,0) circle (.35);

    \node at (9,7) [anchor=south] {$\Gamma$};
    \node at (9.5,7) [anchor=south]
    {$z_2$};
    \node at (9,5) [anchor=south] {$\Delta$};
    \node at (9.5,5) [anchor=south]
    {$z_2$};
    \node at (9,3) [anchor=south] {$\Gamma$};
    \node at (9.5,3) [anchor=south]
    {$z_1$};
    \node at (9,1) [anchor=south] {$\Delta$};
    \node at (9.5,1) [anchor=south]
    {$z_1$};
    
    \draw[line width=0.5mm,red,fill=white] (0,7) circle (.35);
    \node at (0,7) {$\overline{R}$};
    \draw [line width=0.5mm,red] (0.35,7.00)--(1.65,7.00);
    \draw[line width=0.5mm,red,fill=white] (2,7) circle (.35);
    \node at (2,7) {$\overline{R}$};
    \draw [line width=0.5mm,red] (2.35,7.00)--(3.65,7.00);
    \draw[line width=0.5mm,red,fill=white] (4,7) circle (.35);
    \node at (4,7) {$\overline{R}$};
    \draw [line width=0.5mm,red] (4.35,7.00)--(5.65,7.00);
    \draw[line width=0.5mm,red,fill=white] (6,7) circle (.35);
    \node at (6,7) {$\overline{R}$};
    \draw [line width=0.5mm,red] (6.35,7.00)--(7.65,7.00);
    \draw[line width=0.5mm,red,fill=white] (8,7) circle (.35);
    \node at (8,7) {$\overline{R}$};
    \draw [line width=0.5mm,red] (8.35,7.00)--(10,7.00);
    \draw [line width=0.5mm,red] (10,5) arc(-90:90:1);
    \draw [line width=0.5mm,red] (8.35,5.00)--(10,5.00);
    \draw[line width=0.5mm,red,fill=white] (8,5) circle (.35);
    \draw[line width=0.5mm,red,fill=white] (7,4) circle (.35);
    \draw[line width=0.5mm,blue,fill=white] (8,3) circle (.35);
    \node at (8,5) {$R$};
    \draw [line width=0.5mm,red] (7,4.35) to (7,5) to (7.65,5.00);
    \draw [line width=0.5mm,red] (7,3.65) to (7,2.35);
    \draw[line width=0.5mm,red,fill=white] (7,2) circle (.35);
    \node at (7,2) {$R$};
    \draw[line width=0.5mm,red] (7,1.65) to (7,0.35);
    \draw[line width=0.5mm,red,fill=white] (7,0) circle (.35);
    \node at (7,0) {$R$};
    
    \draw [line width=0.5mm,blue] (0.35,3) to (1.65,3);
    \draw [line width=0.5mm,blue] (4.35,3) to (5.65,3);
    \draw [line width=0.5mm,blue] (2.35,3) to (3.65,3);
    \draw [line width=0.5mm,blue] (6.35,3) to (7.65,3);
    \draw [line width=0.5mm,blue] (8.35,1.00)--(10,1.00);
    \draw [line width=0.5mm,blue] (8.35,3.00)--(10,3.00);
    \draw [line width=0.5mm,blue] (10,1) arc(-90:90:1);
    \draw [line width=0.5mm,blue] (6.35,1.00)--(7.65,1.00);
    \draw[line width=0.5mm,blue,fill=white] (2,3) circle (.35);
    \draw[line width=0.5mm,blue,fill=white] (0,3) circle (.35);
    \draw[line width=0.5mm,blue,fill=white] (4,3) circle (.35);
    \draw[line width=0.5mm,blue,fill=white] (6,3) circle (.35);
    \draw[line width=0.5mm,blue,fill=white] (8,3) circle (.35);
    \draw[line width=0.5mm,blue,fill=white] (6,1) circle (.35);
    \draw[line width=0.5mm,blue,fill=white] (4,1) circle (.35);
    \draw[line width=0.5mm,blue,fill=white] (3,0) circle (.35);
    \draw [line width=0.5mm,blue] (4.35,1) to (5.65,1);
    \draw[line width=0.5mm, blue] (3,0.35) to (3,1) to (3.65,1.00);
    
    \draw[fill=white] (1,8) circle (.35);
    \draw[fill=white] (3,8) circle (.35);
    \draw[fill=white] (5,8) circle (.35);
    \draw[fill=white] (7,8) circle (.35);
    \draw[fill=white] (1,4) circle (.35);
    \draw[fill=white] (1,6) circle (.35);
    \draw[fill=white] (7,6) circle (.35);
    \draw[fill=white] (5,4) circle (.35);
    \draw[fill=white] (5,6) circle (.35);
    \draw[fill=white] (0,1) circle (.35);
    \draw[fill=white] (2,1) circle (.35);
    \draw[fill=white] (4,1) circle (.35);
    \draw[fill=white] (6,1) circle (.35);
    \draw[line width=0.5mm,blue,fill=white] (8,1) circle (.35);
    \node at (1,8) {$+$};
    \node at (3,8) {$+$};
    \node at (5,8) {$+$};
    \node at (7,8) {$+$};

    \node at (1,6) {$+$};
    \node at (3,6) {$+$};
    \node at (5,6) {$+$};
    \node at (7,6) {$+$};
    \node at (1,4) {$+$};
    \node at (3,4) {$+$};
    \node at (5,4) {$+$};
    \node at (7,4) {$R$};
    \node at (1,2) {$+$};
    \node at (3,2) {$+$};
    \node at (5,2) {$+$};
    \node at (7,2) {$R$};
    \node at (1,0) {$+$};
    \node at (3,0) {$B$};
    \node at (5,0) {$+$};
    \node at (7,0) {$R$};
    \node at (0,5) {$+$};
    \node at (2,5) {$+$};
    \node at (4,5) {$+$};
    \node at (6,5) {$+$};
    \node at (1,6) {$+$};
    \node at (0,3) {$\overline{B}$};
    \node at (2,3) {$\overline{B}$};
    \node at (4,3) {$\overline{B}$};
    \node at (6,3) {$\overline{B}$};
    \node at (8,3) {$\overline{B}$};
    \node at (0,1) {$+$};
    \node at (2,1) {$+$};
    \node at (4,1) {$B$};
    \node at (6,1) {$B$};
    \node at (8,1) {$B$};
    \node at (1.00,8.75) {$ 4$};
    \node at (3.00,8.75) {$ 3$};
    \node at (5.00,8.75) {$ 2$};
    \node at (7.00,8.75) {$ 1$};
    \node at (-.75,7) {$ 4$};
    \node at (-.75,5) {$ 3$};
    \node at (-.75,3) {$ 2$};
    \node at (-.75,1) {$ 1$};
    \node at (0,-.50) {$\quad$};
  \end{tikzpicture}
  \caption{Illustration of the proof of Theorem \ref{ThmB}}
 \label{C}
\end{figure}

\subsection{The reflection equation}\label{Sect.3.2.7}

Due to the lack of the R-matrix and the Yang-Baxter equation for colored stochastic $\Gamma$ vertex and colored stochastic $\Delta$ vertex, we cannot use the caduceus relation as in the uncolored model. However, another set of relations, the reflection equation, provides an alternative way to derive recursive relations of the partition function. The following theorem gives the reflection equation.

\begin{theorem}\label{Refl}
Assume that $S$ is colored stochastic $\Gamma-\Gamma$ vertex of spectral parameters $z_i,z_j$, $T$ is colored stochastic $\Delta-\Gamma$ vertex of spectral parameters $z_i,z_j$, $S'$ is colored stochastic $\Delta-\Delta$ vertex of spectral parameters $z_j,z_i$, and $T'$ is colored stochastic $\Delta-\Gamma$ vertex of spectral parameters $z_j,z_i$. Denote by $Z(I_5(\epsilon_1,\epsilon_2,\epsilon_3,\epsilon_4))$ the partition function of the following configuration with fixed combination of colors $\epsilon_1,\epsilon_2,\epsilon_3,\epsilon_4  \in\{c_{\bar{n}},\cdots,c_0,\cdots,c_n\}$.
\begin{equation}
\label{eqn:reflect1}
\hfill
I_5=
\begin{tikzpicture}[baseline=(current bounding box.center)]
  \draw (0,0) to (4,0);
  \draw (0,1) to (1,1) to [out=0, in=180] (4,2);
  \draw (0,2) to (1,2) to [out=0, in=180] (4,3);
  \draw (0,3) to (1,3) to [out=0, in=180] (4,1);
  \node at (0,0) [anchor=east] {$\epsilon_4$};
  \node at (0,1) [anchor=east] {$\epsilon_3$};
  \node at (0,2) [anchor=east] {$\epsilon_2$};
  \node at (0,3) [anchor=east] {$\epsilon_1$};
  \draw (4,2) arc(-90:90:0.5);
  \draw (4,0) arc(-90:90:0.5);
  \filldraw[black] (4.5,0.5) circle (2pt);
  \filldraw[black] (4.5,2.5) circle (2pt);
  \filldraw[black] (2.2,2.35) circle (2pt);
  \filldraw[black] (2.75,1.65) circle (2pt);
  \node at (2.25,2.5) [anchor=south] {$S$};
  \node at (2.8, 1.7) [anchor=south] {$T$};
\end{tikzpicture}
\end{equation}
Also denote by $Z(I_6(\epsilon_1,\epsilon_2,\epsilon_3,\epsilon_4))$ the partition function of the following configuration with fixed combination of colors $\epsilon_1,\epsilon_2,\epsilon_3,\epsilon_4 \in\{c_{\bar{n}},\cdots,c_0,\cdots,c_n\}$.
\begin{equation}
\label{eqn:reflect2}
\hfill
I_6=
\begin{tikzpicture}[baseline=(current bounding box.center)]
  \draw (0,3) to (4,3);
  \draw (0,1) to (1,1) to [out=0, in=180] (4,0);
  \draw (0,2) to (1,2) to [out=0, in=180] (4,1);
  \draw (0,0) to (1,0) to [out=0, in=180] (4,2);
  \node at (0,0) [anchor=east] {$\epsilon_4$};
  \node at (0,1) [anchor=east] {$\epsilon_3$};
  \node at (0,2) [anchor=east] {$\epsilon_2$};
  \node at (0,3) [anchor=east] {$\epsilon_1$};
  \draw (4,2) arc(-90:90:0.5);
  \draw (4,0) arc(-90:90:0.5);
  \filldraw[black] (4.5,0.5) circle (2pt);
  \filldraw[black] (4.5,2.5) circle (2pt);
  \filldraw[black] (2.2,0.67) circle (2pt);
  \filldraw[black] (2.8,1.35) circle (2pt);
  \node at (2.25,0.8) [anchor=south] {$S'$};
  \node at (2.8, 1.5) [anchor=south] {$T'$};
\end{tikzpicture}
\end{equation}
Then for any fixed combination of colors $\epsilon_1,\epsilon_2,\epsilon_3,\epsilon_4\in \{c_{\bar{n}}, \cdots, c_0,\cdots, c_n\}$, we have
\begin{equation}
    Z(I_5(\epsilon_1,\epsilon_2,\epsilon_3,\epsilon_4))=Z(I_6(\epsilon_1,\epsilon_2,\epsilon_3,\epsilon_4)).
\end{equation}
\end{theorem}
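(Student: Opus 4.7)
The plan is to mirror the verification strategy used for Theorems \ref{YBE1}, \ref{caduceus}, and \ref{YBE}: reduce the identity to a finite list of cases by exploiting color conservation and the fact that the Boltzmann weights depend only on the relative order of colors, then verify each case symbolically. The reflection equation is the standard ``boundary Yang--Baxter'' identity that complements the Yang--Baxter equations of Theorem \ref{YBE}, and here the K-matrix is simply the cap of Figure \ref{ColorCap}; with both ingredients fixed the identity becomes a concrete rational-function identity in $z_i, z_j, q$.

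The first reduction uses two observations. First, every admissible state of $I_5$ or $I_6$ has interior colors drawn from the set $\{c_0\}\cup\{\epsilon_1,\overline{\epsilon_1},\ldots,\epsilon_4,\overline{\epsilon_4}\}$, because color is conserved at each colored $\Gamma$- or $\Delta$-vertex and the caps only couple pairs of the form $(c_0,c_0)$ or $(c_\alpha,c_{\bar\alpha})$. Second, as in the proof of Theorem \ref{YBE}, the R-matrix weights in Figures \ref{RmatrixC1}--\ref{RmatrixC3} depend only on the relative order of their four adjacent colors. Together these imply it suffices to verify the identity on a finite set of representatives of boundary-color ``type classes'', where two tuples $(\epsilon_1,\epsilon_2,\epsilon_3,\epsilon_4)$ lie in the same class if there is an order-preserving bijection of their entries that also matches opposites and preserves which entries equal $c_0$.

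The second step is, for each representative class, to enumerate admissible internal colorings of $I_5$ and $I_6$ and to expand both partition functions as explicit finite sums of products of vertex and cap weights. The cap constraints at the two U-turns propagate through the two R-vertices on each side to pin down, for each boundary tuple, a small list of admissible colorings (in particular, at most $4^2$ choices of colors on the two internal edges incident to the two R-vertices, further cut down by the cap pairing rule). One then clears denominators and compares the two rational functions term-by-term; by the paper's established convention this comparison is handed off to a SAGE computation analogous to those used in Theorems \ref{YBE1}--\ref{caduceus}.

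The main obstacle is not any single algebraic identity but the bookkeeping: enumerating the boundary type-classes (distinguished by how many $\epsilon_i$ are $c_0$, which $\epsilon_i, \epsilon_j$ form an opposite pair $\{c_\alpha,c_{\bar\alpha}\}$, and the linear order on the remaining colors), and then within each class correctly tracking how the cap-induced pairings constrain the admissible internal colors through $S,T$ (respectively $S',T'$). The subtlest cases are those where some $\epsilon_i$ and $\epsilon_j$ are mutually opposite, since then both caps admit multiple non-trivial pairings and the $\Delta$--$\Gamma$ vertex weights of Figure \ref{RmatrixC2} contribute the more delicate $(1-z_i'z_j)^{-1}$ terms; these are exactly the cases that motivate passing the symbolic verification to SAGE rather than attempting a closed-form algebraic manipulation.
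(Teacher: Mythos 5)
Your proposal is correct and follows essentially the same route as the paper: reduce to finitely many cases via color conservation at the vertices and caps together with the fact that the R-matrix weights depend only on the relative order of the adjacent colors, then verify the resulting rational-function identities by a SAGE computation. The paper phrases the reduction as "at most two color types appear on the boundary, so it suffices to check the five colors $\{c_{\bar 2},c_{\bar 1},c_0,c_1,c_2\}$, giving at most $5^4$ boundary tuples," which is the same finite case analysis as your type-class enumeration.
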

\begin{proof}
We say that $c_{\alpha}$ and $c_{\beta}$ are of the same color type, if $\beta=\bar{\alpha}$ ($c_0$ itself forms a color type). From conservation of colors for the R-matrix and the cap weights, we can deduce that for any admissible state of $I_5$ or $I_6$, each color type must appear for an even number of times in $\{\epsilon_1,\epsilon_2,\epsilon_3,\epsilon_4\}$, and that the color type of an inner edge must be one of the color types of $\{\epsilon_1,\epsilon_2,\epsilon_3,\epsilon_4\}$. From this we can further deduce that at most two color types can appear in $\{\epsilon_1,\epsilon_2,\epsilon_3,\epsilon_4\}$ in any admissible state of $I_5$ or $I_6$. Moreover, we note that the Boltzmann weight for the R-matrix only depends on the relative order of colors on its four adjacent edges. Therefore it suffices to check the relation for five colors $\{c_{\bar{2}},c_{\bar{1}},c_0,c_1,c_2\}$. There are at most $5^4$ combinations of $(\epsilon_1,\epsilon_2,\epsilon_3,\epsilon_4)$ for this case. These identities have been checked using a SAGE program.
\end{proof}

\subsection{Recursive relations of the partition function}\label{Sect.3.3}

In this section, we derive recursive relations of the partition function. The recursive relations are further related to Demazure-Lusztig operators of type C in Section \ref{Sect.3.4}. The main results are Theorems \ref{Thm2C}-\ref{Thm3C}.

\begin{theorem}\label{Thm2C}
Assume that $1\leq i\leq n-1$ and $\sigma(i+1)>\sigma(i)$. Let $s_i$ be the transposition $(i,i+1)$ in $B_n$, and let $s_i z$ be the vector obtained from $z$ by interchanging $z_i,z_{i+1}$. Then the partition function of the colored stochastic symplectic ice satisfies the following recursive relation:
\begin{equation}
    q^{1_{\sigma(i+1)>0}-1_{\sigma(i)>0}}Z(\mathcal{S}_{n,L,\lambda,\sigma s_i,\tau,z})
   = -A(q,z,i)  Z(\mathcal{S}_{n,L,\lambda,\sigma,\tau,z})+B(q,z,i)Z(\mathcal{S}_{n,L,\lambda,\sigma,\tau,s_i z}),
\end{equation}
where 
\begin{equation}
A(q,z,i) =  \frac{(1-z_{i+1})(1-qz_i)}{z_{i+1}-z_i},
\end{equation}
and
\begin{equation}
 B(q,z,i) = \frac{1-(q+1)z_i+q z_i z_{i+1}}{z_{i+1}-z_i}.
\end{equation}
\end{theorem}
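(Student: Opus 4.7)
The plan is to follow the template of the proof of Proposition \ref{Transposition}, with the reflection equation (Theorem \ref{Refl}) playing the role that the caduceus relation played in the uncolored case. The first move is to attach a colored stochastic $\Gamma$-$\Gamma$ R-matrix $R$ of spectral parameters $(z_{i+1},z_i)$ to the left boundary of the two $\Gamma$ rows $2i$ and $2i+2$ of $\mathcal{S}_{n,L,\lambda,\sigma,\tau,z}$, feeding the original boundary colors $c_{\sigma(i)}$ and $c_{\sigma(i+1)}$ as its bottom-left and top-left inputs. Since $\sigma(i)<\sigma(i+1)$, only two patterns of $R$ are admissible: $b_1$, which sends the colors diagonally and effectively converts the main lattice's left boundary to that of $\mathcal{S}_{\sigma s_i,\tau,z}$, and $c_2$, which keeps the colors on their own sides and leaves $\mathcal{S}_{\sigma,\tau,z}$ unchanged. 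The denominator in the R-matrix Boltzmann weights factors as $(z_{i+1}-z_i)\,B(q,z,i)$, so the weights of $b_1$ and $c_2$ are $1/B(q,z,i)$ and $A(q,z,i)/B(q,z,i)$ respectively, and the partition function $Z_{\mathrm{att}}$ of the attached system satisfies
\begin{equation*}
B(q,z,i)\,Z_{\mathrm{att}} \;=\; Z(\mathcal{S}_{n,L,\lambda,\sigma s_i,\tau,z}) \;+\; A(q,z,i)\,Z(\mathcal{S}_{n,L,\lambda,\sigma,\tau,z}).
\end{equation*}

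The second move is to push $R$ column by column through the four rows $2i-1,\ldots,2i+2$ using the three Yang-Baxter equations of Theorem \ref{YBE}. Each step uses the $\Gamma$-$\Gamma$ Yang-Baxter equation to pass $R$ across a $\Gamma$ vertex in row $2i$ or $2i+2$; since $R$'s two lines occupy non-adjacent rows, the $\Delta$ vertices in the intervening rows $2i-1$ and $2i+1$ are unaffected. After $L$ such steps, $R$ sits between the main lattice and the two caps, and the spectral parameters of rows $2i$ and $2i+2$ have been interchanged, so the bulk is now that of $\mathcal{S}_{\sigma,\tau,s_iz}$. The third move is to dispose of the $R$-plus-caps block on the right using the reflection equation. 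I would augment this block by inserting an auxiliary colored $\Delta$-$\Gamma$ R-matrix of spectral parameters $(z_{i+1},z_i)$ between rows $2i+1$ and $2i+2$, whose forced admissible configurations (determined by the $+$ boundary input on the $\Delta$ row and the colors of the main-lattice outputs) contribute a known weight. The augmented block now matches the left side $I_5$ of the reflection equation (\ref{eqn:reflect1}); applying Theorem \ref{Refl} rewrites it as $I_6$ (\ref{eqn:reflect2}), whose $\Delta$-$\Delta$ and $\Delta$-$\Gamma$ R-matrices with swapped parameters can be fused back into the main lattice by a second round of Yang-Baxter moves and cancelled against the auxiliary vertex, giving an identification $Z_{\mathrm{att}}=q^{-\epsilon}\,Z(\mathcal{S}_{n,L,\lambda,\sigma,\tau,s_iz})$, where $\epsilon=1_{\sigma(i+1)>0}-1_{\sigma(i)>0}$. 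Substituting this into the identity from the first move and multiplying by $q^\epsilon$ then yields the claimed recursive relation.

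The main obstacle is the careful bookkeeping of the $q^\epsilon$ factor. This factor originates in the asymmetry between the $b_1/b_2$ and $c_1/c_2$ weights of the $\Gamma$-$\Gamma$ R-matrix (Figure \ref{RmatrixC1}), which differ by a factor of $q$ depending on which color is on top. The asymmetry is activated only when the main-lattice outputs feeding into the right-placed $R$ have their signs flipped by reflection off the caps: when $\sigma(i)$ and $\sigma(i+1)$ lie on the same side of zero the flip preserves their relative order and no $q$ emerges, but when they straddle zero the sign-flip inverts the relative order and produces the extra factor of $q$. Verifying that the insertion and extraction of the auxiliary $\Delta$-$\Gamma$ R-matrix is genuinely harmless (i.e.\ that its weight cancels cleanly against the backward Yang-Baxter moves) and that the reflection equation redistributes the $q$-factor into exactly the prefactor $q^{1_{\sigma(i+1)>0}-1_{\sigma(i)>0}}$ is the delicate piece of the argument.
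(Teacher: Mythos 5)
Your overall strategy (attach an R-matrix on the left, transport it through the lattice with Theorem \ref{YBE}, use the reflection equation \ref{Refl} at the caps, and read off the recursion from the two admissible configurations of a $\Gamma$--$\Gamma$ crossing fed by $c_{\sigma(i)}$ and $c_{\sigma(i+1)}$) is the right one, and your computation of the two weights $1/B$ and $A/B$ is correct. But there is a genuine gap in the second move. A single $\Gamma$--$\Gamma$ R-vertex joining rows $2i$ and $2i+2$ is a crossing of two \emph{non-adjacent} strands: the $\Delta$ row $2i+1$ runs between them, so the two crossing lines must each intersect that row's wire, and each column's vertex at row $2i+1$ sits inside the region the crossing has to sweep through. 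The Yang--Baxter equations of Theorem \ref{YBE} only move an elementary crossing of two lines past a single transversal vertex; they cannot transport your composite crossing, and the claim that ``the $\Delta$ vertices in the intervening rows are unaffected'' is exactly backwards. To make the transport legal you must resolve the move into a braid that explicitly crosses the intermediate $\Delta$ row, i.e.\ include $\Delta$--$\Delta$ and $\Delta$--$\Gamma$ R-vertices. This is not a cosmetic repair: the weights $R_{\Delta\Gamma}(+,c_{\sigma(i)},+,c_{\sigma(i)})$ and $R_{\Delta\Gamma}(+,c_{\sigma(i+1)},+,c_{\sigma(i+1)})$ from Figure \ref{RmatrixC2} differ by a factor of $q$ according to the sign of the color index ($b_1=q\,b_2$), and it is precisely these crossings with the $\Delta$ row that generate the prefactor $q^{1_{\sigma(i+1)>0}-1_{\sigma(i)>0}}$. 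Your ``auxiliary $\Delta$--$\Gamma$ R-matrix that cancels cleanly'' is therefore where all of the real content is hiding, and your explanation of the $q$-factor as coming from sign flips at the caps is not the actual mechanism.

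The bookkeeping also does not close as stated. If $B\,Z_{\mathrm{att}}=Z(\mathcal{S}_{\sigma s_i})+A\,Z(\mathcal{S}_{\sigma})$ and $Z_{\mathrm{att}}=q^{-\epsilon}Z(\mathcal{S}_{\sigma,\tau,s_iz})$, then
\begin{equation*}
q^{\epsilon}Z(\mathcal{S}_{n,L,\lambda,\sigma s_i,\tau,z})=-q^{\epsilon}A(q,z,i)\,Z(\mathcal{S}_{n,L,\lambda,\sigma,\tau,z})+B(q,z,i)\,Z(\mathcal{S}_{n,L,\lambda,\sigma,\tau,s_iz}),
\end{equation*}
which carries a spurious $q^{\epsilon}$ on the $A$-term. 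The factor of $q$ cannot be a single global scalar relating $Z_{\mathrm{att}}$ to $Z(\mathcal{S}_{\sigma,\tau,s_iz})$; it has to enter the two terms of the left-boundary decomposition asymmetrically, multiplying the $\sigma s_i$ term and the $\sigma$ term by $q^{1_{\sigma(i+1)>0}}$ and $q^{1_{\sigma(i)>0}}$ respectively. This happens automatically once the braid contains the $\Delta$--$\Gamma$ crossing of the $+$ strand with the colored strand, because that crossing accompanies each $\Gamma$--$\Gamma$ outcome with a different color and hence a different power of $q$. The cleanest fix is to run the argument in the opposite direction: start from $\mathcal{S}_{\sigma,\tau,s_iz}$ with a two-crossing braid ($\Delta$--$\Delta$ and $\Delta$--$\Gamma$) whose configuration is forced by the $+$ inputs, push it right, apply Theorem \ref{Refl} to convert it to the $(S,T)$ side, push back left, and only then expand into the two terms carrying the $\Gamma$--$\Gamma$ and $\Delta$--$\Gamma$ weights.
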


\begin{theorem}\label{Thm3C}
Assume that $\sigma(n)>0$. Let $s_n$ be the element of $B_n$ that changes the sign of the element at the $n$th position, and 
\begin{equation}
    s_n z=(z_1,\cdots,z_{n-1},\frac{1}{z_n'}).
\end{equation}
Then we have
\begin{equation}
    (\frac{q}{z_n})^LZ(\mathcal{S}_{n,L,\lambda,\sigma    s_n,\tau,z})=C(q,z)z_n^{-L}Z(\mathcal{S}_{n,L,\lambda,\sigma,\tau,z})-D(q,z)z_n'^L Z(\mathcal{S}_{n,L,\lambda,\sigma,\tau,s_nz}).
\end{equation}
where
\begin{equation}
    C(q,z)=\frac{(q-z_n')(z_n-1)}{q(1-z_nz_n')},
\end{equation}
\begin{equation}
    D(q,z)=\frac{q z_n+z_n'-(q+1)z_nz_n'}{q(1-z_nz_n')}.
\end{equation}
\end{theorem}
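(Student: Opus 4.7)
The plan is to follow the strategy of Proposition \ref{Interchange}, replacing the fish relation (Proposition \ref{fish}) with the reflection equation (Theorem \ref{Refl}). The basic idea is to attach an auxiliary R-matrix braid to the left boundary of rows $2n-1$ and $2n$ of the ice configuration for $\mathcal{S}_{n,L,\lambda,\sigma s_n,\tau,z}$, slide it rightward across all $L$ columns by Yang-Baxter, apply the reflection equation at the U-turn to reorganize it, and then pull it back leftward to extract a linear combination of modified partition functions.

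Concretely, I would prepend to the left side of rows $2n-1$ and $2n$ a two-vertex braid consisting of a colored stochastic $\Gamma$-$\Gamma$ R-vertex and a colored stochastic $\Delta$-$\Gamma$ R-vertex, both with spectral parameter pair $(1/z_n',z_n)$, together with the auxiliary edges and cap needed to realize the left-hand side $I_5$ of the reflection equation. The unique admissible filling of this initial braid, determined by the given boundary colors and the hypothesis $\sigma(n)>0$, multiplies the partition function by an explicit R-vertex prefactor. Using the three Yang-Baxter equations of Theorem \ref{YBE} column by column, I slide the braid rightward across rows $2n-1$ and $2n$ without altering the total partition function. Once the braid reaches the right end, the local picture of braid plus cap is exactly the $I_5$ side of Theorem \ref{Refl}, and the reflection equation allows me to replace it with $I_6$: a braid of colored stochastic $\Delta$-$\Delta$ and $\Delta$-$\Gamma$ R-vertices with reversed parameters $(z_n,1/z_n')$. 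Pulling this new braid back leftward by Yang-Baxter produces an ice model whose rows $2n-1,2n$ now carry spectral parameter $1/z_n'$, leaving a small residue of R-vertices on the left; summing over the two surviving admissible fillings of this residue --- one transmitting the color $\sigma(n)$ unchanged and one swapping it with $\overline{\sigma(n)}$ via the color-flip cap weight in Figure \ref{ColorCap} --- splits the answer into the two pieces proportional to $Z(\mathcal{S}_{n,L,\lambda,\sigma,\tau,s_nz})$ and $Z(\mathcal{S}_{n,L,\lambda,\sigma,\tau,z})$ that appear on the right-hand side of the identity.

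The powers $(q/z_n)^L$, $z_n^{-L}$, and $z_n'^L$ accumulate from the $2L$ Yang-Baxter slides (each contributing an $a_1/a_2$-type normalization per column), while the rational coefficients $C(q,z)$ and $D(q,z)$ come from the residual R-vertex weights in Figures \ref{RmatrixC1}--\ref{RmatrixC3} together with the cap weights in Figure \ref{ColorCap}. The main obstacle is the careful accounting of admissible color configurations through the auxiliary braid: one must verify that the boundary extension realizing the $I_5$ shape does not introduce spurious degrees of freedom and that, after the reflection step and the return slide, exactly two admissible fillings of the residual R-vertex survive. This is where $\sigma(n)>0$ is essential, as it rules out branches in which the "virtual" color would couple to a negative color of the ice. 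Once this enumeration is fixed, matching the resulting prefactors with $C(q,z)$ and $D(q,z)$ reduces to a rational identity in the R-matrix weights via the defining relation $1/z_n + z_n' = q+1$; this final check is routine but lengthy, and is naturally suited to symbolic verification of the same flavor as was used for Theorems \ref{YBE} and \ref{Refl} themselves.
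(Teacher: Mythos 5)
Your proposal does not follow the paper's proof of Theorem \ref{Thm3C}, and as written it has genuine gaps. The paper does \emph{not} use the reflection equation (Theorem \ref{Refl}) here; that relation is the engine of Theorem \ref{Thm2C}, where the four lines of $I_5$/$I_6$ are the four actual rows $2i-1,\dots,2i+2$ and the two caps are the two actual caps of the lattice. For the $s_n$ recursion only the top two rows and a single cap are available, so realizing the $I_5$ shape would require inventing two phantom rows and a phantom cap, and you give no indication of how these would couple to the rest of the model; the paper avoids this entirely.

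The indispensable step you omit is the color switch on the $2n$th row: since only $+$ and $R=c_{\sigma(n)}$ occur there, one exchanges them and rescales the vertex and cap weights (Figures \ref{FishC3}--\ref{FishC4}), turning the top $\Gamma$ row into a $\Delta$ row with spectral parameter $1/z_n'$ at the cost of a global factor $(qz_n)^{-L}$. Only after this can an R-vertex be attached to the left of the last two rows at all --- the colored model has no $\Gamma$--$\Delta$ R-matrix, and before the switch the two rows share the same parameter $z_n$. A single $\Delta$--$\Delta$ R-vertex with parameters $(1/z_n',z_n)$ is then slid to the right by Theorem \ref{YBE}; the slide changes nothing (that is the content of Yang--Baxter), so your claim that the powers $(q/z_n)^L$, $z_n^{-L}$, $z_n'^L$ ``accumulate from the $2L$ Yang--Baxter slides'' is incorrect: every $L$-dependent prefactor comes from rescaling the $L$ vertices of the $2n$th row during a color switch, not from the slides. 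Finally, the two terms on the right-hand side do not arise from ``two surviving admissible fillings of a residual R-vertex'' after pulling a braid back (there is no pull-back in this proof); they arise from a separate identity, the variant fish relation (\ref{fish2}), which decomposes the R-vertex-plus-cap into the two auxiliary caps $C_1$ and $C_2$ of Figures \ref{FishC5}--\ref{FishC6}, and it is exactly the two coefficients in (\ref{fish2}) that become $C(q,z)$ and $D(q,z)$. The cap $C_2$ is what flips $\sigma(n)$ to $\overline{\sigma(n)}$ and, after a second color switch, produces the $\sigma s_n$ term. Without the initial row conversion, the relation (\ref{fish2}), and the final re-identification steps, your argument cannot be completed.
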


The rest of this section is devoted to the proof of Theorems \ref{Thm2C}-\ref{Thm3C}.

\subsubsection{Proof of Theorem \ref{Thm2C}}

The proof of Theorem \ref{Thm2C} is based on the Yang-Baxter equation (Theorem \ref{YBE}) and the reflection equation (Theorem \ref{Refl}). The idea is to attach two R-vertices to the left of the configuration, move them to the right using the Yang-Baxter equation, make a reflection using the reflection equation, and finally use the Yang-Baxter equation to move the braid back to the left boundary. This gives the desired recursive relation.

\begin{proof}[Proof of Theorem \ref{Thm2C}]
We attach two R-vertices to the left of $\mathcal{S}_{n,L,\lambda,\sigma,\tau,s_i z}$, as shown in the following 
\begin{equation}
    \begin{tikzpicture}[baseline=(current bounding box.center)]
  \draw (0,0) to (0.6,0) to [out=0, in=180] (3,2) to (4,2);
  \draw (0,3) to (0.6,3) to [out=0, in=180] (3,1) to (4,1);
  \draw (2.5,3) to (4,3);
  \draw (0,2) to [out=0, in=180] (2.4,0) to (3,0) to (4,0);
  \draw [dashed] (4,2) to (5,2);
  \draw [dashed] (4,1) to (5,1);
  \draw [dashed] (4,3) to (5,3);
  \draw [dashed] (4,0) to (5,0);
  \draw (5,2) to (5.5,2);
  \draw (5,1) to (5.5,1);
  \draw (5,3) to (5.5,3);
  \draw (5,0) to (5.5,0);
  \draw (5.5,2) arc(-90:90:0.5);
  \draw (5.5,0) arc(-90:90:0.5);
  \draw (3,-0.5) to (3,3.5);
  \draw (4,-0.5) to (4,3.5);
  \draw (5,-0.5) to (5,3.5);
  \filldraw[black] (6,0.5) circle (2pt);
  \filldraw[black] (6,2.5) circle (2pt);
  \filldraw[black] (2.1,1.5) circle (2pt);
  \filldraw[black] (1.5,0.5) circle (2pt);
  \filldraw[black] (3,0) circle (2pt);
  \filldraw[black] (4,0) circle (2pt);
  \filldraw[black] (5,0) circle (2pt);
  \filldraw[black] (3,1) circle (2pt);
  \filldraw[black] (4,1) circle (2pt);
  \filldraw[black] (5,1) circle (2pt);
  \filldraw[black] (3,2) circle (2pt);
  \filldraw[black] (4,2) circle (2pt);
  \filldraw[black] (5,2) circle (2pt);
  \filldraw[black] (3,3) circle (2pt);
  \filldraw[black] (4,3) circle (2pt);
  \filldraw[black] (5,3) circle (2pt);
  \node at (2.1,1.5) [anchor=south] {$T'$};
  \node at (1.5,0.5) [anchor=south] {$S'$};
  \node at (0,0) [anchor=east] {$+$};
  \node at (2.5,3) [anchor=east] {$c_{\sigma(i+1)}$};
  \node at (0,2) [anchor=east] {$+$};
  \node at (0,3) [anchor=east] {$c_{\sigma(i)}$};
  \node at (5.5,0) [anchor=south] {$z_{i+1}$};
  \node at (5.5,1) [anchor=south] {$z_{i+1}$};
  \node at (5.5,2) [anchor=south] {$z_{i}$};
  \node at (5.5,3) [anchor=south] {$z_{i}$};
  \node at (2.7,0) [anchor=south] {$\Delta$};
  \node at (2.7,1) [anchor=south] {$\Gamma$};
  \node at (2.7,2) [anchor=south] {$\Delta$};
  \node at (2.7,3) [anchor=south] {$\Gamma$};
\end{tikzpicture}
\end{equation}
where we omit the other rows of $\mathcal{S}_{n,L,\lambda,\sigma,\tau,s_i z}$, $S'$ is colored stochastic $\Delta-\Delta$ vertex of spectral parameters $z_{i},z_{i+1}$, and $T'$ is colored stochastic $\Delta-\Gamma$ R-vertex of spectral parameters $z_{i},z_{i+1}$. We denote by $Z_1$ the partition function of this new ice model.

Note that the only admissible configuration of the two R-vertices is given as follows
\begin{equation}
    \begin{tikzpicture}[baseline=(current bounding box.center)]
  \draw (0,0) to (0.6,0) to [out=0, in=180] (3,2) to (4,2);
  \draw (0,3) to (0.6,3) to [out=0, in=180] (3,1) to (4,1);
  \draw (2.5,3) to (4,3);
  \draw (0,2) to [out=0, in=180] (2.4,0) to (3,0) to (4,0);
  \draw [dashed] (4,2) to (5,2);
  \draw [dashed] (4,1) to (5,1);
  \draw [dashed] (4,3) to (5,3);
  \draw [dashed] (4,0) to (5,0);
  \draw (5,2) to (5.5,2);
  \draw (5,1) to (5.5,1);
  \draw (5,3) to (5.5,3);
  \draw (5,0) to (5.5,0);
  \draw (5.5,2) arc(-90:90:0.5);
  \draw (5.5,0) arc(-90:90:0.5);
  \draw (3,-0.5) to (3,3.5);
  \draw (4,-0.5) to (4,3.5);
  \draw (5,-0.5) to (5,3.5);
  \filldraw[black] (6,0.5) circle (2pt);
  \filldraw[black] (6,2.5) circle (2pt);
  \filldraw[black] (2.1,1.5) circle (2pt);
  \filldraw[black] (1.5,0.5) circle (2pt);
  \filldraw[black] (3,0) circle (2pt);
  \filldraw[black] (4,0) circle (2pt);
  \filldraw[black] (5,0) circle (2pt);
  \filldraw[black] (3,1) circle (2pt);
  \filldraw[black] (4,1) circle (2pt);
  \filldraw[black] (5,1) circle (2pt);
  \filldraw[black] (3,2) circle (2pt);
  \filldraw[black] (4,2) circle (2pt);
  \filldraw[black] (5,2) circle (2pt);
  \filldraw[black] (3,3) circle (2pt);
  \filldraw[black] (4,3) circle (2pt);
  \filldraw[black] (5,3) circle (2pt);
  \node at (2.1,1.5) [anchor=south] {$T'$};
  \node at (1.5,0.5) [anchor=south] {$S'$};
  \node at (0,0) [anchor=east] {$+$};
  \node at (2.5,3) [anchor=east] {$c_{\sigma(i+1)}$};
  \node at (0,2) [anchor=east] {$+$};
  \node at (0,3) [anchor=east] {$c_{\sigma(i)}$};
  \node at (2.5,0) [anchor=north] {$+$};
  \node at (2,1.2) [anchor=north] {$+$};
  \node at (2.6,1.1) [anchor=north] {$c_{\sigma(i)}$};
  \node at (2.8,2) [anchor=north] {$+$};
  \node at (5.5,0) [anchor=south] {$z_{i+1}$};
  \node at (5.5,1) [anchor=south] {$z_{i+1}$};
  \node at (5.5,2) [anchor=south] {$z_{i}$};
  \node at (5.5,3) [anchor=south] {$z_{i}$};
  \node at (2.7,0) [anchor=south] {$\Delta$};
  \node at (2.7,1) [anchor=south] {$\Gamma$};
  \node at (2.7,2) [anchor=south] {$\Delta$};
  \node at (2.7,3) [anchor=south] {$\Gamma$};
\end{tikzpicture}
\end{equation}
Therefore we have
\begin{equation}
    Z_1=R_{\Delta\Delta}(+,+,+,+;z_{i},z_{i+1})R_{\Delta\Gamma}(+,c_{\sigma(i)},+,c_{\sigma(i)};z_{i},z_{i+1})Z(\mathcal{S}_{n,L,\lambda,\sigma,\tau,s_i z}).
\end{equation}

By Theorem \ref{YBE}, we can push the two R-vertices to the right without changing the partition function. Namely, we denote by $Z_2$ the partition function of the following configuration
\begin{equation}
    \begin{tikzpicture}[baseline=(current bounding box.center)]
  \draw (-0.5,2) to (1,2);
  \draw (-0.5,1) to (1,1);
  \draw (-0.5,3) to (1,3);
  \draw (-0.5,0) to (1,0);
  \draw [dashed] (1,2) to (2,2);
  \draw [dashed] (1,1) to (2,1);
  \draw [dashed] (1,3) to (2,3);
  \draw [dashed] (1,0) to (2,0);
  \draw (2,2) to (2.5,2);
  \draw (2,1) to (2.5,1);
  \draw (2,3) to (2.5,3);
  \draw (2,0) to (2.5,0);
  \draw (0,-0.5) to (0,3.5);
  \draw (1,-0.5) to (1,3.5);
  \draw (2,-0.5) to (2,3.5);
  \filldraw[black] (0,0) circle (2pt);
  \filldraw[black] (1,0) circle (2pt);
  \filldraw[black] (2,0) circle (2pt);
  \filldraw[black] (0,1) circle (2pt);
  \filldraw[black] (1,1) circle (2pt);
  \filldraw[black] (2,1) circle (2pt);
  \filldraw[black] (0,2) circle (2pt);
  \filldraw[black] (1,2) circle (2pt);
  \filldraw[black] (2,2) circle (2pt);
  \filldraw[black] (0,3) circle (2pt);
  \filldraw[black] (1,3) circle (2pt);
  \filldraw[black] (2,3) circle (2pt);
  \node at (-0.5,0) [anchor=east] {$+$};
  \node at (-0.5,2) [anchor=east] {$c_{\sigma(i)}$};
  \node at (-0.5,1) [anchor=east] {$+$};
  \node at (-0.5,3) [anchor=east] {$c_{\sigma(i+1)}$};
  \node at (2.8,0) [anchor=south] {$z_{i}$};
  \node at (2.8,1) [anchor=south] {$z_{i+1}$};
  \node at (2.8,2) [anchor=south] {$z_{i+1}$};
  \node at (2.8,3) [anchor=south] {$z_{i}$};
  \node at (2.3,0) [anchor=south] {$\Delta$};
  \node at (2.3,1) [anchor=south] {$\Delta$};
  \node at (2.3,2) [anchor=south] {$\Gamma$};
  \node at (2.3,3) [anchor=south] {$\Gamma$};
  
 \draw (2.5,3) to (6.5,3);
  \draw (2.5,1) to (3.5,1) to [out=0, in=180] (6.5,0);
  \draw (2.5,2) to (3.5,2) to [out=0, in=180] (6.5,1);
  \draw (2.5,0) to (3.5,0) to [out=0, in=180] (6.5,2);
  \draw (6.5,2) arc(-90:90:0.5);
  \draw (6.5,0) arc(-90:90:0.5);
  \filldraw[black] (7,0.5) circle (2pt);
  \filldraw[black] (7,2.5) circle (2pt);
  \filldraw[black] (4.7,0.67) circle (2pt);
  \filldraw[black] (5.3,1.35) circle (2pt);
\end{tikzpicture}
\end{equation}
Then we have $Z_1=Z_2$.

By Theorem \ref{Refl}, $Z_2$ is equal to the partition function of the following configuration
\begin{equation}
    \begin{tikzpicture}[baseline=(current bounding box.center)]
  \draw (-0.5,2) to (1,2);
  \draw (-0.5,1) to (1,1);
  \draw (-0.5,3) to (1,3);
  \draw (-0.5,0) to (1,0);
  \draw [dashed] (1,2) to (2,2);
  \draw [dashed] (1,1) to (2,1);
  \draw [dashed] (1,3) to (2,3);
  \draw [dashed] (1,0) to (2,0);
  \draw (2,2) to (2.5,2);
  \draw (2,1) to (2.5,1);
  \draw (2,3) to (2.5,3);
  \draw (2,0) to (2.5,0);
  \draw (0,-0.5) to (0,3.5);
  \draw (1,-0.5) to (1,3.5);
  \draw (2,-0.5) to (2,3.5);
  \filldraw[black] (0,0) circle (2pt);
  \filldraw[black] (1,0) circle (2pt);
  \filldraw[black] (2,0) circle (2pt);
  \filldraw[black] (0,1) circle (2pt);
  \filldraw[black] (1,1) circle (2pt);
  \filldraw[black] (2,1) circle (2pt);
  \filldraw[black] (0,2) circle (2pt);
  \filldraw[black] (1,2) circle (2pt);
  \filldraw[black] (2,2) circle (2pt);
  \filldraw[black] (0,3) circle (2pt);
  \filldraw[black] (1,3) circle (2pt);
  \filldraw[black] (2,3) circle (2pt);
  \node at (-0.5,0) [anchor=east] {$+$};
  \node at (-0.5,2) [anchor=east] {$c_{\sigma(i)}$};
  \node at (-0.5,1) [anchor=east] {$+$};
  \node at (-0.5,3) [anchor=east] {$c_{\sigma(i+1)}$};
  \node at (2.8,0) [anchor=south] {$z_{i}$};
  \node at (2.8,1) [anchor=south] {$z_{i+1}$};
  \node at (2.8,2) [anchor=south] {$z_{i+1}$};
  \node at (2.8,3) [anchor=south] {$z_{i}$};
  \node at (2.3,0) [anchor=south] {$\Delta$};
  \node at (2.3,1) [anchor=south] {$\Delta$};
  \node at (2.3,2) [anchor=south] {$\Gamma$};
  \node at (2.3,3) [anchor=south] {$\Gamma$};
  
  \draw (2.5,0) to (6.5,0);
  \draw (2.5,1) to (3.5,1) to [out=0, in=180] (6.5,2);
  \draw (2.5,2) to (3.5,2) to [out=0, in=180] (6.5,3);
  \draw (2.5,3) to (3.5,3) to [out=0, in=180] (6.5,1);
  \draw (6.5,2) arc(-90:90:0.5);
  \draw (6.5,0) arc(-90:90:0.5);
  \filldraw[black] (7,0.5) circle (2pt);
  \filldraw[black] (7,2.5) circle (2pt);
  \filldraw[black] (4.7,2.35) circle (2pt);
  \filldraw[black] (5.25,1.65) circle (2pt);
\end{tikzpicture}
\end{equation}

Using Theorem \ref{YBE}, we push the two R-vertices back to the left without changing the partition function. Namely, if we denote the partition function of the following configuration by $Z_3$, then $Z_3=Z_2$. Here $S$ is colored stochastic $\Gamma-\Gamma$ vertex of spectral parameters $z_{i+1},z_{i}$, and $T$ is colored stochastic $\Delta-\Gamma$ vertex of spectral parameters $z_{i+1},z_{i}$.
\begin{equation}
    \begin{tikzpicture}[baseline=(current bounding box.center)]
  \draw (0,0) to (0.6,0) to [out=0, in=180] (3,2) to (4,2);
  \draw (0,3) to (0.6,3) to [out=0, in=180] (3,1) to (4,1);
  \draw (0,1) to [out=0, in=180] (2.4,3) to (3,3) to (4,3);
  \draw (2.5,0) to (4,0);
  \draw [dashed] (4,2) to (5,2);
  \draw [dashed] (4,1) to (5,1);
  \draw [dashed] (4,3) to (5,3);
  \draw [dashed] (4,0) to (5,0);
  \draw (5,2) to (5.5,2);
  \draw (5,1) to (5.5,1);
  \draw (5,3) to (5.5,3);
  \draw (5,0) to (5.5,0);
  \draw (5.5,2) arc(-90:90:0.5);
  \draw (5.5,0) arc(-90:90:0.5);
  \draw (3,-0.5) to (3,3.5);
  \draw (4,-0.5) to (4,3.5);
  \draw (5,-0.5) to (5,3.5);
  \filldraw[black] (6,0.5) circle (2pt);
  \filldraw[black] (6,2.5) circle (2pt);
  \filldraw[black] (2.1,1.5) circle (2pt);
  \filldraw[black] (1.5,2.5) circle (2pt);
  \filldraw[black] (3,0) circle (2pt);
  \filldraw[black] (4,0) circle (2pt);
  \filldraw[black] (5,0) circle (2pt);
  \filldraw[black] (3,1) circle (2pt);
  \filldraw[black] (4,1) circle (2pt);
  \filldraw[black] (5,1) circle (2pt);
  \filldraw[black] (3,2) circle (2pt);
  \filldraw[black] (4,2) circle (2pt);
  \filldraw[black] (5,2) circle (2pt);
  \filldraw[black] (3,3) circle (2pt);
  \filldraw[black] (4,3) circle (2pt);
  \filldraw[black] (5,3) circle (2pt);
  \node at (2.1,1.5) [anchor=south] {$T$};
  \node at (1.5,2.5) [anchor=south] {$S$};
  \node at (0,0) [anchor=east] {$+$};
  \node at (0,1) [anchor=east] {$c_{\sigma(i)}$};
  \node at (2.5,0) [anchor=east] {$+$};
  \node at (0,3) [anchor=east] {$c_{\sigma(i+1)}$};
  \node at (5.5,0) [anchor=south] {$z_{i}$};
  \node at (5.5,1) [anchor=south] {$z_{i}$};
  \node at (5.5,2) [anchor=south] {$z_{i+1}$};
  \node at (5.5,3) [anchor=south] {$z_{i+1}$};
  \node at (2.7,0) [anchor=south] {$\Delta$};
  \node at (2.7,1) [anchor=south] {$\Gamma$};
  \node at (2.7,2) [anchor=south] {$\Delta$};
  \node at (2.7,3) [anchor=south] {$\Gamma$};
\end{tikzpicture}
\end{equation}

Now we denote by $Z_4$ and $Z_5$ the partition functions of the following two configurations.

\begin{equation}
\begin{tikzpicture}[baseline=(current bounding box.center)]
\draw (2.5,2) arc(-90:90:0.5);
  \draw (2.5,0) arc(-90:90:0.5);
  \filldraw[black] (3,0.5) circle (2pt);
  \filldraw[black] (3,2.5) circle (2pt);
 \draw (-0.5,2) to (1,2);
  \draw (-0.5,1) to (1,1);
  \draw (-0.5,3) to (1,3);
  \draw (-0.5,0) to (1,0);
  \draw [dashed] (1,2) to (2,2);
  \draw [dashed] (1,1) to (2,1);
  \draw [dashed] (1,3) to (2,3);
  \draw [dashed] (1,0) to (2,0);
  \draw (2,2) to (2.5,2);
  \draw (2,1) to (2.5,1);
  \draw (2,3) to (2.5,3);
  \draw (2,0) to (2.5,0);
  \draw (0,-0.5) to (0,3.5);
  \draw (1,-0.5) to (1,3.5);
  \draw (2,-0.5) to (2,3.5);
  \filldraw[black] (0,0) circle (2pt);
  \filldraw[black] (1,0) circle (2pt);
  \filldraw[black] (2,0) circle (2pt);
  \filldraw[black] (0,1) circle (2pt);
  \filldraw[black] (1,1) circle (2pt);
  \filldraw[black] (2,1) circle (2pt);
  \filldraw[black] (0,2) circle (2pt);
  \filldraw[black] (1,2) circle (2pt);
  \filldraw[black] (2,2) circle (2pt);
  \filldraw[black] (0,3) circle (2pt);
  \filldraw[black] (1,3) circle (2pt);
  \filldraw[black] (2,3) circle (2pt);
  \node at (-0.5,0) [anchor=east] {$+$};
  \node at (-0.5,2) [anchor=east] {$+$};
  \node at (-0.5,1) [anchor=east] {$c_{\sigma(i)}$};
  \node at (-0.5,3) [anchor=east] {$c_{\sigma(i+1)}$};
  \node at (2.8,0) [anchor=south] {$z_{i}$};
  \node at (2.8,1) [anchor=south] {$z_{i}$};
  \node at (2.8,2) [anchor=south] {$z_{i+1}$};
  \node at (2.8,3) [anchor=south] {$z_{i+1}$};
  \node at (2.3,0) [anchor=south] {$\Delta$};
  \node at (2.3,1) [anchor=south] {$\Gamma$};
  \node at (2.3,2) [anchor=south] {$\Delta$};
  \node at (2.3,3) [anchor=south] {$\Gamma$};
\end{tikzpicture}
\quad\quad
\begin{tikzpicture}[baseline=(current bounding box.center)]
\draw (2.5,2) arc(-90:90:0.5);
  \draw (2.5,0) arc(-90:90:0.5);
  \filldraw[black] (3,0.5) circle (2pt);
  \filldraw[black] (3,2.5) circle (2pt);
  \draw (-0.5,2) to (1,2);
  \draw (-0.5,1) to (1,1);
  \draw (-0.5,3) to (1,3);
  \draw (-0.5,0) to (1,0);
  \draw [dashed] (1,2) to (2,2);
  \draw [dashed] (1,1) to (2,1);
  \draw [dashed] (1,3) to (2,3);
  \draw [dashed] (1,0) to (2,0);
  \draw (2,2) to (2.5,2);
  \draw (2,1) to (2.5,1);
  \draw (2,3) to (2.5,3);
  \draw (2,0) to (2.5,0);
  \draw (0,-0.5) to (0,3.5);
  \draw (1,-0.5) to (1,3.5);
  \draw (2,-0.5) to (2,3.5);
  \filldraw[black] (0,0) circle (2pt);
  \filldraw[black] (1,0) circle (2pt);
  \filldraw[black] (2,0) circle (2pt);
  \filldraw[black] (0,1) circle (2pt);
  \filldraw[black] (1,1) circle (2pt);
  \filldraw[black] (2,1) circle (2pt);
  \filldraw[black] (0,2) circle (2pt);
  \filldraw[black] (1,2) circle (2pt);
  \filldraw[black] (2,2) circle (2pt);
  \filldraw[black] (0,3) circle (2pt);
  \filldraw[black] (1,3) circle (2pt);
  \filldraw[black] (2,3) circle (2pt);
  \node at (-0.5,0) [anchor=east] {$+$};
  \node at (-0.5,2) [anchor=east] {$+$};
  \node at (-0.5,1) [anchor=east] {$c_{\sigma(i+1)}$};
  \node at (-0.5,3) [anchor=east] {$c_{\sigma(i)}$};
  \node at (2.8,0) [anchor=south] {$z_{i}$};
  \node at (2.8,1) [anchor=south] {$z_{i}$};
  \node at (2.8,2) [anchor=south] {$z_{i+1}$};
  \node at (2.8,3) [anchor=south] {$z_{i+1}$};
  \node at (2.3,0) [anchor=south] {$\Delta$};
  \node at (2.3,1) [anchor=south] {$\Gamma$};
  \node at (2.3,2) [anchor=south] {$\Delta$};
  \node at (2.3,3) [anchor=south] {$\Gamma$};
\end{tikzpicture}
\end{equation}
By considering all possible configurations of $S,T$, we conclude that
\begin{eqnarray}
    Z_3&=&R_{\Gamma\Gamma}(c_{\sigma(i)},c_{\sigma(i+1)},c_{\sigma(i+1)},c_{\sigma(i)};z_{i+1},z_{i})R_{\Delta\Gamma}(+,c_{\sigma(i)},+,c_{\sigma(i)};z_{i+1},z_{i})Z_4  \nonumber\\
    &&+R_{\Gamma\Gamma}(c_{\sigma(i)},c_{\sigma(i+1)},c_{\sigma(i)},c_{\sigma(i+1)};z_{i+1},z_{i})R_{\Delta\Gamma}(+,c_{\sigma(i+1)},+,c_{\sigma(i+1)};z_{i+1},z_{i})Z_5
\end{eqnarray}
Now note that
\begin{equation}
    Z_4=Z(\mathcal{S}_{n,L,\lambda,\sigma,\tau,z}),
\end{equation}
\begin{equation}
    Z_5=Z(\mathcal{S}_{n,L,\lambda,\sigma s_i,\tau,z}).
\end{equation}
Therefore we have
\begin{eqnarray}
&&R_{\Delta\Delta}(+,+,+,+;z_{i},z_{i+1})R_{\Delta\Gamma}(+,c_{\sigma(i)},+,c_{\sigma(i)};z_{i},z_{i+1})Z(\mathcal{S}_{n,L,\lambda,\sigma,\tau,s_i z}) \nonumber\\
&=&  R_{\Gamma\Gamma}(c_{\sigma(i)},c_{\sigma(i+1)},c_{\sigma(i+1)},c_{\sigma(i)};z_{i+1},z_{i})R_{\Delta\Gamma}(+,c_{\sigma(i)},+,c_{\sigma(i)};z_{i+1},z_{i})Z(\mathcal{S}_{n,L,\lambda,\sigma,\tau,z})  \nonumber\\
&&+R_{\Gamma\Gamma}(c_{\sigma(i)},c_{\sigma(i+1)},c_{\sigma(i)},c_{\sigma(i+1)};z_{i+1},z_{i})R_{\Delta\Gamma}(+,c_{\sigma(i+1)},+,c_{\sigma(i+1)};z_{i+1},z_{i})Z(\mathcal{S}_{n,L,\lambda,\sigma s_i,\tau,z}).
\end{eqnarray}
Using the Boltzmann weights for colored R-matrices and simplifying the expressions, we obtain the conclusion of the theorem.

\end{proof}

\subsubsection{Proof of Theorem \ref{Thm3C}}
The proof of Theorem \ref{Thm3C} is based on the following idea. First note that only one color (other than $+$), denoted by $R$, may appear in the $2n$th row. So we can simultaneously switch $+$ and $R$ in the $2n$th row and change Boltzmann weights in a similar way as in the uncolored case. This gives two rows of colored stochastic $\Delta$ vertices on the top, and we can attach an R-matrix to the left boundary and use the Yang Baxter equation (Theorem \ref{YBE}) to move it to the right. Using a variant of the fish relation which involves two auxiliary caps (\ref{fish2}), the original partition function is related to two new partition functions (\ref{T1})-(\ref{T2}). Changing the colors in the $2n$th row (switching $+$ and $R$, or switching $+$ and $\overline{R}$) and changing the Boltzmann weights simultaneously, the two new partition functions are further related to the two terms on the right hand side of the recursive relation.

\begin{proof}
We note that all the boundary edges on the top of the rectangular lattice carry the $+$ spin, and there are only two possible colors $c_{\sigma(n)}$ and $+$ in the $2n$th row. We write $R:=c_{\sigma(n)}$ hereafter to simplify the notations. Therefore, only the three states in Figure \ref{FishC} are involved in the $2n$th row of the lattice. Moreover, only the two states in Figure \ref{FishC2} are involved in the cap connecting the last two rows.

\begin{figure}[h]
\[
\begin{array}{|c|c|c|c|c|c|}
\hline
\gammaicen{+}{+}{+}{+} &
\gammaicen{R}{+}{R}{+} &
\gammaicen{R}{+}{+}{R}\\
\hline
   1 & q z_n & 1-q z_n \\
\hline\end{array}\]
\caption{Boltzmann weights involved in the $2n$th row}
\label{FishC}
\end{figure}

\begin{figure}[h]
\[
\begin{array}{|c|c|c|c|c|c|}
\hline
\text{Cap} &\caps{+}{+} & \caps{R}{\overline{R}} \\
\hline
\text{Boltzmann weight}  &  1 & 1 \\
\hline\end{array}\]
\caption{Boltzmann weights involved for the cap connecting the last two rows}
\label{FishC2}
\end{figure}

Now for each admissible state, we change the color in the $2n$th row from $+$ to $R$ and from $R$ to $+$ (note that no other colors are involved in the $2n$th row for an admissible state). Meanwhile we change the Boltzmann weights for the vertices in the $2n$th row into the ones presented in Figure \ref{FishC3}, and change the Boltzmann weights for the caps connecting the last two rows into those in Figure \ref{FishC4}. Note that in the original configuration, if the colored path entering from the left of the $2n$th row doesn't go through the cap connecting the last two rows, then one $c_1$ pattern is involved (and no $c_2$ pattern is involved); otherwise neither $c_1$ nor $c_2$ is involved. Thus noting the changed Boltzmann weights for the cap, we can deduce that if we denote by $Z_1$ the partition function of the new system, then
\begin{equation}
    Z_1=(\frac{1}{qz_n})^L Z(\mathcal{S}_{n,L,\lambda,\sigma,\tau,z}).
\end{equation}

\begin{figure}[h]
\[
\begin{array}{|c|c|c|c|c|c|}
\hline
  \tt{a}_1\slash \tt{a}_2&\tt{a}_1\slash \tt{a}_2&\tt{b}_1&\tt{b}_2&\tt{d}_1&\tt{d}_2\\
\hline
\gammaicei{c_{\alpha}}{c_{\alpha}}{c_{\alpha}}{c_{\alpha}} &
\gammaicei{c_{\beta}}{c_{\beta}}{c_{\beta}}{c_{\beta}} &
\gammaicei{c_{\alpha}}{c_{\beta}}{c_{\alpha}}{c_{\beta}} &
\gammaicei{c_{\beta}}{c_{\alpha}}{c_{\beta}}{c_{\alpha}} &
\gammaicei{c_{\beta}}{c_{\beta}}{c_{\alpha}}{c_{\alpha}} &
\gammaicei{c_{\alpha}}{c_{\alpha}}{c_{\beta}}{c_{\beta}}\\
\hline
   1 & 1 & \frac{1}{z_n} & \frac{1}{qz_n} &  1-\frac{1}{z_n} & 1-\frac{1}{qz_n}\\
\hline\end{array}\]
\caption{New Boltzmann weights for the $2n$th row, where $\alpha<\beta$}
\label{FishC3}
\end{figure}
\begin{figure}[h]
\[
\begin{array}{|c|c|c|c|c|c|}
\hline
\text{New cap} &\newcaps{+}{\overline{R}} & \newcaps{R}{+} \\
\hline
\text{Boltzmann weight}  &  1 & -1 \\
\hline\end{array}\]
\caption{New Boltzmann weights for the cap connecting the last two rows}
\label{FishC4}
\end{figure}

Note that the new Boltzmann weights for the $2n$th row correspond to a colored stochastic $\Delta$ vertex with spectral parameter $\frac{1}{z_n'}$.
Now we attach an R-vertex (colored stochastic $\Delta-\Delta$ vertex with spectral parameters $\frac{1}{z_n'}$ and $z_n$) to the left boundary of the last two rows of the new system:
\begin{equation}
    \begin{tikzpicture}[baseline=(current bounding box.center)]
  \draw (0,0) to [out=0, in=-150] (1,0.5) to [out=30, in=180] (2,1);
  \draw (0,1) to [out=0, in=150] (1,0.5) to [out=-30, in=180] (2,0);
  \draw (2,0) to (3.5,0);
  \draw  [dashed] (3.5,0) to (4.5,0);
  \draw (4.5,0) to (5.5,0);
  \draw (2,1) to (3.5,1);
  \draw  [dashed] (3.5,1) to (4.5,1);
  \draw (4.5,1) to (5.5,1);
  \draw (5.5,0) to [out = 0, in = 180] (6,0.5);
  \draw (5.5,1) to [out = 0, in = 180] (6,0.5);
  \draw (2.5,-0.5) to (2.5,1.5);
  \draw (3.5,-0.5) to (3.5,1.5);
  \draw (4.5,-0.5) to (4.5,1.5);
  \filldraw[black] (2.5,0) circle (2pt);
  \filldraw[black] (3.5,0) circle (2pt);
  \filldraw[black] (4.5,0) circle (2pt);
  \filldraw[black] (2.5,1) circle (2pt);
  \filldraw[black] (3.5,1) circle (2pt);
  \filldraw[black] (4.5,1) circle (2pt);
  \filldraw[black] (6,0.5) circle (2pt);
  \node at (0,0) [anchor=east] {$+$};
  \node at (0,1) [anchor=east] {$+$};
  \filldraw[black] (1,0.5) circle (2pt);
  \node at (5,1) [anchor=south] {$\Delta$};
  \node at (5,0) [anchor=north] {$\Delta$};
  \node at (5.5,1) [anchor=south] {$\frac{1}{z_n'}$};
  \node at (5.5,0) [anchor=north] {$z_n$};
  \node at (0,0) [anchor=east] {$+$};
\end{tikzpicture}
\end{equation}
Note that the only admissible configuration of the R-matrix is given by
\begin{equation}
    \begin{tikzpicture}[baseline=(current bounding box.center)]
  \draw (0,0) to [out=0, in=-150] (1,0.5) to [out=30, in=180] (2,1);
  \draw (0,1) to [out=0, in=150] (1,0.5) to [out=-30, in=180] (2,0);
  \draw (2,0) to (3.5,0);
  \draw  [dashed] (3.5,0) to (4.5,0);
  \draw (4.5,0) to (5.5,0);
  \draw (2,1) to (3.5,1);
  \draw  [dashed] (3.5,1) to (4.5,1);
  \draw (4.5,1) to (5.5,1);
  \draw (5.5,0) to [out = 0, in = 180] (6,0.5);
  \draw (5.5,1) to [out = 0, in = 180] (6,0.5);
  \draw (2.5,-0.5) to (2.5,1.5);
  \draw (3.5,-0.5) to (3.5,1.5);
  \draw (4.5,-0.5) to (4.5,1.5);
  \filldraw[black] (2.5,0) circle (2pt);
  \filldraw[black] (3.5,0) circle (2pt);
  \filldraw[black] (4.5,0) circle (2pt);
  \filldraw[black] (2.5,1) circle (2pt);
  \filldraw[black] (3.5,1) circle (2pt);
  \filldraw[black] (4.5,1) circle (2pt);
  \filldraw[black] (6,0.5) circle (2pt);
  \node at (0,0) [anchor=east] {$+$};
  \node at (0,1) [anchor=east] {$+$};
  \filldraw[black] (1,0.5) circle (2pt);
  \node at (5,1) [anchor=south] {$\Delta$};
  \node at (5,0) [anchor=north] {$\Delta$};
  \node at (5.5,1) [anchor=south] {$\frac{1}{z_n'}$};
  \node at (5.5,0) [anchor=north] {$z_n$};
  \node at (0,0) [anchor=east] {$+$};
  \node at (0,0) [anchor=east] {$+$};
  \node at (2,0) [anchor=north] {$+$};
  \node at (2,1) [anchor=south] {$+$};
\end{tikzpicture}
\end{equation}
Therefore, the partition function of the above system is equal to $Z_1$.

By Theorem \ref{YBE}, we can push the R-vertex to the right without changing the partition function. That is, the partition function of the above system is equal to the partition function of the following
\begin{equation}
\begin{tikzpicture}[baseline=(current bounding box.center)]
\draw (3.5,0) to [out=0, in=-150] (4.5,0.5) to [out=30, in=180] (5.5,1);
 \draw (3.5,1) to [out=0, in=150] (4.5,0.5) to [out=-30, in=180] (5.5,0);
  \draw (0,0) to (1.5,0);
  \draw  [dashed] (1.5,0) to (2.5,0);
  \draw (2.5,0) to (3.5,0);
  \draw (0,1) to (1.5,1);
  \draw  [dashed] (1.5,1) to (2.5,1);
  \draw (2.5,1) to (3.5,1);
  \draw (5.5,0) to [out = 0, in = 180] (6,0.5);
  \draw (5.5,1) to [out = 0, in = 180] (6,0.5);
  \draw (0.5,-0.5) to (0.5,1.5);
  \draw (1.5,-0.5) to (1.5,1.5);
  \draw (2.5,-0.5) to (2.5,1.5);
  \filldraw[black] (0.5,0) circle (2pt);
  \filldraw[black] (1.5,0) circle (2pt);
  \filldraw[black] (2.5,0) circle (2pt);
  \filldraw[black] (0.5,1) circle (2pt);
  \filldraw[black] (1.5,1) circle (2pt);
  \filldraw[black] (2.5,1) circle (2pt);
  \filldraw[black] (6,0.5) circle (2pt);
  \node at (0,0) [anchor=east] {$+$};
  \node at (0,1) [anchor=east] {$+$};
  \filldraw[black] (4.5,0.5) circle (2pt);
  \node at (3,1) [anchor=south] {$\Delta$};
  \node at (3,0) [anchor=north] {$\Delta$};
  \node at (3.5,1) [anchor=south] {$z_n$};
  \node at (3.5,0) [anchor=north] {$\frac{1}{z_n'}$};
  \node at (0,0) [anchor=east] {$+$};
\end{tikzpicture}
\end{equation}

We introduce two types of auxiliary caps $C_1$ and $C_2$. The Boltzmann weights for these caps are shown in Figures \ref{FishC5}-\ref{FishC6}.
\begin{figure}[h]
\[
\begin{array}{|c|c|c|c|c|c|}
\hline
\text{New cap} &\capsC{+}{\overline{R}}{C_1} & \capsC{R}{+}{C_1} \\
\hline
\text{Boltzmann weight}  &  1 & -1 \\
\hline\end{array}\]
\caption{Boltzmann weights for the cap $C_1$}
\label{FishC5}
\end{figure}
\begin{figure}[h]
\[
\begin{array}{|c|c|c|c|c|c|}
\hline
\text{New cap} &\capsC{+}{R}{C_2} & \capsC{\overline{R}}{+}{C_2} \\
\hline
\text{Boltzmann weight}  &  -1 &  1 \\
\hline\end{array}\]
\caption{Boltzmann weights for the cap $C_2$}
\label{FishC6}
\end{figure}

Now let $Z(I_1(\epsilon_1,\epsilon_2))$ be the partition function of the following system for every choice of $\epsilon_1,\epsilon_2\in\{c_{\overline{n}},\cdots,c_n\}$ (where the R-vertex is the one we used above, and the cap weights are given by those in Figure \ref{FishC4}).
\begin{equation}
\hfill
I_1(\epsilon_1,\epsilon_2)=
\begin{tikzpicture}[baseline=(current bounding box.center)]
  \draw (0,0) to [out=0, in=-150] (1,0.5) to [out=30, in=180] (2,1);
  \draw (0,1) to [out=0, in=150] (1,0.5) to [out=-30, in=180] (2,0);
  \draw (2,0) to [out = 0, in = 180] (2.5,0.5);
  \draw (2,1) to [out = 0, in = 180] (2.5,0.5);
  \filldraw[black] (2.5,0.5) circle (2pt);
  \node at (0,0) [anchor=east] {$\epsilon_2$};
  \node at (0,1) [anchor=east] {$\epsilon_1$};
  \filldraw[black] (1,0.5) circle (2pt);
\end{tikzpicture}
\end{equation}
Also denote by $Z(I_2(\epsilon_1,\epsilon_2)),Z(I_2'(\epsilon_1,\epsilon_2))$ the partition functions of the following two systems for every choice of $\epsilon_1,\epsilon_2\in\{c_{\overline{n}},\cdots,c_n\}$.
\begin{equation}
\hfill
I_2(\epsilon_1,\epsilon_2)=
\begin{tikzpicture}[baseline=(current bounding box.center)]
  \draw (0,0) to [out = 0, in = 180] (0.5,0.5);
  \draw (0,1) to [out = 0, in = 180] (0.5,0.5);
  \filldraw[black] (0.5,0.5) circle (2pt);
  \node at (0,0) [anchor=east] {$\epsilon_2$};
  \node at (0,1) [anchor=east] {$\epsilon_1$};
  \node at (0.5,0.5) [anchor=west] {$C_1$};
\end{tikzpicture}
\quad\text{ and }\quad
I_2'(\epsilon_1,\epsilon_2)=
\begin{tikzpicture}[baseline=(current bounding box.center)]
  \draw (0,0) to [out = 0, in = 180] (0.5,0.5);
  \draw (0,1) to [out = 0, in = 180] (0.5,0.5);
  \filldraw[black] (0.5,0.5) circle (2pt);
  \node at (0,0) [anchor=east] {$\epsilon_2$};
  \node at (0,1) [anchor=east] {$\epsilon_1$};
   \node at (0.5,0.5) [anchor=west] {$C_2$};
\end{tikzpicture}
\end{equation}
Then we can check that for any $\epsilon_1,\epsilon_2$,
\begin{equation}\label{fish2}
    Z(I_1(\epsilon_1,\epsilon_2))=\frac{(qz_n-1)(1-z_n')}{qz_n+z_n'-(q+1)}Z(I_2(\epsilon_1,\epsilon_2))+\frac{q(z_nz_n'-1)}{q z_n+z_n'-(q+1)}Z(I_2'(\epsilon_1,\epsilon_2)).
\end{equation}

Thus if we denote by $Z_2,Z_2'$ the partition functions of the following two configurations, then
\begin{equation}\label{T1}
\begin{tikzpicture}[baseline=(current bounding box.center)]
  \draw (0,0) to (1.5,0);
  \draw  [dashed] (1.5,0) to (2.5,0);
  \draw (2.5,0) to (3.5,0);
  \draw (0,1) to (1.5,1);
  \draw  [dashed] (1.5,1) to (2.5,1);
  \draw (2.5,1) to (3.5,1);
  \draw (3.5,0) to [out = 0, in = 180] (4,0.5);
  \draw (3.5,1) to [out = 0, in = 180] (4,0.5);
  \draw (0.5,-0.5) to (0.5,1.5);
  \draw (1.5,-0.5) to (1.5,1.5);
  \draw (2.5,-0.5) to (2.5,1.5);
  \filldraw[black] (0.5,0) circle (2pt);
  \filldraw[black] (1.5,0) circle (2pt);
  \filldraw[black] (2.5,0) circle (2pt);
  \filldraw[black] (0.5,1) circle (2pt);
  \filldraw[black] (1.5,1) circle (2pt);
  \filldraw[black] (2.5,1) circle (2pt);
  \filldraw[black] (4,0.5) circle (2pt);
  \node at (0,0) [anchor=east] {$+$};
  \node at (0,1) [anchor=east] {$+$};
  \node at (3,1) [anchor=south] {$\Delta$};
  \node at (3,0) [anchor=north] {$\Delta$};
  \node at (3.5,1) [anchor=south] {$z_n$};
  \node at (3.5,0) [anchor=north] {$\frac{1}{z_n'}$};
  \node at (0,0) [anchor=east] {$+$};
  \node at (4,0.5) [anchor=west] {$C_1$};
\end{tikzpicture}
\end{equation}
\begin{equation}\label{T2}
\begin{tikzpicture}[baseline=(current bounding box.center)]
  \draw (0,0) to (1.5,0);
  \draw  [dashed] (1.5,0) to (2.5,0);
  \draw (2.5,0) to (3.5,0);
  \draw (0,1) to (1.5,1);
  \draw  [dashed] (1.5,1) to (2.5,1);
  \draw (2.5,1) to (3.5,1);
  \draw (3.5,0) to [out = 0, in = 180] (4,0.5);
  \draw (3.5,1) to [out = 0, in = 180] (4,0.5);
  \draw (0.5,-0.5) to (0.5,1.5);
  \draw (1.5,-0.5) to (1.5,1.5);
  \draw (2.5,-0.5) to (2.5,1.5);
  \filldraw[black] (0.5,0) circle (2pt);
  \filldraw[black] (1.5,0) circle (2pt);
  \filldraw[black] (2.5,0) circle (2pt);
  \filldraw[black] (0.5,1) circle (2pt);
  \filldraw[black] (1.5,1) circle (2pt);
  \filldraw[black] (2.5,1) circle (2pt);
  \filldraw[black] (4,0.5) circle (2pt);
  \node at (0,0) [anchor=east] {$+$};
  \node at (0,1) [anchor=east] {$+$};
  \node at (3,1) [anchor=south] {$\Delta$};
  \node at (3,0) [anchor=north] {$\Delta$};
  \node at (3.5,1) [anchor=south] {$z_n$};
  \node at (3.5,0) [anchor=north] {$\frac{1}{z_n'}$};
  \node at (0,0) [anchor=east] {$+$};
  \node at (4,0.5) [anchor=west] {$C_2$};
\end{tikzpicture}
\end{equation}
\begin{equation}
    Z_1=\frac{(qz_n-1)(1-z_n')}{qz_n+z_n'-(q+1)} Z_2+\frac{q(z_nz_n'-1)}{qz_n+z_n'-(q+1)}Z_2'.
\end{equation}

Finally we compute $Z_2,Z_2'$. For $Z_2$, we change the Boltzmann weights of the $2n$th row to the Boltzmann weights in Figure \ref{FishC7}. We also change the Boltzmann weights of the cap to those in Figure \ref{FishC8} (call it $C_1'$). It can be checked that the partition function doesn't change.
\begin{figure}[h]
\[
\begin{array}{|c|c|c|c|c|c|}
\hline
  \tt{a}_1\slash \tt{a}_2&\tt{a}_1\slash \tt{a}_2&\tt{b}_1&\tt{b}_2&\tt{d}_1&\tt{d}_2\\
\hline
\gammaicei{c_{\alpha}}{c_{\alpha}}{c_{\alpha}}{c_{\alpha}} &
\gammaicei{c_{\beta}}{c_{\beta}}{c_{\beta}}{c_{\beta}} &
\gammaicei{c_{\alpha}}{c_{\beta}}{c_{\alpha}}{c_{\beta}} &
\gammaicei{c_{\beta}}{c_{\alpha}}{c_{\beta}}{c_{\alpha}} &
\gammaicei{c_{\beta}}{c_{\beta}}{c_{\alpha}}{c_{\alpha}} &
\gammaicei{c_{\alpha}}{c_{\alpha}}{c_{\beta}}{c_{\beta}}\\
\hline
   1 & 1 & z_n' & \frac{1}{q}z_n' &  z_n'-1 & \frac{1}{q} z_n'-1\\
\hline\end{array}\]
\caption{New Boltzmann weights for the $2n$th row, where $\alpha<\beta$: for computing $Z_2$}
\label{FishC7}
\end{figure}
\begin{figure}[h]
\[
\begin{array}{|c|c|c|c|c|c|}
\hline
\text{New cap} &\capsC{+}{\overline{R}}{C_1'} & \capsC{R}{+}{C_1'} \\
\hline
\text{Boltzmann weight}  &  1 & 1 \\
\hline\end{array}\]
\caption{Boltzmann weights for the cap $C_1'$}
\label{FishC8}
\end{figure}

Now noting that the top boundary edges all carry the $+$ spin, we use the previous argument (changing $R$ to $+$ and $+$ to $R$, and changing the Boltzmann weights accordingly) to show that $Z_2$ is equal to $(\frac{z_n'}{q})^L$ times the partition function of the following system, which is $Z(\mathcal{S}_{n,L,\lambda,\sigma,\tau,s_n z})$.
\begin{equation}
\begin{tikzpicture}[baseline=(current bounding box.center)]
  \draw (0,0) to (1.5,0);
  \draw  [dashed] (1.5,0) to (2.5,0);
  \draw (2.5,0) to (3.5,0);
  \draw (0,1) to (1.5,1);
  \draw  [dashed] (1.5,1) to (2.5,1);
  \draw (2.5,1) to (3.5,1);
  \draw (3.5,0) arc(-90:90:0.5);
  \draw (0.5,-0.5) to (0.5,1.5);
  \draw (1.5,-0.5) to (1.5,1.5);
  \draw (2.5,-0.5) to (2.5,1.5);
  \filldraw[black] (0.5,0) circle (2pt);
  \filldraw[black] (1.5,0) circle (2pt);
  \filldraw[black] (2.5,0) circle (2pt);
  \filldraw[black] (0.5,1) circle (2pt);
  \filldraw[black] (1.5,1) circle (2pt);
  \filldraw[black] (2.5,1) circle (2pt);
  \filldraw[black] (4,0.5) circle (2pt);
  \node at (0,0) [anchor=east] {$+$};
  \node at (0,1) [anchor=east] {$R$};
  \node at (3,1) [anchor=south] {$\Gamma$};
  \node at (3,0) [anchor=north] {$\Delta$};
  \node at (3.5,1) [anchor=south] {$\frac{1}{z_n'}$};
  \node at (3.5,0) [anchor=north] {$\frac{1}{z_n'}$};
  \node at (0,0) [anchor=east] {$+$};
\end{tikzpicture}
\end{equation}

For $Z_2'$, we change the Boltzmann weights of the $2n$th row to the Boltzmann weights in Figure \ref{FishC7}. We also change the Boltzmann weights of the cap to those in Figure \ref{FishC9} (call it $C_2'$). It can be checked that the partition function changes by a factor of $-1$.
\begin{figure}[h]
\[
\begin{array}{|c|c|c|c|c|c|}
\hline
\text{New cap} &\capsC{+}{R}{C_2'} & \capsC{\overline{R}}{+}{C_2'} \\
\hline
\text{Boltzmann weight}  &  1 & 1 \\
\hline\end{array}\]
\caption{Boltzmann weights for the cap $C_2'$}
\label{FishC9}
\end{figure}
Now noting again that the top boundary edges all carry the $+$ spin, we use the previous argument (this time we change $\overline{R}$ to $+$ and $+$ to $\overline{R}$, and change the Boltzmann weights accordingly) to show that $Z_2'$ is equal to $-(z_n')^L$ times the partition function of the following system, which is $Z(\mathcal{S}_{n,L,\lambda, \sigma s_n,\tau,s_n z})$.
\begin{equation}
\begin{tikzpicture}[baseline=(current bounding box.center)]
  \draw (0,0) to (1.5,0);
  \draw  [dashed] (1.5,0) to (2.5,0);
  \draw (2.5,0) to (3.5,0);
  \draw (0,1) to (1.5,1);
  \draw  [dashed] (1.5,1) to (2.5,1);
  \draw (2.5,1) to (3.5,1);
  \draw (3.5,0) arc(-90:90:0.5);
  \draw (0.5,-0.5) to (0.5,1.5);
  \draw (1.5,-0.5) to (1.5,1.5);
  \draw (2.5,-0.5) to (2.5,1.5);
  \filldraw[black] (0.5,0) circle (2pt);
  \filldraw[black] (1.5,0) circle (2pt);
  \filldraw[black] (2.5,0) circle (2pt);
  \filldraw[black] (0.5,1) circle (2pt);
  \filldraw[black] (1.5,1) circle (2pt);
  \filldraw[black] (2.5,1) circle (2pt);
  \filldraw[black] (4,0.5) circle (2pt);
  \node at (0,0) [anchor=east] {$+$};
  \node at (0,1) [anchor=east] {$\overline{R}$};
  \node at (3,1) [anchor=south] {$\Gamma$};
  \node at (3,0) [anchor=north] {$\Delta$};
  \node at (3.5,1) [anchor=south] {$\frac{1}{z_n'}$};
  \node at (3.5,0) [anchor=north] {$\frac{1}{z_n'}$};
  \node at (0,0) [anchor=east] {$+$};
\end{tikzpicture}
\end{equation}

Therefore we conclude that
\begin{eqnarray}
(\frac{1}{z_n})^L  Z(\mathcal{S}_{n,L,\lambda,\sigma,\tau,z})&=&\frac{(qz_n-1)(1-z_n')}{z_n'+qz_n-(q+1)}z_n'^L Z(\mathcal{S}_{n,L,\lambda,\sigma,\tau,s_nz})  \nonumber\\
&&-\frac{q(z_nz_n'-1)}{qz_n+z_n'-(q+1)}q^Lz_n'^L Z(\mathcal{S}_{n,L,\lambda,\sigma s_n,\tau,s_nz}).
\end{eqnarray}
By rearranging and changing $z$ to $s_n z$ we reach the conclusion of the theorem.
\end{proof}

\subsection{Relation to Demazure-Lusztig operators of type C}\label{Sect.3.4}
The recursive relations for colored stochastic symplectic ice shown in Section \ref{Sect.3.3} are related to Demazure-Lusztig operators of type C.
We explain this connection in this section.

Viewed as operators on rational functions of $u=(u_1,\cdots,u_n)$, Demazure-Lusztig operators of type C can be given as follows (see \cite{BBL,Pus1,Pus2} for details). For $1\leq i\leq n-1$, define
\begin{equation}
    s_i (u_1,\cdots,u_n)=(u_1,\cdots,u_{i+1},u_i,\cdots,u_n),
\end{equation}
that is, $s_i$ transposes $u_i$ and $u_{i+1}$. Also define
\begin{equation}
    s_n (u_1,\cdots,u_n)=(u_1,\cdots,u_{n-1},u_n^{-1}).
\end{equation}
For every $1\leq i\leq n$, and any rational function $f(u)$ of $u$, we let
\begin{equation}
    s_if(u):=f(s_i u).
\end{equation}
Then Demazure-Lusztig operators (with parameter $v$) $\mathcal{L}_{i,v}$ are given by
\begin{equation}
    \mathcal{L}_{i,v}(f)=\frac{1-v}{u^{\alpha_i}-1}f+\frac{v u^{\alpha_i}-1}{u^{\alpha_i}-1}s_i(f),
\end{equation}
where $\{\alpha_i\}$ are the simple roots of type $C_n$, that is, $\alpha_i=\epsilon_i-\epsilon_{i+1}$ for $1\leq i\leq n-1$ and $\alpha_n=2\epsilon_n$. Here $\epsilon_i$ is the $n$-dimensional vector such that its $i$th coordinate is $1$ and the other coordinates are $0$, for every $1\leq i\leq n$.

We also let $\hat{\mathcal{L}}_{i,v}=\mathcal{L}_{i,v}-v+1$. Note that from the quadratic relation for $\mathcal{L}_{i,v}$
\begin{equation}
    \mathcal{L}_{i,v}^2=(v-1)\mathcal{L}_{i,v}+v,
\end{equation}
we obtain
\begin{equation}
    \hat{\mathcal{L}}_{i,v}\mathcal{L}_{i,v}=v.
\end{equation}

In order to relate the recursive relations to Demazure-Lusztig operator of type C, we make the following change of variables. We let
\begin{equation}
    u_i=\frac{1-q z_i}{1-z_i}
\end{equation}
for every $1\leq i\leq n$. Then we have
\begin{equation}
    z_i=\frac{1-u_i}{q-u_i},
\end{equation}
\begin{equation}
    z_i'=\frac{1-q u_i}{1-u_i}.
\end{equation}
Under this change of variables, we obtain that for every $1\leq i \leq n-1$,
\begin{equation}
    A(q,z,i)=\frac{(q-1)u_i}{u_i-u_{i+1}},
\end{equation}
\begin{equation}
    B(q,z,i)=\frac{q u_i-u_{i+1}}{u_i-u_{i+1}},
\end{equation}
\begin{equation}
    C(q,z)=\frac{1-q}{q(1-u_n^2)},
\end{equation}
\begin{equation}
    D(q,z)=\frac{1-q u_n^2}{q(1-u_n^2)},
\end{equation}
where $A(q,z,i)$ and $B(q,z,i)$ are as in Theorem \ref{Thm2C}, and $C(q,z)$ and $D(q,z)$ are as in Theorem \ref{Thm3C}.

Now we let 
\begin{equation}
    Z(\tilde{\mathcal{S}}_{n,L,\lambda,\sigma,\tau,u})=Z(\mathcal{S}_{n,L,\lambda,\sigma,\tau,z})\frac{q^{\sum_{i=1}^n (n-i) 1_{\sigma(i)>0}+(L+1)\sum_{i=1}^n 1_{\sigma(i)<0}}}{\prod_{i=1}^n z_i^L}
\end{equation}
as a function of $u=(u_1,\cdots,u_n)$.
Then Theorems \ref{Thm2C}-\ref{Thm3C} translate into the following 

\begin{theorem}
For $1\leq i\leq n-1$, if $\sigma(i+1)>\sigma(i)$, we have
\begin{equation}
    Z(\tilde{\mathcal{S}}_{n,L,\lambda,\sigma s_i,\tau,u})=\hat{\mathcal{L}}_{i,q}(Z(\tilde{\mathcal{S}}_{n,L,\lambda,\sigma,\tau,u})).
\end{equation}
Moreover, if $\sigma(n)>0$, we have
\begin{equation}
    Z(\tilde{\mathcal{S}}_{n,L,\lambda,\sigma s_n,\tau,u})=-\mathcal{L}_{n,q}(Z(\tilde{\mathcal{S}}_{n,L,\lambda,\sigma,\tau,u})).
\end{equation}
\end{theorem}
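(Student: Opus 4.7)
The plan is to directly translate Theorems \ref{Thm2C} and \ref{Thm3C} into the $u$-variables and verify that the normalization factor in $Z(\tilde{\mathcal{S}}_{n,L,\lambda,\sigma,\tau,u})$ is precisely what is needed to convert the recursions into applications of $\hat{\mathcal{L}}_{i,q}$ and $-\mathcal{L}_{n,q}$. The crux is bookkeeping: tracking how the prefactor $q^{N(\sigma)}/\prod z_j^L$ (where $N(\sigma):=\sum_{i=1}^n(n-i)\mathbf{1}_{\sigma(i)>0}+(L+1)\sum_{i=1}^n\mathbf{1}_{\sigma(i)<0}$) transforms when $\sigma$ is replaced by $\sigma s_i$ (for $i<n$) or $\sigma s_n$, and when $z$ is replaced by $s_iz$ or $s_nz$, and reconciling these with the rational coefficients appearing in the Demazure--Lusztig operators.

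For the first identity ($i\le n-1$), I would first check that the change $\sigma\mapsto\sigma s_i$ only swaps $\sigma(i),\sigma(i+1)$, and so $N(\sigma s_i)-N(\sigma)=\mathbf{1}_{\sigma(i+1)>0}-\mathbf{1}_{\sigma(i)>0}$. This is precisely the exponent of $q$ appearing on the left-hand side of Theorem \ref{Thm2C}, so multiplying that identity by $q^{N(\sigma)}/\prod z_j^L$ gives
\[
Z(\tilde{\mathcal{S}}_{n,L,\lambda,\sigma s_i,\tau,u})
= -A(q,z,i)\,Z(\tilde{\mathcal{S}}_{n,L,\lambda,\sigma,\tau,u})+B(q,z,i)\,s_i Z(\tilde{\mathcal{S}}_{n,L,\lambda,\sigma,\tau,u}),
\]
after noting that $N(\sigma)$ and $\prod z_j^L$ are both $s_i$-invariant, so that $Z(\sigma,s_iz)\cdot q^{N(\sigma)}/\prod z_j^L = s_i Z(\tilde{\mathcal{S}}_{n,L,\lambda,\sigma,\tau,u})$. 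Then the routine computation $A(q,z,i)=(q-1)u_i/(u_i-u_{i+1})$ and $B(q,z,i)=(qu_i-u_{i+1})/(u_i-u_{i+1})$, together with $u^{\alpha_i}=u_i/u_{i+1}$, identifies the right-hand side with $\hat{\mathcal{L}}_{i,q}=\mathcal{L}_{i,q}+1-q$ applied to $Z(\tilde{\mathcal{S}}_{n,L,\lambda,\sigma,\tau,u})$.

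For the second identity ($\sigma(n)>0$), I would first verify that under the change of variables $z_n\mapsto 1/z_n'$ corresponds exactly to $u_n\mapsto 1/u_n$, so that $s_n$ acts consistently on both sides. A direct computation shows $N(\sigma s_n)-N(\sigma)=L+1$ when $\sigma(n)>0$. Multiplying Theorem \ref{Thm3C} by $q^{N(\sigma)}/\prod z_j^L$, the prefactor $(q/z_n)^L$ on the left combines with $q^{N(\sigma s_n)-N(\sigma)}=q^{L+1}$ to yield a clean factor of $q$; the $z_n^{-L}$ and $z_n'^L$ factors on the right correctly absorb into the normalization of $Z(\tilde{\mathcal{S}}_{n,L,\lambda,\sigma,\tau,u})$ and $s_n Z(\tilde{\mathcal{S}}_{n,L,\lambda,\sigma,\tau,u})$, respectively. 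Substituting the $u$-variable forms $qC(q,z)=(1-q)/(1-u_n^2)$ and $qD(q,z)=(1-qu_n^2)/(1-u_n^2)$, one compares with $-\mathcal{L}_{n,q}$, which (using $u^{\alpha_n}=u_n^2$) equals $(1-q)/(1-u_n^2)-(1-qu_n^2)/(1-u_n^2)\cdot s_n$, giving the identity.

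The main obstacle is not any single step but rather the careful accounting of exponents and sign conventions: in particular, verifying the identities $N(\sigma s_i)-N(\sigma)=\mathbf{1}_{\sigma(i+1)>0}-\mathbf{1}_{\sigma(i)>0}$ and $N(\sigma s_n)-N(\sigma)=L+1$ (for $\sigma(n)>0$) so that the prefactor asymmetry in Theorems \ref{Thm2C}--\ref{Thm3C} is exactly compensated, and checking that the $s_i$- and $s_n$-invariance of $\prod z_j^L$ combined with the $s_n$-transformation of $z_n^L$ lines up with the definition of $s_n$ on rational functions of $u$. Once these bookkeeping identities are in hand, the rest reduces to elementary algebraic simplification of $A,B,C,D$ in the $u$-variables.
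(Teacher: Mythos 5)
Your proposal is correct and is essentially the argument the paper intends (the paper simply asserts that Theorems \ref{Thm2C}--\ref{Thm3C} ``translate into'' this statement after the change of variables, leaving exactly the bookkeeping you carry out). All the key checks hold: $N(\sigma s_i)-N(\sigma)=1_{\sigma(i+1)>0}-1_{\sigma(i)>0}$, $N(\sigma s_n)-N(\sigma)=L+1$ for $\sigma(n)>0$, the correspondence $z_n\mapsto 1/z_n'\Leftrightarrow u_n\mapsto 1/u_n$, and the identification of $-A,B$ with the coefficients of $\hat{\mathcal{L}}_{i,q}$ and of $qC,-qD$ with those of $-\mathcal{L}_{n,q}$.
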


\section{Another colored model for the stochastic symplectic ice}\label{Sect.4}

In this section, we present a different colored model for the stochastic symplectic ice. In this model, the set of colors is $\{c_1,\cdots,c_n\}$. We also denote by $c_0=+$. 

The model and related Boltzmann weights are introduced in Section \ref{Sect.4.1}. Then the Boltzmann weights for the R-matrices are introduced in Section \ref{Sect.4.2}, and the Yang-Baxter equation is proved there. Finally in Section \ref{Sect.4.3} the reflection equation is introduced, based on which the recursive relations of the partition function are derived.

\subsection{The colored model}\label{Sect.4.1}
We introduce the new colored stochastic symplectic ice in this section. The set of colors for this model is given by $\{c_1,\cdots,c_n\}$. 

In this model, there are also two types of vertices termed ``colored stochastic $\Gamma$ vertex'' and ``colored stochastic $\Delta$ vertex''. The model depends on $n$ spectral parameters $z_1,\cdots,z_n$ and a deformation parameter $q$, and we take 
\begin{equation*}
    z_i'=q+1-\frac{1}{z_i}.
\end{equation*}
The Boltzmann weights for these two types of vertices are listed in Figures \ref{GammaN}-\ref{DeltaN}. 

\begin{figure}[h]
\[
\begin{array}{|c|c|c|c|c|c|}
\hline
  \tt{a}_1\slash\tt{a}_2&\tt{a}_2\slash\tt{a}_2&\tt{b}_1&\tt{b}_2&\tt{c}_1&\tt{c}_2\\
\hline
\gammaice{c_{\alpha}}{c_{\alpha}}{c_{\alpha}}{c_{\alpha}} &
  \gammaice{c_{\beta}}{c_{\beta}}{c_{\beta}}{c_{\beta}} &
  \gammaice{c_{\alpha}}{c_{\beta}}{c_{\alpha}}{c_{\beta}} &
  \gammaice{c_{\beta}}{c_{\alpha}}{c_{\beta}}{c_{\alpha}} &
  \gammaice{c_{\beta}}{c_{\alpha}}{c_{\alpha}}{c_{\beta}} &
  \gammaice{c_{\alpha}}{c_{\beta}}{c_{\beta}}{c_{\alpha}}\\
\hline
   1 & 1 & z_i & q z_i &1-q z_i & 1-z_i\\
\hline\end{array}\]
\caption{Boltzmann weights for colored stochastic $\Gamma$ vertex with spectral parameter $z_i$, where $\alpha<\beta$}
\label{GammaN}
\end{figure}

\begin{figure}[h]
\[
\begin{array}{|c|c|c|c|c|c|}
\hline
  \tt{a}_1\slash\tt{a}_2&\tt{a}_1\slash\tt{a}_2&\tt{b}_1&\tt{b}_2&\tt{d}_1&\tt{d}_2\\
\hline
\gammaice{c_{\alpha}}{c_{\alpha}}{c_{\alpha}}{c_{\alpha}} &
  \gammaice{c_{\beta}}{c_{\beta}}{c_{\beta}}{c_{\beta}} &
  \gammaice{c_{\alpha}}{c_{\beta}}{c_{\alpha}}{c_{\beta}} &
  \gammaice{c_{\beta}}{c_{\alpha}}{c_{\beta}}{c_{\alpha}} &
  \gammaice{c_{\beta}}{c_{\beta}}{c_{\alpha}}{c_{\alpha}} &
  \gammaice{c_{\alpha}}{c_{\alpha}}{c_{\beta}}{c_{\beta}}\\
\hline
   1 & 1 & z_i' & \frac{1}{q} z_i' &  1-z_i' & 1-\frac{1}{q}z_i'\\
\hline\end{array}\]
\caption{Boltzmann weights for colored stochastic $\Delta$ vertex with spectral parameter $z_i$, where $\alpha<\beta$ and $z_i'=q+1-\frac{1}{z_i}$}
\label{DeltaN}
\end{figure}

The basic set-up of the new model is similar to that of the colored model given in Section \ref{Sect.3}. The difference lies in the assignment of boundary conditions: now we specify two permutations $\sigma,\tau$ from the symmetric group $S_n$, assign the color $c_{\tau(i)}$ to each column labeled $\lambda_i+n+1-i$ for $1\leq i\leq n$ at the bottom, and assign the color $c_{\sigma(i)}$ to the left boundary of the $i$th row of colored stochastic $\Gamma$ vertices. The Boltzmann weights for the caps are given in Figure \ref{ColorCapN}. 

\begin{figure}[h]
\[
\begin{array}{|c|c|c|c|c|c|}
\hline
\text{Cap} &\caps{+}{+}  & \caps{c_{\alpha}}{c_{\alpha}} \\
\hline
\text{Boltzmann weight}  &  1  & 1 \\
\hline\end{array}\]
\caption{Boltzmann weights of the caps for the new colored model, where $\alpha\in \{1,2,\cdots,n\}$}
\label{ColorCapN}
\end{figure}

Hereafter we denote by $\mathcal{U}_{n,L,\lambda,\sigma,\tau,z}$ the collection of admissible configurations with the corresponding data. We also denote by $Z(\mathcal{U}_{n,L,\lambda,\sigma,\tau,z})$ the partition function. We assume that $L\geq \lambda_1+n$, too.

We note that the Boltzmann weights for this model are also stochastic, which allows a probabilistic interpretation of the model when the condition (\ref{Co1}) is satisfied. The colored model can be similarly interpreted as stochastic dynamics as the colored model given in Section \ref{Sect.3}. The main difference is that in this model, the particles don't change color when they are reflected at the caps.

\subsection{The R-matrix and the Yang-Baxter equation}\label{Sect.4.2}
For this model, we also find three sets of Yang-Baxter equations. The corresponding R-matrices are termed ``colored stochastic $\Gamma-\Gamma$ vertex'', ``colored stochastic $\Delta-\Gamma$ vertex'' and ``colored stochastic $\Delta-\Delta$ vertex'', too. The Boltzmann weights for the three types of R-matrices are given in Figures \ref{RmatrixC1N}-\ref{RmatrixC3N}.

\begin{figure}[h]
\[\begin{array}{|c|c|c|c|c|c|}
\hline
\begin{tikzpicture}[scale=0.7]
\draw (0,0) to [out = 0, in = 180] (2,2);
\draw (0,2) to [out = 0, in = 180] (2,0);
\draw[fill=white] (0,0) circle (.35);
\draw[fill=white] (0,2) circle (.35);
\draw[fill=white] (2,0) circle (.35);
\draw[fill=white] (2,2) circle (.35);
\node at (0,0) {$c_{\alpha}$};
\node at (0,2) {$c_{\alpha}$};
\node at (2,2) {$c_{\alpha}$};
\node at (2,0) {$c_{\alpha}$};
\path[fill=white] (1,1) circle (.3);
\node at (1,1) {$R_{z_i,z_j}$};
\end{tikzpicture}&
\begin{tikzpicture}[scale=0.7]
\draw (0,0) to [out = 0, in = 180] (2,2);
\draw (0,2) to [out = 0, in = 180] (2,0);
\draw[fill=white] (0,0) circle (.35);
\draw[fill=white] (0,2) circle (.35);
\draw[fill=white] (2,0) circle (.35);
\draw[fill=white] (2,2) circle (.35);
\node at (0,0) {$c_{\beta}$};
\node at (0,2) {$c_{\beta}$};
\node at (2,2) {$c_{\beta}$};
\node at (2,0) {$c_{\beta}$};
\path[fill=white] (1,1) circle (.3);
\node at (1,1) {$R_{z_i,z_j}$};
\end{tikzpicture}&
\begin{tikzpicture}[scale=0.7]
\draw (0,0) to [out = 0, in = 180] (2,2);
\draw (0,2) to [out = 0, in = 180] (2,0);
\draw[fill=white] (0,0) circle (.35);
\draw[fill=white] (0,2) circle (.35);
\draw[fill=white] (2,0) circle (.35);
\draw[fill=white] (2,2) circle (.35);
\node at (0,0) {$c_{\alpha}$};
\node at (0,2) {$c_{\beta}$};
\node at (2,2) {$c_{\alpha}$};
\node at (2,0) {$c_{\beta}$};
\path[fill=white] (1,1) circle (.3);
\node at (1,1) {$R_{z_i,z_j}$};
\end{tikzpicture}&
\begin{tikzpicture}[scale=0.7]
\draw (0,0) to [out = 0, in = 180] (2,2);
\draw (0,2) to [out = 0, in = 180] (2,0);
\draw[fill=white] (0,0) circle (.35);
\draw[fill=white] (0,2) circle (.35);
\draw[fill=white] (2,0) circle (.35);
\draw[fill=white] (2,2) circle (.35);
\node at (0,0) {$c_{\beta}$};
\node at (0,2) {$c_{\alpha}$};
\node at (2,2) {$c_{\beta}$};
\node at (2,0) {$c_{\alpha}$};
\path[fill=white] (1,1) circle (.3);
\node at (1,1) {$R_{z_i,z_j}$};
\end{tikzpicture}&
\begin{tikzpicture}[scale=0.7]
\draw (0,0) to [out = 0, in = 180] (2,2);
\draw (0,2) to [out = 0, in = 180] (2,0);
\draw[fill=white] (0,0) circle (.35);
\draw[fill=white] (0,2) circle (.35);
\draw[fill=white] (2,0) circle (.35);
\draw[fill=white] (2,2) circle (.35);
\node at (0,0) {$c_{\beta}$};
\node at (0,2) {$c_{\alpha}$};
\node at (2,2) {$c_{\alpha}$};
\node at (2,0) {$c_{\beta}$};
\path[fill=white] (1,1) circle (.3);
\node at (1,1) {$R_{z_i,z_j}$};
\end{tikzpicture}&
\begin{tikzpicture}[scale=0.7]
\draw (0,0) to [out = 0, in = 180] (2,2);
\draw (0,2) to [out = 0, in = 180] (2,0);
\draw[fill=white] (0,0) circle (.35);
\draw[fill=white] (0,2) circle (.35);
\draw[fill=white] (2,0) circle (.35);
\draw[fill=white] (2,2) circle (.35);
\node at (0,0) {$c_{\alpha}$};
\node at (0,2) {$c_{\beta}$};
\node at (2,2) {$c_{\beta}$};
\node at (2,0) {$c_{\alpha}$};
\path[fill=white] (1,1) circle (.3);
\node at (1,1) {$R_{z_i,z_j}$};
\end{tikzpicture}\\
\hline
1&1
&\frac{z_i-z_j}{1-(q+1)z_j+q z_i z_j}
&\frac{q(z_i-z_j)}{1-(q+1)z_j+q z_i z_j}
&\frac{(1-q z_i)(1-z_j)}{1-(q+1)z_j+q z_i z_j}
&\frac{(1-z_i)(1-q z_j)}{1-(q+1)z_j+q z_i z_j}\\
\hline
\end{array}\]
\caption{Boltzmann weights for colored stochastic $\Gamma-\Gamma$ vertex with spectral parameters $z_i$ and $z_j$: where $\alpha<\beta$}
\label{RmatrixC1N}
\end{figure}

\begin{figure}[h]
\[\begin{array}{|c|c|c|c|c|c|}
\hline
\begin{tikzpicture}[scale=0.7]
\draw (0,0) to [out = 0, in = 180] (2,2);
\draw (0,2) to [out = 0, in = 180] (2,0);
\draw[fill=white] (0,0) circle (.35);
\draw[fill=white] (0,2) circle (.35);
\draw[fill=white] (2,0) circle (.35);
\draw[fill=white] (2,2) circle (.35);
\node at (0,0) {$c_{\alpha}$};
\node at (0,2) {$c_{\alpha}$};
\node at (2,2) {$c_{\alpha}$};
\node at (2,0) {$c_{\alpha}$};
\path[fill=white] (1,1) circle (.3);
\node at (1,1) {$R_{z_i,z_j}$};
\end{tikzpicture}&
\begin{tikzpicture}[scale=0.7]
\draw (0,0) to [out = 0, in = 180] (2,2);
\draw (0,2) to [out = 0, in = 180] (2,0);
\draw[fill=white] (0,0) circle (.35);
\draw[fill=white] (0,2) circle (.35);
\draw[fill=white] (2,0) circle (.35);
\draw[fill=white] (2,2) circle (.35);
\node at (0,0) {$c_{\beta}$};
\node at (0,2) {$c_{\beta}$};
\node at (2,2) {$c_{\beta}$};
\node at (2,0) {$c_{\beta}$};
\path[fill=white] (1,1) circle (.3);
\node at (1,1) {$R_{z_i,z_j}$};
\end{tikzpicture}&
\begin{tikzpicture}[scale=0.7]
\draw (0,0) to [out = 0, in = 180] (2,2);
\draw (0,2) to [out = 0, in = 180] (2,0);
\draw[fill=white] (0,0) circle (.35);
\draw[fill=white] (0,2) circle (.35);
\draw[fill=white] (2,0) circle (.35);
\draw[fill=white] (2,2) circle (.35);
\node at (0,0) {$c_{\alpha}$};
\node at (0,2) {$c_{\beta}$};
\node at (2,2) {$c_{\alpha}$};
\node at (2,0) {$c_{\beta}$};
\path[fill=white] (1,1) circle (.3);
\node at (1,1) {$R_{z_i,z_j}$};
\end{tikzpicture}&
\begin{tikzpicture}[scale=0.7]
\draw (0,0) to [out = 0, in = 180] (2,2);
\draw (0,2) to [out = 0, in = 180] (2,0);
\draw[fill=white] (0,0) circle (.35);
\draw[fill=white] (0,2) circle (.35);
\draw[fill=white] (2,0) circle (.35);
\draw[fill=white] (2,2) circle (.35);
\node at (0,0) {$c_{\beta}$};
\node at (0,2) {$c_{\alpha}$};
\node at (2,2) {$c_{\beta}$};
\node at (2,0) {$c_{\alpha}$};
\path[fill=white] (1,1) circle (.3);
\node at (1,1) {$R_{z_i,z_j}$};
\end{tikzpicture}&
\begin{tikzpicture}[scale=0.7]
\draw (0,0) to [out = 0, in = 180] (2,2);
\draw (0,2) to [out = 0, in = 180] (2,0);
\draw[fill=white] (0,0) circle (.35);
\draw[fill=white] (0,2) circle (.35);
\draw[fill=white] (2,0) circle (.35);
\draw[fill=white] (2,2) circle (.35);
\node at (0,0) {$c_{\beta}$};
\node at (0,2) {$c_{\beta}$};
\node at (2,2) {$c_{\alpha}$};
\node at (2,0) {$c_{\alpha}$};
\path[fill=white] (1,1) circle (.3);
\node at (1,1) {$R_{z_i,z_j}$};
\end{tikzpicture}&
\begin{tikzpicture}[scale=0.7]
\draw (0,0) to [out = 0, in = 180] (2,2);
\draw (0,2) to [out = 0, in = 180] (2,0);
\draw[fill=white] (0,0) circle (.35);
\draw[fill=white] (0,2) circle (.35);
\draw[fill=white] (2,0) circle (.35);
\draw[fill=white] (2,2) circle (.35);
\node at (0,0) {$c_{\alpha}$};
\node at (0,2) {$c_{\alpha}$};
\node at (2,2) {$c_{\beta}$};
\node at (2,0) {$c_{\beta}$};
\path[fill=white] (1,1) circle (.3);
\node at (1,1) {$R_{z_i,z_j}$};
\end{tikzpicture}\\
\hline
1&1
&\frac{z_i'+q z_j-(q+1)z_i'z_j}{1-z_i'z_j}
&\frac{q^{-1}z_i'+z_j-(1+q^{-1})z_i'z_j}{1-z_i'z_j}
&\frac{(1-z_i')(1-q z_j)}{1-z_i'z_j}
&\frac{(1-q^{-1}z_i')(1-z_j)}{1-z_i'z_j}\\
\hline
\end{array}\]
\caption{Boltzmann weights for colored stochastic $\Delta-\Gamma$ vertex with spectral parameters $z_i$ and $z_j$: where $\alpha<\beta$}
\label{RmatrixC2N}
\end{figure}

\begin{figure}[h]
\[\begin{array}{|c|c|c|c|c|c|}
\hline
\begin{tikzpicture}[scale=0.7]
\draw (0,0) to [out = 0, in = 180] (2,2);
\draw (0,2) to [out = 0, in = 180] (2,0);
\draw[fill=white] (0,0) circle (.35);
\draw[fill=white] (0,2) circle (.35);
\draw[fill=white] (2,0) circle (.35);
\draw[fill=white] (2,2) circle (.35);
\node at (0,0) {$c_{\alpha}$};
\node at (0,2) {$c_{\alpha}$};
\node at (2,2) {$c_{\alpha}$};
\node at (2,0) {$c_{\alpha}$};
\path[fill=white] (1,1) circle (.3);
\node at (1,1) {$R_{z_i,z_j}$};
\end{tikzpicture}&
\begin{tikzpicture}[scale=0.7]
\draw (0,0) to [out = 0, in = 180] (2,2);
\draw (0,2) to [out = 0, in = 180] (2,0);
\draw[fill=white] (0,0) circle (.35);
\draw[fill=white] (0,2) circle (.35);
\draw[fill=white] (2,0) circle (.35);
\draw[fill=white] (2,2) circle (.35);
\node at (0,0) {$c_{\beta}$};
\node at (0,2) {$c_{\beta}$};
\node at (2,2) {$c_{\beta}$};
\node at (2,0) {$c_{\beta}$};
\path[fill=white] (1,1) circle (.3);
\node at (1,1) {$R_{z_i,z_j}$};
\end{tikzpicture}&
\begin{tikzpicture}[scale=0.7]
\draw (0,0) to [out = 0, in = 180] (2,2);
\draw (0,2) to [out = 0, in = 180] (2,0);
\draw[fill=white] (0,0) circle (.35);
\draw[fill=white] (0,2) circle (.35);
\draw[fill=white] (2,0) circle (.35);
\draw[fill=white] (2,2) circle (.35);
\node at (0,0) {$c_{\alpha}$};
\node at (0,2) {$c_{\beta}$};
\node at (2,2) {$c_{\alpha}$};
\node at (2,0) {$c_{\beta}$};
\path[fill=white] (1,1) circle (.3);
\node at (1,1) {$R_{z_i,z_j}$};
\end{tikzpicture}&
\begin{tikzpicture}[scale=0.7]
\draw (0,0) to [out = 0, in = 180] (2,2);
\draw (0,2) to [out = 0, in = 180] (2,0);
\draw[fill=white] (0,0) circle (.35);
\draw[fill=white] (0,2) circle (.35);
\draw[fill=white] (2,0) circle (.35);
\draw[fill=white] (2,2) circle (.35);
\node at (0,0) {$c_{\beta}$};
\node at (0,2) {$c_{\alpha}$};
\node at (2,2) {$c_{\beta}$};
\node at (2,0) {$c_{\alpha}$};
\path[fill=white] (1,1) circle (.3);
\node at (1,1) {$R_{z_i,z_j}$};
\end{tikzpicture}&
\begin{tikzpicture}[scale=0.7]
\draw (0,0) to [out = 0, in = 180] (2,2);
\draw (0,2) to [out = 0, in = 180] (2,0);
\draw[fill=white] (0,0) circle (.35);
\draw[fill=white] (0,2) circle (.35);
\draw[fill=white] (2,0) circle (.35);
\draw[fill=white] (2,2) circle (.35);
\node at (0,0) {$c_{\beta}$};
\node at (0,2) {$c_{\alpha}$};
\node at (2,2) {$c_{\alpha}$};
\node at (2,0) {$c_{\beta}$};
\path[fill=white] (1,1) circle (.3);
\node at (1,1) {$R_{z_i,z_j}$};
\end{tikzpicture}&
\begin{tikzpicture}[scale=0.7]
\draw (0,0) to [out = 0, in = 180] (2,2);
\draw (0,2) to [out = 0, in = 180] (2,0);
\draw[fill=white] (0,0) circle (.35);
\draw[fill=white] (0,2) circle (.35);
\draw[fill=white] (2,0) circle (.35);
\draw[fill=white] (2,2) circle (.35);
\node at (0,0) {$c_{\alpha}$};
\node at (0,2) {$c_{\beta}$};
\node at (2,2) {$c_{\beta}$};
\node at (2,0) {$c_{\alpha}$};
\path[fill=white] (1,1) circle (.3);
\node at (1,1) {$R_{z_i,z_j}$};
\end{tikzpicture}\\
\hline
1&1
&\frac{z_j'-z_i'}{q-(q+1)z_i'+z_i'z_j'}
&\frac{q(z_j'-z_i')}{q-(q+1)z_i'+z_i'z_j'}
&\frac{(1-z_i')(q-z_j')}{q-(q+1)z_i'+z_i'z_j'}
&\frac{(1-z_j')(q-z_i')}{q-(q+1)z_i'+z_i'z_j'}\\
\hline
\end{array}\]
\caption{Boltzmann weights for colored stochastic $\Delta-\Delta$ vertex with spectral parameters $z_i$ and $z_j$: where $\alpha<\beta$}
\label{RmatrixC3N}
\end{figure}

The following theorem gives the three sets of Yang-Baxter equations for the new colored model.

\begin{theorem}\label{YBE2N}
For any $(X,Y)\in\{(\Delta,\Gamma),(\Gamma,\Gamma),(\Delta,\Delta)\}$ the following holds. Assume that $S$ is colored stochastic $X$ vertex with spectral parameter $z_i$, $T$ is colored stochastic $Y$ vertex with spectral parameter $z_j$, and $R$ is colored stochastic $X-Y$ vertex with spectral parameters $z_i,z_j$. Then the partition functions of the following two configurations are equal for any fixed combination of colors $a,b,c,d,e,f\in \{c_0,\cdots,c_n\}$.
\begin{equation}
\hfill
\begin{tikzpicture}[baseline=(current bounding box.center)]
  \draw (0,1) to [out = 0, in = 180] (2,3) to (4,3);
  \draw (0,3) to [out = 0, in = 180] (2,1) to (4,1);
  \draw (3,0) to (3,4);
  \draw[fill=white] (0,1) circle (.3);
  \draw[fill=white] (0,3) circle (.3);
  \draw[fill=white] (3,4) circle (.3);
  \draw[fill=white] (4,3) circle (.3);
  \draw[fill=white] (4,1) circle (.3);
  \draw[fill=white] (3,0) circle (.3);
  \draw[fill=white] (2,3) circle (.3);
  \draw[fill=white] (2,1) circle (.3);
  \draw[fill=white] (3,2) circle (.3);
  \node at (0,1) {$a$};
  \node at (0,3) {$b$};
  \node at (3,4) {$c$};
  \node at (4,3) {$d$};
  \node at (4,1) {$e$};
  \node at (3,0) {$f$};
  \node at (2,3) {$g$};
  \node at (3,2) {$h$};
  \node at (2,1) {$i$};
\filldraw[black] (3,3) circle (2pt);
\node at (3,3) [anchor=south west] {$S$};
\filldraw[black] (3,1) circle (2pt);
\node at (3,1) [anchor=north west] {$T$};
\filldraw[black] (1,2) circle (2pt);
\node at (1,2) [anchor=west] {$R$};
\end{tikzpicture}\qquad\qquad
\begin{tikzpicture}[baseline=(current bounding box.center)]
  \draw (0,1) to (2,1) to [out = 0, in = 180] (4,3);
  \draw (0,3) to (2,3) to [out = 0, in = 180] (4,1);
  \draw (1,0) to (1,4);
  \draw[fill=white] (0,1) circle (.3);
  \draw[fill=white] (0,3) circle (.3);
  \draw[fill=white] (1,4) circle (.3);
  \draw[fill=white] (4,3) circle (.3);
  \draw[fill=white] (4,1) circle (.3);
  \draw[fill=white] (1,0) circle (.3);
  \draw[fill=white] (2,3) circle (.3);
  \draw[fill=white] (1,2) circle (.3);
  \draw[fill=white] (2,1) circle (.3);
  \node at (0,1) {$a$};
  \node at (0,3) {$b$};
  \node at (1,4) {$c$};
  \node at (4,3) {$d$};
  \node at (4,1) {$e$};
  \node at (1,0) {$f$};
  \node at (2,3) {$j$};
  \node at (1,2) {$k$};
  \node at (2,1) {$l$};
\filldraw[black] (1,3) circle (2pt);
\node at (1,3) [anchor=south west] {$T$};
\filldraw[black] (1,1) circle (2pt);
\node at (1,1) [anchor=north west]{$S$};
\filldraw[black] (3,2) circle (2pt);
\node at (3,2) [anchor=west] {$R$};
\end{tikzpicture}
\end{equation}
\end{theorem}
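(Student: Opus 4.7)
The plan is to observe that this theorem is essentially a specialization of Theorem \ref{YBE}, because the only structural change between the colored model of Section \ref{Sect.3} and the model here is the set of admissible colors on edges (and the cap weights), whereas the Boltzmann weights for the ordinary vertices and for the three R-matrices in Figures \ref{GammaN}--\ref{RmatrixC3N} are literally the same functions of the relative order $\alpha<\beta$ as those in Figures \ref{Gamma}--\ref{Delta} and \ref{RmatrixC1}--\ref{RmatrixC3}. Since the Yang-Baxter configurations in \ref{YBE2N} do not involve any cap, the underlying combinatorial identity to be verified depends only on these vertex and R-matrix weights.

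Concretely, the argument proceeds in two steps. First, invoke conservation of colors for both the ordinary vertices and the R-matrices: at any vertex with nonzero weight, the multiset of colors on the two input edges equals the multiset on the two output edges. It follows that for any fixed boundary data $a,b,c,d,e,f \in \{c_0,c_1,\dots,c_n\}$, the colors on the inner edges $g,h,i$ (resp.\ $j,k,l$) of every admissible state lie in the submultiset of $\{a,b,c,d,e,f\}$, so in particular no negative color is ever introduced. Hence the two partition functions to be compared in Theorem \ref{YBE2N} are exactly the two partition functions of Theorem \ref{YBE} evaluated at the same boundary data restricted to the non-negative color alphabet, and Theorem \ref{YBE} gives their equality.

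If a self-contained verification is desired, one would simply repeat the argument of Theorem \ref{YBE} verbatim: by color conservation at most four distinct colors appear on the boundary of either configuration, and because each Boltzmann weight depends only on the relative order of the colors on its adjacent edges, it suffices to check the identity on a four-letter alphabet such as $\{c_0,c_1,c_2,c_3\}$. This reduces to at most $4^6=4096$ scalar identities, readily confirmed by a short SAGE computation. There is no genuine obstacle in the proof, since the case analysis is essentially identical to the one already carried out for Theorem \ref{YBE}; the only (very minor) subtlety is verifying that no admissible configuration here requires a color outside $\{c_0,\dots,c_n\}$, which is precisely what the color-conservation observation above guarantees.
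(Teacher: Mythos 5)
Your proposal is correct, and your primary route differs from the paper's. The paper proves Theorem \ref{YBE2N} by simply repeating, verbatim, the argument used for Theorem \ref{YBE}: color conservation plus the fact that each Boltzmann weight depends only on the relative order of the adjacent colors reduces the claim to at most $4^6$ scalar identities over a four-letter alphabet, which are then checked by a SAGE program. You instead observe that the vertex and R-matrix weights of Figures \ref{GammaN}--\ref{RmatrixC3N} are literally the same functions of the relative order $\alpha<\beta$ as those of Figures \ref{Gamma}--\ref{Delta} and \ref{RmatrixC1}--\ref{RmatrixC3}, that the Yang-Baxter configurations contain no caps, and that color conservation forces every inner edge of an admissible state to carry a color already present on the boundary; hence with boundary data in $\{c_0,\dots,c_n\}$ the two partition functions are exactly instances of those in Theorem \ref{YBE}, and equality follows with no new computation. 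This reduction is valid and more economical than the paper's restatement of the computer check, at the modest cost of making Section \ref{Sect.4} logically dependent on Section \ref{Sect.3}; your fallback (redoing the four-color SAGE verification) is precisely what the paper actually does, so either way the proof goes through.
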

\begin{proof}
From conservation of colors for both colored stochastic $\Gamma$ vertices and colored stochastic $\Delta$ vertices (note that the directions of input and output are different for these two types of vertices), it can be checked that at most four distinct colors (including $c_0$) can appear on the boundary edges in any of the two configurations, and that the color on an inner edge must be one of the colors on boundary edges (only considering admissible configurations). Moreover, the Boltzmann weight of a vertex only depends on the relative order of the colors on its adjacent four edges. Therefore it suffices to check the result for four colors, and there are at most $4^6$ possible combinations of boundary colors. These identities are checked using a SAGE program.
\end{proof}

\subsection{The reflection equation and recursive relations of the partition function}\label{Sect.4.3}

For the new model, we also have the reflection equation:

\begin{theorem}\label{Refl2N}
Assume that $S$ is colored stochastic $\Gamma-\Gamma$ vertex of spectral parameters $z_i,z_j$, $T$ is colored stochastic $\Delta-\Gamma$ vertex of spectral parameters $z_i,z_j$, $S'$ is colored stochastic $\Delta-\Delta$ vertex of spectral parameters $z_j,z_i$, and $T'$ is colored stochastic $\Delta-\Gamma$ vertex of spectral parameters $z_j,z_i$. Denote by $Z(I_5(\epsilon_1,\epsilon_2,\epsilon_3,\epsilon_4))$ the partition function of the following configuration with fixed combination of colors $\epsilon_1,\epsilon_2,\epsilon_3,\epsilon_4  \in\{c_0,\cdots,c_n\}$.
\begin{equation}
\hfill
I_5=
\begin{tikzpicture}[baseline=(current bounding box.center)]
  \draw (0,0) to (4,0);
  \draw (0,1) to (1,1) to [out=0, in=180] (4,2);
  \draw (0,2) to (1,2) to [out=0, in=180] (4,3);
  \draw (0,3) to (1,3) to [out=0, in=180] (4,1);
  \node at (0,0) [anchor=east] {$\epsilon_4$};
  \node at (0,1) [anchor=east] {$\epsilon_3$};
  \node at (0,2) [anchor=east] {$\epsilon_2$};
  \node at (0,3) [anchor=east] {$\epsilon_1$};
  \draw (4,2) arc(-90:90:0.5);
  \draw (4,0) arc(-90:90:0.5);
  \filldraw[black] (4.5,0.5) circle (2pt);
  \filldraw[black] (4.5,2.5) circle (2pt);
  \filldraw[black] (2.2,2.35) circle (2pt);
  \filldraw[black] (2.75,1.65) circle (2pt);
  \node at (2.25,2.5) [anchor=south] {$S$};
  \node at (2.8, 1.7) [anchor=south] {$T$};
\end{tikzpicture}
\end{equation}
Also denote by $Z(I_6(\epsilon_1,\epsilon_2,\epsilon_3,\epsilon_4))$ the partition function of the following configuration with fixed combination of colors $\epsilon_1,\epsilon_2,\epsilon_3,\epsilon_4 \in\{c_0,\cdots,c_n\}$.
\begin{equation}
\label{eqn:reflect2N}
\hfill
I_6=
\begin{tikzpicture}[baseline=(current bounding box.center)]
  \draw (0,3) to (4,3);
  \draw (0,1) to (1,1) to [out=0, in=180] (4,0);
  \draw (0,2) to (1,2) to [out=0, in=180] (4,1);
  \draw (0,0) to (1,0) to [out=0, in=180] (4,2);
  \node at (0,0) [anchor=east] {$\epsilon_4$};
  \node at (0,1) [anchor=east] {$\epsilon_3$};
  \node at (0,2) [anchor=east] {$\epsilon_2$};
  \node at (0,3) [anchor=east] {$\epsilon_1$};
  \draw (4,2) arc(-90:90:0.5);
  \draw (4,0) arc(-90:90:0.5);
  \filldraw[black] (4.5,0.5) circle (2pt);
  \filldraw[black] (4.5,2.5) circle (2pt);
  \filldraw[black] (2.2,0.67) circle (2pt);
  \filldraw[black] (2.8,1.35) circle (2pt);
  \node at (2.25,0.8) [anchor=south] {$S'$};
  \node at (2.8, 1.5) [anchor=south] {$T'$};
\end{tikzpicture}
\end{equation}
Then for any fixed combination of colors $\epsilon_1,\epsilon_2,\epsilon_3,\epsilon_4\in \{c_0,\cdots, c_n\}$, we have
\begin{equation}
    Z(I_5(\epsilon_1,\epsilon_2,\epsilon_3,\epsilon_4))=Z(I_6(\epsilon_1,\epsilon_2,\epsilon_3,\epsilon_4)).
\end{equation}
\end{theorem}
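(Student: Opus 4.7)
The plan is to mirror the strategy used in the proof of Theorem \ref{Refl}, but with the color bookkeeping adjusted to reflect the cap weights of Figure \ref{ColorCapN}, under which a color passes through the cap unchanged (rather than being flipped to its opposite as in Section \ref{Sect.3}). First I would invoke conservation of colors for the colored stochastic $\Gamma$, $\Delta$, $\Gamma\Gamma$, $\Delta\Gamma$, and $\Delta\Delta$ vertices: inspecting Figures \ref{GammaN}--\ref{DeltaN} and \ref{RmatrixC1N}--\ref{RmatrixC3N} shows that in every admissible pattern the multiset of colors on the two input edges equals the multiset on the two output edges. Combined with the cap rule ``both legs carry the same color'', this forces each non-$c_0$ color to appear an even number of times among $\{\epsilon_1,\epsilon_2,\epsilon_3,\epsilon_4\}$ in either $I_5$ or $I_6$, and the color of every internal edge to lie in the set of boundary colors.

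Next I would note the key reduction: since at most two distinct non-$c_0$ colors can appear on the boundary (each with even multiplicity, out of four slots), any admissible configuration uses at most three distinct colors in total, which we may normalize to a subset of $\{c_0,c_1,c_2\}$. Because the Boltzmann weights of the three R-matrices in Figures \ref{RmatrixC1N}--\ref{RmatrixC3N} depend only on the relative order of the colors on the four adjacent edges (and not on the specific labels), verifying the identity on this three-color alphabet suffices to establish it for all $n$ and all choices of $\epsilon_1,\epsilon_2,\epsilon_3,\epsilon_4\in\{c_0,\ldots,c_n\}$.

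This leaves a finite check of at most $3^4=81$ boundary tuples. For each such tuple, both $Z(I_5(\epsilon_1,\epsilon_2,\epsilon_3,\epsilon_4))$ and $Z(I_6(\epsilon_1,\epsilon_2,\epsilon_3,\epsilon_4))$ expand into a modest sum of products of rational functions in $z_i,z_j,z_i',z_j',q$, which can be generated automatically by enumerating all admissible internal color labelings consistent with the fixed boundary. I would implement this in SAGE, matching the methodology already used to verify Theorems \ref{YBE1}, \ref{caduceus}, \ref{Refl}, and \ref{YBE2N}, and confirm equality as rational functions after clearing the common denominators $1-(q+1)z_j+qz_iz_j$, $1-z_i'z_j$, $q-(q+1)z_i'+z_i'z_j'$.

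The main obstacle I anticipate is not conceptual but bookkeeping: in contrast to the uncolored reflection step implicit in Section \ref{Sect.2.3}, several internal edges in $I_5$ and $I_6$ are free to take any of the available colors, so one must ensure that the enumeration of admissible states is exhaustive and that the symbolic simplifications do not accidentally divide by a factor that can vanish on a subvariety of parameters. Once the reduction to three colors is in place, however, the check becomes routine and the identity follows.
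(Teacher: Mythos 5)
Your proposal is correct and follows essentially the same route as the paper's own proof: color conservation plus the cap rule to bound the number of distinct colors, reduction to the alphabet $\{c_0,c_1,c_2\}$ via the fact that R-matrix weights depend only on the relative order of colors, and a finite symbolic verification of the at most $3^4$ boundary tuples in SAGE. No gaps.
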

\begin{proof}
From conservation of colors for the R-matrix and the cap weights, we can deduce that for any admissible state of $I_5$ or $I_6$, each color must appear for an even number of times in $\{\epsilon_1,\epsilon_2,\epsilon_3,\epsilon_4\}$, and that the color of an inner edge must be one of the colors of $\{\epsilon_1,\epsilon_2,\epsilon_3,\epsilon_4\}$. From this we can further deduce that at most two colors can appear in $\{\epsilon_1,\epsilon_2,\epsilon_3,\epsilon_4\}$ in any admissible state of $I_5$ or $I_6$. Moreover, we note that the Boltzmann weight for the R-matrix only depends on the relative order of colors on its four adjacent edges. Therefore it suffices to check the relation for three colors $\{c_0,c_1,c_2\}$. There are at most $3^4$ combinations of $(\epsilon_1,\epsilon_2,\epsilon_3,\epsilon_4)$ for this case. These identities have been checked using a SAGE program.
\end{proof}

Based on the Yang-Baxter equation (Theorem \ref{YBE2N}) and the reflection equation (Theorem \ref{Refl2N}), we can establish the following theorem on the recursive relations for the partition function $Z(\mathcal{U}_{n,L,\lambda,\sigma,\tau,z})$. The proof of Theorem \ref{Thm2C2} is similar to that of Theorem \ref{Thm2C} and we omit it.

\begin{theorem}\label{Thm2C2}
Assume that $1\leq i\leq n-1$ and $\sigma(i+1)>\sigma(i)$. Let $s_i$ be the transposition $(i,i+1)$ in the symmetric group $S_n$, and let $s_i z$ be the vector obtained from $z$ by interchanging $z_i,z_{i+1}$. Then the partition function of the new colored model satisfies the following recursive relation:
\begin{equation}
    Z(\mathcal{U}_{n,L,\lambda,\sigma s_i,\tau,z})
   = -A(q,z,i)  Z(\mathcal{U}_{n,L,\lambda,\sigma,\tau,z})+B(q,z,i)Z(\mathcal{U}_{n,L,\lambda,\sigma,\tau,s_i z}),
\end{equation}
where 
\begin{equation}
A(q,z,i) =  \frac{(1-z_{i+1})(1-qz_i)}{z_{i+1}-z_i},
\end{equation}
and
\begin{equation}
 B(q,z,i) = \frac{1-(q+1)z_i+q z_i z_{i+1}}{z_{i+1}-z_i}.
\end{equation}
\end{theorem}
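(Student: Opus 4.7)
The plan is to mirror the argument used in the proof of Theorem \ref{Thm2C}, adapting it to the new colored model. Since $\sigma(i+1)>\sigma(i)$ and both values are in $\{1,\ldots,n\}$ (so all color indices are positive), the prefactor $q^{1_{\sigma(i+1)>0}-1_{\sigma(i)>0}}$ that appeared in Theorem \ref{Thm2C} collapses to $1$, which is consistent with the statement here.

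First, attach to the left boundary of $\mathcal{U}_{n,L,\lambda,\sigma,\tau,s_i z}$, along the two rows carrying spectral parameters $z_i$ (on the $(2i-1)$st and $(2i)$th rows, which now carry parameters $z_{i+1}$ after the swap of $z_i \leftrightarrow z_{i+1}$), a pair of R-vertices: a colored stochastic $\Delta$--$\Delta$ vertex $S'$ with parameters $z_i,z_{i+1}$ and a colored stochastic $\Delta$--$\Gamma$ vertex $T'$ with parameters $z_i,z_{i+1}$, configured as in the proof of Theorem \ref{Thm2C}. The only admissible filling of the braid has its inner edges forced by color conservation, producing a scalar factor $R_{\Delta\Delta}(+,+,+,+;z_i,z_{i+1})\, R_{\Delta\Gamma}(+,c_{\sigma(i)},+,c_{\sigma(i)};z_i,z_{i+1})$ times $Z(\mathcal{U}_{n,L,\lambda,\sigma,\tau,s_i z})$.

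Next, apply Theorem \ref{YBE2N} repeatedly to push the braid across the lattice from the left boundary to just in front of the two caps connecting rows $2i-1,2i$ and $2i+1,2i+2$; this leaves the partition function unchanged. Then invoke the reflection equation (Theorem \ref{Refl2N}) to interchange the order of the two strands being reflected: after this step the braid lies on the other side of the caps, with $S$ now a colored stochastic $\Gamma$--$\Gamma$ vertex and $T$ a colored stochastic $\Delta$--$\Gamma$ vertex, both of spectral parameters $z_{i+1},z_i$. Finally push the new braid back to the left boundary via Theorem \ref{YBE2N} once more.

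At the left end, the two R-vertices no longer act on a pair of $+$ inputs but rather on the pair $c_{\sigma(i+1)}$ and $c_{\sigma(i)}$ (with $\sigma(i+1)>\sigma(i)$), so the braid can split into two admissible inner configurations: one in which the colors pass straight through (contributing $Z(\mathcal{U}_{n,L,\lambda,\sigma,\tau,z})$ with weight $R_{\Gamma\Gamma}(c_{\sigma(i)},c_{\sigma(i+1)},c_{\sigma(i+1)},c_{\sigma(i)};z_{i+1},z_i)\,R_{\Delta\Gamma}(+,c_{\sigma(i)},+,c_{\sigma(i)};z_{i+1},z_i)$), and one in which they are swapped (contributing $Z(\mathcal{U}_{n,L,\lambda,\sigma s_i,\tau,z})$ with weight $R_{\Gamma\Gamma}(c_{\sigma(i)},c_{\sigma(i+1)},c_{\sigma(i)},c_{\sigma(i+1)};z_{i+1},z_i)\,R_{\Delta\Gamma}(+,c_{\sigma(i+1)},+,c_{\sigma(i+1)};z_{i+1},z_i)$). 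Equating the two expressions for the partition function and substituting the explicit Boltzmann weights from Figures \ref{RmatrixC1N}--\ref{RmatrixC2N}, the ratios telescope to produce exactly the claimed coefficients $A(q,z,i)$ and $B(q,z,i)$.

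The main obstacle is purely bookkeeping: one must verify that the two admissible braid configurations on the left end really are the only ones (which follows from color conservation and the fact that the $(2i)$th row on its left boundary receives $c_{\sigma(i)}$ while the $(2i-1)$st row receives $+$, with the $\Delta$--$\Gamma$ vertices constraining the spins of the auxiliary strands), and that the algebra of the resulting ratio of R-matrix weights simplifies to $-A(q,z,i)$ and $B(q,z,i)$. Because the R-matrix entries in Figures \ref{RmatrixC1N}--\ref{RmatrixC2N} agree with those of Section \ref{Sect.3}, this simplification is identical to the one carried out in the proof of Theorem \ref{Thm2C}, with the single simplification that the prefactor on the left-hand side is $1$ rather than a power of $q$.
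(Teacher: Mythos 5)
Your proposal is correct and follows exactly the route the paper intends: the paper explicitly omits the proof of Theorem \ref{Thm2C2}, stating that it is similar to that of Theorem \ref{Thm2C}, and your argument reproduces that proof (attach the $S',T'$ braid, push right via Theorem \ref{YBE2N}, reflect via Theorem \ref{Refl2N}, push back, and expand the two admissible left-end braid configurations). Your observation that the prefactor $q^{1_{\sigma(i+1)>0}-1_{\sigma(i)>0}}$ degenerates to $1$ because all colors in $\{c_1,\dots,c_n\}$ are positive (so both $R_{\Delta\Gamma}$ factors are $b_1$-type rather than one being $b_2=q^{-1}b_1$) is precisely the right accounting for why the new recursion carries no power of $q$.
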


\newpage
\bibliographystyle{acm}
\bibliography{Symplectic.bib}

\end{document}